%%%%%%%% ICML 2024 EXAMPLE LATEX SUBMISSION FILE %%%%%%%%%%%%%%%%%

\documentclass{article}

% Recommended, but optional, packages for figures and better typesetting:
\usepackage{microtype}
\usepackage{graphicx}
\usepackage{subfigure}
\usepackage{booktabs} % for professional tables

% hyperref makes hyperlinks in the resulting PDF.
% If your build breaks (sometimes temporarily if a hyperlink spans a page)
% please comment out the following usepackage line and replace
% \usepackage{icml2024} with \usepackage[nohyperref]{icml2024} above.
\usepackage{hyperref}

% Attempt to make hyperref and algorithmic work together better:

% Use the following line for the initial blind version submitted for review:
\usepackage[accepted]{icml2024}

% If accepted, instead use the following line for the camera-ready submission:
% \usepackage[accepted]{icml2024}

% For theorems and such
\usepackage{amsmath}
\usepackage{amssymb}
\usepackage{mathtools}
\usepackage{amsthm}

% if you use cleveref..
\usepackage[capitalize,noabbrev]{cleveref}

%%%%%%%%%%%%%%%%%%%%%%%%%%%%%%%%
% THEOREMS
%%%%%%%%%%%%%%%%%%%%%%%%%%%%%%%%
\theoremstyle{plain}
\newtheorem{theorem}{Theorem}[section]

\newtheorem{lemma}[theorem]{Lemma}
\newtheorem{corollary}[theorem]{Corollary}
\theoremstyle{definition}

\newtheorem{assumption}[theorem]{Assumption}
\theoremstyle{remark}
\newtheorem{remark}[theorem]{Remark}

% Todonotes is useful during development; simply uncomment the next line
%    and comment out the line below the next line to turn off comments
%\usepackage[disable,textsize=tiny]{todonotes}
\usepackage[textsize=tiny]{todonotes}

% The \icmltitle you define below is probably too long as a header.
% Therefore, a short form for the running title is supplied here:
\icmltitlerunning{Orthogonal Bootstrapping}

%%%%%%%%%%%%%%%%%%%%%%%%%%%%%%%%%%%%%%%%%%%%%%%%%%%%%%%%%
\newenvironment{iproof}{\paragraph{Informal Proof:}}{\hfill$\square$}
\newcommand{\Variance}{\text{Var}}
\usepackage{multirow}
\usepackage{bm}
\newcommand{\bX}{\bm{X}}
\newcommand{\bx}{\bm{x}}
\newcommand{\by}{\bm{y}}
\usepackage{thm-restate}
%%%%%%%%%%%%%%%%%%%%%%%%%%%%%%%%%%%%%%%%%%%%%%%%%%%%%%%%%

\begin{document}

\twocolumn[
\icmltitle{Orthogonal Bootstrap: Efficient Simulation of Input Uncertainty}

% It is OKAY to include author information, even for blind
% submissions: the style file will automatically remove it for you
% unless you've provided the [accepted] option to the icml2024
% package.

% List of affiliations: The first argument should be a (short)
% identifier you will use later to specify author affiliations
% Academic affiliations should list Department, University, City, Region, Country
% Industry affiliations should list Company, City, Region, Country

% You can specify symbols, otherwise they are numbered in order.
% Ideally, you should not use this facility. Affiliations will be numbered
% in order of appearance and this is the preferred way.
%\icmlsetsymbol{equal}{*}

\begin{icmlauthorlist}
\icmlauthor{Kaizhao Liu}{yyy}
\icmlauthor{Jose Blanchet}{comp1}
\icmlauthor{Lexing Ying}{comp}
\icmlauthor{Yiping Lu}{sch1,sch2}
%\icmlauthor{}{sch}
%\icmlauthor{}{sch}
%\icmlauthor{}{sch}
\end{icmlauthorlist}

\icmlaffiliation{yyy}{Department of Mathematics, Peking University, Beijing, China}
\icmlaffiliation{comp1}{Department of Management Science and Engineering, Stanford University}
\icmlaffiliation{comp}{Department of Mathematics, Stanford University}
\icmlaffiliation{sch1}{Courant Institute of Mathematical Sciences, New York University}
\icmlaffiliation{sch2}{Department of Industrial Engineering and Management Sciences, Northwestern University}

\icmlcorrespondingauthor{Kaizhao Liu}{mrzt@stu.pku.edu.cn}
\icmlcorrespondingauthor{Yiping Lu}{yiping.lu@northwestern.edu}

% You may provide any keywords that you
% find helpful for describing your paper; these are used to populate
% the "keywords" metadata in the PDF but will not be shown in the document
\icmlkeywords{Machine Learning, ICML}

\vskip 0.3in
]

% this must go after the closing bracket ] following \twocolumn[ ...

% This command actually creates the footnote in the first column
% listing the affiliations and the copyright notice.
% The command takes one argument, which is text to display at the start of the footnote.
% The \icmlEqualContribution command is standard text for equal contribution.
% Remove it (just {}) if you do not need this facility.

%\printAffiliationsAndNotice{}  % leave blank if no need to mention equal contribution
\printAffiliationsAndNotice{} % otherwise use the standard text.

\begin{abstract}
Bootstrap is a popular methodology for simulating input uncertainty. However, it can be computationally expensive when the number of samples is large.
We propose a new approach called \textbf{Orthogonal Bootstrap} that reduces the number of required Monte Carlo replications. We decomposes the target being simulated into two parts: the \textit{non-orthogonal part} which has a closed-form result known as Infinitesimal Jackknife and the \textit{orthogonal part} which is easier to be simulated.  We theoretically and numerically show that Orthogonal Bootstrap significantly reduces the computational cost of Bootstrap while improving empirical accuracy and maintaining the same width of the constructed interval.

\end{abstract}

\section{Introduction}
The input uncertainty problem \cite{lam2018subsampling} manifests as the propagation of statistical noise from input models, typically derived and calibrated from data, into the subsequent output analysis. This noise can significantly impact the accuracy and reliability of the resulting conclusions, necessitating robust strategies for quantification and mitigation. Bootstrap \cite{stine1985bootstrap,efron1992bootstrap,efron1994introduction} is a non-parametric method that uses random resampling with replacement to estimate this uncertainty.  In this paper, we consider using Bootstrap to estimate the mean and variance of a function with input uncertainty, which has wide application in debiasing the functional estimation \cite{quenouille1949approximate,efron1982jackknife,adams1971asymptotic,cordeiro2014introduction,etter2020operator,jiao2020bias,koltchinskii2021estimation,koltchinskii2022estimation,zhou2021high,etter2021operator,ma2022correcting}, improving worst group generalization \cite{sagawa2020investigation,nguyen2022improved} and, most influential, constructing the confidence interval \cite{tukey1958bias,stine1985bootstrap,efron1992bootstrap,efron1994introduction}. 

Despite its benefits, Bootstrap is computationally demanding. Ideally, to simulate the Bootstrap estimation perfectly,  an infinite number of Monte Carlo replications of resamples is required. In practice, only a finite number of Monte Carlo replications can be actually performed, and this introduce an error in the Bootstrap estimation (that can be characterized by its variance),  which we call the \mbox{\textit{simulation error}}. Another source of error in the Bootstrap estimation arises from the randomness in the data. The former can be controlled by increasing the number of Monte Carlo replications, while the latter can not be controlled if the samples are given. To obtain a reasonable Bootstrap estimation, the number of Monte Carlo replications required should at least ensure that the simulation error is smaller than the error arising from the data.  In most applications, the number of Monte Carlo replications required scales up with the number of data points \cite{lam2018subsampling,lam2022subsampling}, where each replication can involve expensive optimization procedures to refit models, making it extremely challenging to scale Bootstrap resampling to large datasets. This drawback also occurs in bagging \cite{wager2014confidence}.

To address these issues, we propose an alternative method called \textbf{Orthogonal Bootstrap}. We use semiparametric techniques to reduce the number of required Monte Carlo replications for stochastic simulation under input uncertainty.  Our method is inspired by recently proposed double/orthogonal machine learning \cite{foster2019orthogonal,chernozhukov2018double,chernozhukov2022automatic,chernozhukov2022locally} which utilize the influence function \cite{cook1982residuals,efron1992jackknife}
to debias parameter estimates in the presence of nuisance parameters. Suppose we want to estimate the uncertainty of the output using Bootstrap. We can regard the Bootstrap method as a two-stage estimator. In the first stage, we generate resample distributions by resampling the original data with replacement and calculate the output for each resampled distribution. In the second stage, we simply calculate the variance of the outputs from the first stage. We consider the first stage to be ``nuisance" because our primary interest is in the final uncertainty, and we do not need to know the simulation output for each resampled distribution. Motivated by this, we show in this paper that we can reduce the simulation error of Bootstrap resampling by dealing with the non-orthogonal and orthogonal parts separately. The non-orthogonal part enjoys a closed form result using the influence function \cite{rousseeuw2011robust,cook1980characterizations,koh2017understanding}, also known as Infinitesimal Jackknife (IJ) \cite{jaeckel1972infinitesimal,giordano2019swiss,giordano2019higher,lu2020uncertainty,alaa2020discriminative,abad2022approximating}. For modern machine learning, the influence function can be calculated efficiently using implicit Hessian-vector products and is much faster than retraining the model \cite{cook1980characterizations,koh2017understanding,giordano2019swiss}.  Note that when only calculating the non-orthogonal part, \emph{i.e.} using the IJ method, the final variance estimate tends to be conservative  \cite{efron1981jackknife,efron1992jackknife}. In Orthogonal Bootstrap, we further simulate the orthogonal part to correct the IJ estimation. Thus, our method provides the same result as the Bootstrap method in contrast to the biased Jackknife estimator. This enables our method to enjoy the accuracy and effectiveness of the Bootstrap method such as higher order coverage for confidence interval construction  \cite{hall1986bootstrap,efron1992jackknife,hall2013bootstrap} while enjoying similar computational cost as the IJ method. We also remark here that, interestingly, our method can also be understood as using IJ as a control variate in the original Bootstrap method.

\subsection{Related Work}
Our paper uses a similar but not identical setting as in \cite{lam2022subsampling} for the input uncertainty problem. The authors also investigated the simulation effort required by performing Bootstrap. They showed how the total required simulation effort can be reduced from an order bigger than the data size in the conventional approach to an order independent of the data size via subsampling. However, their setting involves two layer of simulation in Bootstrap, while we consider simple plug-in estimator so we consider only one layer of simulation. Interestingly, although their subsampling techniques do not involve influence functions, they leveraged influence functions when proving theoretical results.

A relevant baseline for our paper is the recently proposed Cheap Bootstrap \cite{lam2022cheap}. The method also aims to maintain desirable statistical guarantees of confidence interval coverage with minimal resampling effort as low as one Monte Carlo replication. However, Cheap Bootstrap enlarges the confidence interval $t_{B,1-\alpha/2}/z_{1-\alpha/2}>1$ times, where $t_{B,1-\alpha/2}$ is the $(1-\alpha/2)-$quantile of $t_B$, the student $t-$distribution with degree of freedom $B$ where $B$ is the number of Monte Carlo replications, and $z_{1-\alpha/2}$ is the $(1-\alpha/2)-$quantile of standard normal. When the number of Monte Carlo replications is constrained, the Cheap Bootstrap will provide a confidence interval with a huge width mean.  By contrast, our proposed Orthogonal Bootstrap provides a confidence interval with the same expected width mean as the Standard Bootstrap, even when the number of Monte Carlo replications is constrained. At the same time, the Orthogonal Bootstrap can be used beyond confidence interval construction. It can be used for all settings when the Bootstrap technique is applied, for example, bias reduction for functional estimation \cite{quenouille1949approximate,efron1982jackknife,efron1992bootstrap,jiao2020bias,koltchinskii2021estimation,ma2022correcting}.

Another relevant paper to ours is \cite{kline2012score}, which proposed a higher order approximation (up to order $O(n^{-1})$) to the Wild Bootstrap  \cite{wu1986jackknife,liu1988bootstrap} . However, \cite{kline2012score}'s methodology can only be used to construct confidence intervals for \emph{linear estimators} while orthogonal bootstrap works without assuming any structure of the output functional.  Recently, \cite{zhou2021high} proposed Higher-Order Statistical Expansion (HODSE) based on the closed-form representation of the Jackknife estimator of ideal degenerate expansion of the target statistical functional. However, these types of constructions can only be used for certain statistical models, such as smooth function-of-mean model or linear regression. In contrast, our method functions as a general procedure for Bootstrap methods that doesn't depend on either the linear structure of the estimator or the smooth function-of-mean model. We achieve this by considering the Taylor expansion over the input distribution as a control variate for the Bootstrap estimator.

\subsection{Contribution}
Our contributions are summarized as follows:
\begin{itemize}
\setlength{\itemsep}{0pt}
\setlength{\parsep}{0pt}
    \item We propose Orthogonal Bootstrap, a brand new Bootstrap method that provides the same expected simulation result as the original Standard Bootstrap but with reduced simulation effort via separately treating the non-orthogonal part and orthogonal part. 
    \item Theoretically, we show that the Orthogonal Bootstrap can provably reduce the simulation error so that the required number of Monte Carlo replications required decreased from $\Omega(n)$ to $O(1)$, under the mild assumption that the performance measure has a continuous Fr\'echet derivative under the Kernel Maximum Mean Discrepancy (MMD) distance. As far as the authors know, we are the first paper to link the Bootstrap simulation error with differentiability in Kernel MMD. 
    \item Empirically, we show that Orthogonal Bootstrap can significantly improve the result on both simulated and real datasets when the number of Monte Carlo replications is limited.
\end{itemize}

\subsection{Organization of the Paper}
We organize our paper as follows. In Section \ref{section:debiasing}, we demonstrate how Orthogonal Bootstrap can be used to reduce the bias of a statistical estimator. In Section \ref{section:ci}, we demonstrate how Orthogonal Bootstrap can be used to quantify the uncertainty of a statistical estimator. In Section \ref{section:numerical}, we present some numerical examples on both simulated and real datasets.

\subsection{Notation}
Throughout the paper, we adopt the following notation conventions \cite{lam2022subsampling}. The symbol $n$ represents the sample size. The notation $F$ refers to a distribution, $\hat{F}$ stands for the empirical distribution created through sampling from $F$, and $\hat{F}^b$ signifies the empirical distribution created through bootstrap resampling from $\hat{F}$. In particular, superscripts are used to distinguish bootstrapped distributions and subscripts to denote distinct input distributions; for instance, $\hat{F}^b_i$ denotes the empirical distribution constructed from the $b$-th bootstrap resample of the $i$-th empirical distribution $\hat{F}_i$.
For a random variable $X_{i,j}$ with double subscripts, the first subscript denotes distinct input distributions, and the second subscript denotes diverse samples drawn from the $i$-th input distribution. We use $\mathbb{E}_*$ and $\Variance_*$ to denote the expectation and variance over the bootstrap resamples from the data, conditional on the original data $\hat{F}$. That is, $\mathbb{E}_*$ and $\Variance_*$ only accounts for the randomness in simulation, \emph{i.e.} the simulation error. We use $\mathbb{E}$ and $\Variance$ to denote the expectation and variance over the data distribution $F$. We also use the standard $O_p$ notations: $\Theta_p(\cdot)$, $O_p(\cdot)$, $\Omega_p(\cdot)$ in probability statement about the data distribution $F$, to only hide constants that do not depend on $n$.

\section{Debiasing via Orthogonal Bootstrap}
\label{section:debiasing}

In this section, we study the problem of estimating function/functional values when uncertainty exists on the input value \cite{quenouille1949approximate,efron1982jackknife,efron1992bootstrap,jiao2020bias,koltchinskii2021estimation,koltchinskii2022estimation,zhou2021high,etter2020operator,etter2021operator,ma2022correcting}. 
 We first formulate the problem of simulating with input uncertainty \cite{song2014advanced,lam2018subsampling}, then describe the motivation of our Orthogonal Bootstrap method. 
 
Suppose one aim to estimate a generic performance measure $\phi(F_1,\cdots,F_m)$ depend on $m$ independent input distributions $F_1,\cdots,F_m$. We consider the setting when the input distribution is unknown and only $n_i$ i.i.d. generated data $\{X_{i,1},\cdots, X_{i,n_i}\}$ from distribution $F_i$ is available, forming the empirical distributions $\hat F_i:=\sum_{j=1}^{n_i}\delta_{X_{i,j}}(x)/n_i$. The Bootstrap \cite{efron1994introduction} methods simulates $B\in\mathbb{Z}^+$ Monte Carlo replications. In each Monte Carlo replication, resampling with replacement is performed independently and uniformly from $\{X_{i,1},\cdots,X_{i,n_i}\}$, repeated $n_i$ times. This process yields sets $\{X_{i,1}^b,\cdots,X_{i,n_i}^b\} $ for each replications $b=1,\cdots,B$. Then according to the bootstrap principle, we can estimate $\mathbb{E}\phi(\hat{F}_1,\cdots,\hat{F}_m)$ by the bootstrap mean
$$
\mathbb{E}_* \phi(\hat F_{1}^{b},\cdots,\hat F_{m}^{b})=\lim_{B\rightarrow \infty}\frac{1}{B}\sum_{i=1}^B \phi(\hat F_{1}^{b},\cdots,\hat F_{m}^{b}),
$$where we recall from the notation section that $\mathbb E_*$ is the expectation with respect to resampling conditional on the original data and $\hat F_{i}^{b}=\sum_{j=1}^{n_i}\delta_{X_{i,j}^b}(x)/n_i$ is the resampled empirical distribution.

 The simulated bootstrap mean can be used for debiasing the naive estimation of function/functional values when uncertainty exists on the input value \cite{quenouille1949approximate,efron1982jackknife,efron1992bootstrap,jiao2020bias,koltchinskii2021estimation,koltchinskii2022estimation,zhou2021high,etter2020operator,etter2021operator,ma2022correcting}. The debiasing procedure appears in many applications, including inverse a noisy elliptic system \cite{etter2020operator,etter2021operator}, optimal stopping \cite{zhou2021unbiased}, online learning \cite{chen2022debiasing}
 and stochastic optimization \cite{ma2022correcting,li2020debiasing}. \cite{ma2022correcting} showed that the estimation can be improved from the naive estimator with $\Omega(n)$ times of bootstrap simulation. Bootstrap function/functional estimates in real applications is computationally expensive, as it requires retraining a machine learning model. Therefore, bootstrapping a large number of times (proportional to the number of data points) is impractical, and Bootstrap a limited number of times can lead to high variance and bad performance \cite{lam2018subsampling}.

In this paper, we propose methods to reduce resampling effort while maintaining the expected accuracy. To achieve this, suppose $\nabla \phi$ is the von Mises derivative (defined in Section \ref{appendix: von mises} in the appendix) of the statistical functional $\phi$. Then we decompose the bootstrap simulation target into the non-orthogonal part and the orthogonal part
\begin{equation*}\tiny
    \begin{aligned}
   & \mathbb{E}_* \phi(\hat F_{1}^{b},\cdots,\hat F_{m}^{b})= \underbrace{\mathbb{E}_* \nabla \phi(\hat F_1,\cdots,\hat F_m)\left[\hat F_{1}^{b}-\hat F_1,\cdots,\hat F_{m}^{b}-\hat F_m\right]}_{\text{=0}}\\&+ \mathbb{E}_*\left(\phi(\hat F_{1}^{b},\cdots,\hat F_{m}^{b})-\nabla \phi(\hat F_1,\cdots,\hat F_m)\left[\hat F_{1}^{b}-\hat F_1,\cdots,\hat F_{m}^{b}-\hat F_m\right]\right)\\
&=\mathbb{E}_*\left(\phi(\hat F_{1}^{b},\cdots,\hat F_{m}^{b})-\underbrace{\nabla \phi(\hat F_1,\cdots,\hat F_m)\left[\hat F_{1}^{b}-\hat F_1,\cdots,\hat F_{m}^{b}-\hat F_m\right]}_{\text{Control Variate}}\right)
    \end{aligned}
\end{equation*}\newline 
To compute $\nabla \phi(\hat F_1,\cdots,\hat F_m)[\hat F_{1}^{b}-\hat F_1,\cdots,\hat F_{m}^{b}-\hat F_m]$, we use the influence function technique \cite{cook1980characterizations,koh2017understanding,giordano2019higher}. Noting that typically $\nabla \phi(\hat F_1,\cdots,\hat F_m)[\tilde{F}_{1}-\hat F_{1},\cdots,\tilde{F}_{m}-\hat F_m]$ is a linear functional in $(\tilde{F}_{1},\cdots,\tilde{F}_{m})$, and the Riesz representation theorem implies the existence of a mean zero, finite variance functions $(\mathcal{I}_1(X_1),\cdots, \mathcal{I}_m(X_m))$ such that $\nabla \phi(\hat F_1,\cdots,\hat F_m)[\tilde{F}_{1}-\hat F_{1},\cdots,\tilde{F}_{m}-\hat F_m]=\sum_{i=1}^m \mathbb{E}_{X_i\sim \tilde{F}_i}\mathcal{I}_i(X_i)$ \cite{cook1980characterizations,serfling2009approximation,van2000asymptotic}. As the influence function (the non-orthogonal part) is mean zero, \emph{i.e.} $\mathbb{E}_* \nabla \phi(\hat F_1,\cdots,\hat F_m)[\hat F_{1}^{b}-\hat F_1,\cdots,\hat F_{m}^{b}-\hat F_m]=0$, it doesn't need to be simulated. We can also understand the non-orthogonal part as a control variate \cite{asmussen2007stochastic} to reduce the simulation error of bootstrap simulation. The whole idea is summarized in Figure \ref{fig:idea}.

Applying the above idea to bias correction, we can simulate the bias 
$\phi(F_1,\cdots,F_m)-\mathbb{E}\phi(\hat{F}_1,\cdots,\hat{F}_m)$ by 
\[\phi(\hat{F}_1,\cdots,\hat{F}_m)-\frac{1}{B}\sum_{b=1}^B(\hat{\phi}^b-\sum_{i=1}^m\frac{1}{n_i}\sum_{j=1}^{n_i}\mathcal{I}_i^\phi(x_{i,j}^b))\]

The whole procedure is summarized in Algorithm \ref{alg:cap}.

\begin{algorithm}
\caption{Debiasing via Orthogonal Bootstrap}\label{alg:cap}
 %\hspace*{\algorithmicindent} 
 \textbf{Input}: A generic performance measure $\phi(F_1,\cdots,F_m)$, i.i.d samples  $\{X_{i,1},\cdots,X_{i,n_i}\}\in\mathbb{R}^{d_i}$ of $F_i$, and influence function $\mathcal{I}_i^\phi$ of $\phi$ respect to $\hat{F}_i$  \newline
 %\hspace*{\algorithmicindent} 
 \textbf{Output}: Estimation of {\small$\phi(F_1,\cdots,F_m)$} 
 
\begin{algorithmic}
\STATE $\hat \phi\leftarrow \phi(\hat F_1,\cdots,\hat F_m)$, where $\hat F_i=\frac{1}{n_i}\sum_{j=1}^{n_i}\delta_{X_{i,j}}$
\FOR{b=1:B}
\FOR{i=1:m}
\STATE Sample $\{x_{i,1}^b,\cdots,x_{i,n_i}^b\}$ i.i.d from $\hat F_i$
\ENDFOR
\STATE $\hat\phi^b\leftarrow\phi(\hat F_1^b,\cdots,\hat F_m^b)$, where $\hat{F}_i^b=\frac{1}{n_i}\sum_{j=1}^{n_i} \delta_{x_{i,j}^b}$
\ENDFOR
\STATE Estimate $\phi(F_1,\cdots,F_m)$ by 
\begin{equation}\label{eq:debiasestimator}
    2\hat \phi-\frac{1}{B}\sum_{b=1}^B \left(\hat\phi^b-\sum_{i=1}^m\frac{1}{n_i}\sum_{j=1}^{n_i}\mathcal{I}_i^\phi(x_{i,j}^b)\right)
\end{equation}

%\State $N \gets n$
%\While{$N \neq 0$}
%\If{$N$ is even}
%    \State $X \gets X \times X$
%    \State $N \gets \frac{N}{2}$  \Comment{This is a comment}
%\ElsIf{$N$ is odd}
%    \State $y \gets y \times X$
%    \State $N \gets N - 1$
%\EndIf
%\EndWhile
\end{algorithmic}
\end{algorithm}

% \footnote{test.}

\subsection{Provable Improvement of Orthogonal Bootstrap}

In this section, we show our Orthogonal Bootstrap method can provably reduce the required Monte Carlo replications under mild assumption. For simplicity, we assume $n_1=n_2=\cdots=n_m=n$ in our theoretical investigation. We assume our simulation functional has a continuous Fr\'echet gradient in tht under the Kernel Maximum Mean Discrepancy (MMD) distance \cite{muandet2017kernel}. For background in kernel mean embeddings, we refer to Section \ref{appendix: rkhs} in the appendix. We first rewrite the simulation functional in terms of the kernel mean embeddings, \emph{i.e.}
\begin{equation}\label{equation: rkhs embed}
     \phi(F_1,\cdots,F_m)=h(\mu_1(F_1),\cdots,\mu_m(F_m))
\end{equation}
where $\mu_i$ are kernel mean embeddings using kernel $k_i$ \cite{muandet2017kernel}, $\mathcal{H}:=\mathcal{H}_1\times\cdots\times\mathcal{H}_m$ where $\mathcal{H}_i$ is the reproducing kernel Hilbert space respect to the kernel $k_i$ and $h:\mathcal{H}\to\mathbb{R}$ is a functional on $\mathcal{H}$. Denote $\mu:=\mu_1\times\cdots\mu_m$ and $F:=F_1\times\cdots F_m$ for simplicity.

\begin{restatable}{assumption}{mainassumptionone}\label{assumption: rkhs embed}
   There exists kernel mean embeddings $\mu_i:\mathcal{F}_i\to\mathcal{H}_i$ which maps $F_i$ into $\mathcal{H}_i$ for $i=1,\cdots,m$. 
    Moreover, for all $i=1,\cdots,m$, $\mathbb{E}k_i(X_i,X_i)^4<\infty$ and $\mathbb{E}k_i(X_i,Y_i)^4<\infty$ where $X_i,Y_i$ are independent samples from $F_i$.
\end{restatable}
For a functional on $\mathcal{H}$, we say that it is of class $C^1$ if its Fr\'echet derivative exists and is continuous.

\begin{restatable}{assumption}{mainassumptiontwo}\label{assumption: lip in rkhs}
The non-constant functional $h:\mathcal{H}_1\times\cdots\times\mathcal{H}_m\to\mathbb{R}$ is of class $C^1$ and its derivative is Lipschitz in the sense that 
    $|Dh(x_1)(v)-Dh(x_2)(v)|\le L\|x_1-x_2\|_{\mathcal{H}}\|v\|_{\mathcal{H}}$. Moreover, $\|\partial_ih(\mu(F))^4\|_{\mathcal{H}_i}<\infty$.
\end{restatable}

Under the above two assumption on the performance measure \ref{equation: rkhs embed}, we have

\begin{theorem}
     Let $X_{ob}$ be the Orthogonal Bootstrap estimator defined in Equation \eqref{eq:debiasestimator} and $X_{sb}$ be the Standard Bootstrap estimator defined by $X_{sb}:=2\hat \phi-\frac{1}{B}\sum_{b=1}^B \hat\phi^b$. Under Assumption \ref{assumption: rkhs embed} and Assumption \ref{assumption: lip in rkhs}, if the number of Monte Carlo replications $B\geq Cn^\alpha$ for some absolute constant $C>0$ and $\alpha\geq 0$, then the simulation error for the Orthogonal Bootstrap estimator satisfies $\Variance_*(X_{ob})=O_p(\frac{1}{n^{2+\alpha}})$ and the simulation error for the Standard Bootstrap estimator satisfies $\Variance_*(X_{sb})=\Theta_p(\frac{1}{n^{1+\alpha}})$. 
\end{theorem}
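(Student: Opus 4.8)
The plan is to exploit that, conditionally on the data, the $B$ replications are i.i.d., so both simulation variances reduce to a single--replication variance divided by $B$. Write $\Delta^b_i:=\mu_i(\hat F_i^b)-\mu_i(\hat F_i)$ and $\Delta^b:=\mu(\hat F^b)-\mu(\hat F)$. The first thing I would record is the exact algebraic identity $\sum_{i=1}^m\frac1n\sum_{j=1}^{n}\mathcal I_i^\phi(x^b_{i,j})=\nabla\phi(\hat F)[\hat F_1^b-\hat F_1,\dots,\hat F_m^b-\hat F_m]=Dh(\mu(\hat F))[\Delta^b]$: the first equality is the Riesz representation recalled before the statement, the second is the chain rule together with the linearity of the kernel mean embedding in the measure. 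Substituting this into \eqref{eq:debiasestimator} and collapsing the telescoping $2\hat\phi-\hat\phi$ gives $X_{ob}=\hat\phi-\frac1B\sum_{b=1}^B R^b$, where $R^b:=h(\mu(\hat F^b))-h(\mu(\hat F))-Dh(\mu(\hat F))[\Delta^b]$ is the second--order Taylor remainder of $h$ at $\mu(\hat F)$. Since $\hat\phi$ is deterministic given the data and the $R^b$ (resp.\ the $\hat\phi^b$) are i.i.d.\ under $\mathbb E_*$, this yields $\Variance_*(X_{ob})=\frac1B\Variance_*(R^1)$ and $\Variance_*(X_{sb})=\frac1B\Variance_*(\hat\phi^1)$, so everything comes down to estimating these two one--replication variances.

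The second step is a moment bound on the bootstrap MMD fluctuation. For fixed $i$, $\Delta^b_i=\frac1n\sum_{j=1}^n Z_j$ with $Z_j:=k_i(\cdot,x^b_{i,j})-\mu_i(\hat F_i)\in\mathcal H_i$ i.i.d.\ and mean zero under $\mathbb E_*$, and $\|Z_j\|_{\mathcal H_i}^2\le 2k_i(x^b_{i,j},x^b_{i,j})+2\|\mu_i(\hat F_i)\|_{\mathcal H_i}^2$ by the reproducing property. A Hilbert--space Marcinkiewicz--Zygmund/Rosenthal inequality then gives $\mathbb E_*\|\Delta^b_i\|_{\mathcal H_i}^4\le \frac{C}{n^2}\big(\mathbb E_*\|Z_1\|^4+(\mathbb E_*\|Z_1\|^2)^2\big)$, and by the strong law of large numbers the data--dependent quantities $\frac1n\sum_j k_i(X_{i,j},X_{i,j})^2$ and $\|\mu_i(\hat F_i)\|_{\mathcal H_i}^2=\frac1{n^2}\sum_{j,\ell}k_i(X_{i,j},X_{i,\ell})$ are $O_p(1)$ under Assumption \ref{assumption: rkhs embed}; this is exactly where the fourth--moment hypotheses enter. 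Hence $\mathbb E_*\|\Delta^b_i\|_{\mathcal H_i}^4=O_p(n^{-2})$ and, by convexity, $\mathbb E_*\|\Delta^b\|_{\mathcal H}^4\le m\sum_i\mathbb E_*\|\Delta^b_i\|_{\mathcal H_i}^4=O_p(n^{-2})$; the same computation gives $\mathbb E_*\|\Delta^b\|_{\mathcal H}^2=O_p(n^{-1})$.

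For the Orthogonal Bootstrap I would use Assumption \ref{assumption: lip in rkhs}: Lipschitzness of $Dh$ yields the quadratic remainder bound $|R^b|\le \frac L2\|\Delta^b\|_{\mathcal H}^2$ via the integral form of Taylor's theorem, so $\Variance_*(R^1)\le \mathbb E_*(R^1)^2\le \frac{L^2}{4}\mathbb E_*\|\Delta^1\|_{\mathcal H}^4=O_p(n^{-2})$, and dividing by $B\ge Cn^\alpha$ gives $\Variance_*(X_{ob})=O_p(n^{-2-\alpha})$. For the Standard Bootstrap, Taylor expansion gives $\hat\phi^1-\hat\phi=\sum_{i=1}^m\langle\partial_ih(\mu(\hat F)),\Delta^1_i\rangle_{\mathcal H_i}+R^1$; the linear term is, for each $i$, $\frac1n\sum_j$ of i.i.d.\ mean--zero (under $\mathbb E_*$) scalars, and the blocks are independent across $i$, so its $\Variance_*$ equals $\frac1n\sum_{i=1}^m\Variance_{X\sim\hat F_i}\langle\partial_ih(\mu(\hat F)),k_i(\cdot,X)\rangle_{\mathcal H_i}$, which is $\Theta_p(n^{-1})$, while $\Variance_*(R^1)=O_p(n^{-2})$; the $L^2$ triangle inequality for centered random variables upgrades this to $\Variance_*(\hat\phi^1)=\Theta_p(n^{-1})$, and dividing by $B$ closes the argument.

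The crux is the lower bound, i.e.\ obtaining $\Theta_p$ rather than merely $O_p$ for $X_{sb}$. It requires two non-degeneracy facts, which is where the ``non-constant $h$'' part of Assumption \ref{assumption: lip in rkhs} does its work: that $Dh(\mu(F))\neq 0$, and that for some $i$ the function $x\mapsto\langle\partial_ih(\mu(F)),k_i(\cdot,x)\rangle$ is not $F_i$--a.s.\ constant, so that $\sum_i\Variance_{X\sim F_i}\langle\partial_ih(\mu(F)),k_i(\cdot,X)\rangle>0$; continuity of the Fr\'echet derivative together with $\mu(\hat F)\to\mu(F)$ in $\mathcal H$ (a Hilbert--space law of large numbers) then transfers this strictly positive population variance to $\Variance_{X\sim\hat F_i}(\cdot)$ with probability tending to one. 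A minor caveat is that the displayed hypothesis $B\ge Cn^\alpha$ only delivers the upper half of $\Theta_p(n^{-1-\alpha})$ for $X_{sb}$; the matching lower bound additionally needs $B=O(n^\alpha)$, so the statement should be read with $B\asymp n^\alpha$. Apart from this non-degeneracy step, the rest of the proof is routine fourth--moment bookkeeping.
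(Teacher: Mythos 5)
Your proposal is correct and follows essentially the same route as the paper, just organized slightly differently: the paper factors the argument through two abstract smoothness assumptions (Assumptions~\ref{assumption:true smoothness} and \ref{assumption:smoothness}) which it then verifies under the RKHS hypotheses in Theorem~\ref{thm: influence exist in rkhs}, whereas you work directly with the kernel mean embedding from the start. In both cases the core steps are identical: the control--variate identity collapses $X_{ob}$ to $\hat\phi-\frac1B\sum_b R^b$ with $R^b$ the first--order Taylor remainder of $h$ at $\mu(\hat F)$, the Lipschitz derivative gives $|R^b|\le\frac L2\|\Delta^b\|^2_{\mathcal H}$, and a fourth--moment bound on the bootstrap MMD fluctuation yields $\mathbb E_*\|\Delta^b\|^4_{\mathcal H}=O_p(n^{-2})$. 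You obtain that moment bound from a Hilbert--space Rosenthal/Marcinkiewicz--Zygmund inequality, while the paper uses Serfling's V--statistics moment lemma (Lemma~\ref{lemma:four}); these are interchangeable. Your lower bound for $X_{sb}$ — isolating the linear term $Dh(\mu(\hat F))[\Delta^1]$ and invoking non--degeneracy of $\phi_i$ together with an $L^2$ triangle inequality — matches Corollary~\ref{cor:nonortho order} plus the Cauchy--Schwarz step in Theorem~\ref{thm: debias sb}. You are also right that the hypothesis $B\ge Cn^\alpha$ alone only gives $\Variance_*(X_{sb})=O_p(n^{-1-\alpha})$; the matching $\Omega_p$ bound needs $B\asymp n^\alpha$, a looseness in the statement that the paper does not flag. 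One caveat shared with the paper: the claim that ``$h$ non--constant'' forces $\sum_i\Variance_{X\sim F_i}\langle\partial_i h(\mu(F)),k_i(\cdot,X)\rangle>0$ is not automatic (e.g.\ $\mu(F)$ could be a critical point of $h$); both you and the paper glide over this, so it is not a gap you introduced.
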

The theorem is proved by combining Theorem \ref{thm: debias} and Theorem \ref{thm: debias sb} with Theorem \ref{thm: influence exist in rkhs} in the appendix. 
Here we provide an informal proof to illustrate our idea. 

\begin{iproof} The complexity of bootstrap simulation is related to the variance of the following random variable 
{\tiny\begin{equation*}
    \xi=\phi(\hat F_{1}^{b},\cdots,\hat F_{m}^{b})-{\nabla \phi(\hat F_1,\cdots,\hat F_m)\left[\hat F_{1}^{b}-\hat F_1,\cdots,\hat F_{m}^{b}-\hat F_m\right]}.
\end{equation*}}
The random variable is at the scale of $O_p(\hat F_{1}^{b}-\hat F_1,\cdots,\hat F_{m}^{b}-\hat F_m)^2$, \emph{i.e.} $O_p(n^{-1})$. Thus its variance is at scale $O_p(n^{-2})$.  Thus to achieve  simulation error of order $O_p(n^{-\alpha})$, one needs $O(n^{\alpha-2})$ Monte Carlo replications. 
\end{iproof}
\begin{remark}
    To debias a functional estimation, one typically needs the simulation error to be $O_p(n^{-2})$ (see Theorem 2 and its proof in \cite{ma2022correcting}). According to our theorem, Standard Bootstrap needs $\Omega(n)$ Monte Carlo replications but Orthogonal Bootstrap only needs $O(1)$ Monte Carlo replications.
\end{remark}

\section{Variance Estimation via Orthogonal Bootstrap}
\label{section:ci}

In this section, we discuss how the idea of Orthogonal Bootstrap can be used to accelerate bootstrap simulation for estimating the variance, which provides a versatile nonparametric method for constructing confidence intervals and prediction intervals without detailed model knowledge. Following the same setting of the previous section, suppose one aims to estimate a generic performance measure $\phi(F_1,\cdots,F_m)$. To quantify the uncertainty of the plug-in estimator $\phi(\hat F_1,\cdots,\hat F_m)$, one needs to know its variance, which can be simulated by $\Variance_*(\phi(\hat F_{1}^{b},\cdots,\hat F_{m}^{b}))$ according to the bootstrap principle (recall that $\Variance$ without subscript denotes the simulation variance conditioned on the original data). Similar to the previous section, we decompose the variance into the variance of the non-orthogonal part, the variance of the orthogonal part, and their cross-covariance as in Equation \eqref{eq:vardecompose} as follows. 
\begin{equation}\label{eq:vardecompose}
\tiny
    \begin{aligned}
        &\Variance_*\left(\phi(\hat F_{1}^{b},\cdots,\hat F_{m}^{b})\right)=\underbrace{\Variance_* \left(\nabla \phi(\hat F_1,\cdots,\hat F_m)\left[\hat F_{1}^{b}-\hat F_1,\cdots,\hat F_{m}^{b}-\hat F_m\right]\right)}_{\text{closed form}}\\&+\Variance_*\left(\phi(\hat F_{1}^{b},\cdots,\hat F_{m}^{b})-\nabla \phi(\hat F_1,\cdots,\hat F_m)\left[\hat F_{1}^{b}-\hat F_1,\cdots,\hat F_{m}^{b}-\hat F_m\right]\right)\\&+2\text{Cov}_*\Big(\nabla \phi(\hat F_1,\cdots,\hat F_m)\left[\hat F_{1}^{b}-\hat F_1,\cdots,\hat F_{m}^{b}-\hat F_m\right],\\&\quad\phi(\hat F_{1}^{b},\cdots,\hat F_{m}^{b})-\nabla \phi(\hat F_1,\cdots,\hat F_m)\left[\hat F_{1}^{b}-\hat F_1,\cdots,\hat F_{m}^{b}-\hat F_m\right]\Big)\\
    \end{aligned}
\end{equation}
The variance of the non-orthogonal part enjoys a closed-form representation using the influence function as follows.

\begin{lemma}\label{lem: influence var} 
{\scriptsize $\Variance_* \left(\nabla \phi(\hat F_1,\cdots,\hat F_m)\left[\hat F_{1}^{b}-\hat F_1,\cdots,\hat F_{m}^{b}-\hat F_m\right]\right)\\=
        \sum_{i=1}^m\frac{1}{n_i^2}\sum_{j=1}^{n_i}(\mathcal{I}_i^\phi (X_{i,j}))^2$}
\end{lemma}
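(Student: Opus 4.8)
The plan is to reduce the statement to a one-line variance identity for an average of i.i.d.\ draws from the empirical measure. First I would use the Riesz-type representation recalled just above the lemma: the von Mises derivative $\nabla\phi(\hat F_1,\dots,\hat F_m)[\,\cdot\,]$ is linear and, evaluated at a perturbation $(\tilde F_1-\hat F_1,\dots,\tilde F_m-\hat F_m)$, equals $\sum_{i=1}^m\mathbb{E}_{X_i\sim\tilde F_i}\mathcal{I}_i^\phi(X_i)$, where each $\mathcal{I}_i^\phi$ is the influence function of $\phi$ with respect to $\hat F_i$ and is therefore centered, $\frac1{n_i}\sum_{j=1}^{n_i}\mathcal{I}_i^\phi(X_{i,j})=\mathbb{E}_{X_i\sim\hat F_i}\mathcal{I}_i^\phi(X_i)=0$ (existence and square-integrability of $\mathcal{I}_i^\phi$ under Assumptions~\ref{assumption: rkhs embed}--\ref{assumption: lip in rkhs} being the content of Theorem~\ref{thm: influence exist in rkhs}). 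Taking $\tilde F_i=\hat F_i^b=\frac1{n_i}\sum_{j=1}^{n_i}\delta_{x_{i,j}^b}$ gives
\[
\nabla\phi(\hat F_1,\dots,\hat F_m)\!\left[\hat F_1^b-\hat F_1,\dots,\hat F_m^b-\hat F_m\right]=\sum_{i=1}^m\frac1{n_i}\sum_{j=1}^{n_i}\mathcal{I}_i^\phi(x_{i,j}^b),
\]
so the quantity whose $\Variance_*$ we must compute is this double sum over the bootstrap resamples.

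Next I would exploit the independence built into the bootstrap. The resampled points $\{x_{i,j}^b\}_{j=1}^{n_i}$ for distinct $i$ arise from independent resampling of independent data sets, so $\Variance_*$ of the double sum equals $\sum_{i=1}^m\Variance_*\!\big(\frac1{n_i}\sum_{j=1}^{n_i}\mathcal{I}_i^\phi(x_{i,j}^b)\big)$. Within a fixed $i$ the points $x_{i,1}^b,\dots,x_{i,n_i}^b$ are i.i.d.\ uniform on $\{X_{i,1},\dots,X_{i,n_i}\}$, hence
\[
\Variance_*\!\Big(\frac1{n_i}\sum_{j=1}^{n_i}\mathcal{I}_i^\phi(x_{i,j}^b)\Big)=\frac1{n_i}\,\Variance_*\big(\mathcal{I}_i^\phi(x_{i,1}^b)\big)=\frac1{n_i}\,\mathbb{E}_*\big[\mathcal{I}_i^\phi(x_{i,1}^b)^2\big],
\]
where the last step uses the centering $\mathbb{E}_*[\mathcal{I}_i^\phi(x_{i,1}^b)]=\frac1{n_i}\sum_{j=1}^{n_i}\mathcal{I}_i^\phi(X_{i,j})=0$. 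Finally, since $x_{i,1}^b$ is uniform on the sample, $\mathbb{E}_*[\mathcal{I}_i^\phi(x_{i,1}^b)^2]=\frac1{n_i}\sum_{j=1}^{n_i}(\mathcal{I}_i^\phi(X_{i,j}))^2$; multiplying by $1/n_i$ and summing over $i$ yields exactly $\sum_{i=1}^m\frac1{n_i^2}\sum_{j=1}^{n_i}(\mathcal{I}_i^\phi(X_{i,j}))^2$, which is the claim.

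There is no genuine analytic difficulty here; the work is bookkeeping rather than estimation, and no passage to the limit $B\to\infty$ is needed since $\Variance_*$ already denotes the exact idealized resampling variance. The two points that do require care are: (i) that the von Mises derivative really splits into a sum of one-dimensional linear functionals, one per coordinate $\hat F_i$, each of the ``centered expectation of an influence function'' form — this is precisely the Riesz-representation discussion preceding the lemma, made rigorous in the kernel-embedding setting by Theorem~\ref{thm: influence exist in rkhs}; and (ii) that each $\mathcal{I}_i^\phi$ is square-integrable under $\hat F_i$, so every variance above is finite, which follows from the moment conditions in Assumptions~\ref{assumption: rkhs embed} and \ref{assumption: lip in rkhs}. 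I would state these two facts as the inputs being used and then present the three-line computation above.
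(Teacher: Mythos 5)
Your proposal is correct and follows essentially the same route as the paper's own proof (Lemma~\ref{lem: 1st var} in the appendix): express the von Mises derivative as $\sum_{i}\frac{1}{n_i}\sum_{j}\mathcal{I}_i^\phi(x_{i,j}^b)$, use independence across $i$ to split the variance, use i.i.d.\ uniform resampling within each $i$ to reduce to a single-draw variance, and invoke the centering $\frac{1}{n_i}\sum_j \mathcal{I}_i^\phi(X_{i,j})=0$ to turn that variance into the stated second moment. The only cosmetic difference is that the paper writes out the covariance expansion explicitly before invoking independence, whereas you compress those steps; the substance and the required inputs are identical.
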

For the proof, see Lemma \ref{lem: 1st var} in the appendix.
Based on this observation and the orthogonal and non-orthogonal decomposition, we propose Algorithm \ref{alg:varestimateob}. We calculate the non-orthogonal part (influence function) using the closed-form representation and simulate the remainder part using the Monte-Carlo method. 

Under the same assumption as debiasing, Orthogonal Bootstrap provably improves the required Monte Carlo replications when simulating the variance.

\begin{theorem} \label{theorem:variance}
For simplicity, we assume $n_1=n_2=\cdots=n_m=n$. Let $X_{ob}$ be the Orthogonal Bootstrap estimator defined in Equation \eqref{eq:varianceestimator} and $X_{sb}$ be the Standard Bootstrap estimator defined by $X_{sb}:=\frac{1}{B}\sum_{b=1}^{B}(\hat{\phi}^b-\overline{\phi})^2$. Under Assumption \ref{assumption: rkhs embed} and Assumption \ref{assumption: lip in rkhs}, if the number of Monte Carlo replications $B\geq Cn^\alpha$ for some absolute constant $C>0$ and $\alpha\geq 0$, then the simulation error for the Orthogonal Bootstrap estimator satisfies $\Variance_*(X_{ob})=O_p(\frac{1}{n^{3+\alpha}})$ and the simulation error for the Standard Bootstrap estimator satisfies $\Variance_*(X_{sb})=\Theta_p(\frac{1}{n^{2+\alpha}})$.
\end{theorem}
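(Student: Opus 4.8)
The plan is to unpack both estimators through the first-order expansion $\hat\phi^b=\hat\phi+L^b+r^b$, where $L^b:=\nabla\phi(\hat F_1,\dots,\hat F_m)[\hat F_1^b-\hat F_1,\dots,\hat F_m^b-\hat F_m]=\sum_{i=1}^m n^{-1}\sum_{j=1}^n\mathcal I_i^\phi(x_{i,j}^b)$ is the non-orthogonal (influence) part and $r^b:=\hat\phi^b-\hat\phi-L^b$ is the second-order remainder. Let $\bar L,\bar r$ denote the Monte-Carlo averages over $b$, let $V_L:=\sum_{i=1}^m n^{-2}\sum_{j=1}^n(\mathcal I_i^\phi(X_{i,j}))^2$ be the closed form of Lemma~\ref{lem: influence var}, and use the algebraic identity $(\hat\phi^b-\bar\phi)^2=(L^b-\bar L)^2+2(L^b-\bar L)(r^b-\bar r)+(r^b-\bar r)^2$. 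Setting $W:=\frac1B\sum_b(L^b-\bar L)^2$ and $Z:=\frac1B\sum_b[2(L^b-\bar L)(r^b-\bar r)+(r^b-\bar r)^2]$, one reads off $X_{sb}=W+Z$ and $X_{ob}=V_L+Z$ --- the latter because $Z$ is precisely the Monte-Carlo estimate of the last two terms of~\eqref{eq:vardecompose} that appears in~\eqref{eq:varianceestimator}. Hence $X_{sb}=X_{ob}+(W-V_L)$, the term $V_L$ carries no simulation error (it is $\hat F$-measurable), and everything reduces to controlling $\Variance_*(W)$, $\Variance_*(Z)$ and $\text{Cov}_*(W,Z)$.

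The analytic input I would import from the appendix (Lemma~\ref{lem: influence var} and the estimates behind Theorem~\ref{thm: debias}, Theorem~\ref{thm: debias sb}, and Theorem~\ref{thm: influence exist in rkhs}) is a list of moment bounds under $\mathbb E_*$, conditional on the data: (i) $\mathbb E_*L^b=0$, $\mathbb E_*(L^b)^2=V_L=\Theta_p(1/n)$ and $\mathbb E_*(L^b)^4=\Theta_p(1/n^2)$, which is just the moment behaviour of a normalized i.i.d.\ sum together with non-degeneracy of the influence functions ($h$ non-constant, Assumption~\ref{assumption: lip in rkhs}); and (ii) $\mathbb E_*|r^b|^k=O_p(n^{-k})$ for $k\le4$, obtained from the pointwise Taylor bound $|r^b|\le\frac L2\,\|\mu(\hat F_1^b,\dots,\hat F_m^b)-\mu(\hat F_1,\dots,\hat F_m)\|_{\mathcal H}^2$ (valid since $\phi=h\circ\mu$ with $Dh$ Lipschitz) combined with $\mathbb E_*\|\mu(\hat F^b)-\mu(\hat F)\|_{\mathcal H}^{2k}=O_p(n^{-k})$ --- the latter a moment inequality for a degenerate V-statistic valued in a Hilbert space, which is exactly where the fourth-moment kernel conditions of Assumption~\ref{assumption: rkhs embed} are consumed.

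Granting (i)--(ii), the two variances follow from Cauchy--Schwarz. First, $W$ is the empirical variance of the i.i.d.\ centered variables $L^b$, so up to the $1/B$-versus-$1/(B-1)$ normalization $\Variance_*(W)=\frac1B\Variance_*((L^b)^2)(1+o(1))$, and $\Variance_*((L^b)^2)=\mathbb E_*(L^b)^4-V_L^2=\Theta_p(n^{-2})$ (again by non-degeneracy, so the difference is of the same order as each term and bounded below); with $B\asymp n^\alpha$ this gives $\Variance_*(W)=\Theta_p(n^{-2-\alpha})$. Second, for $Z$ the empirical-variance-of-$r^b$ piece contributes $O_p(\mathbb E_*|r^b|^4/B)=O_p(n^{-4-\alpha})$, while the empirical-covariance piece $\frac1B\sum_b(L^b-\bar L)(r^b-\bar r)$ contributes, by independence across $b$ and Cauchy--Schwarz, $O_p(B^{-1}\sqrt{\mathbb E_*(L^b)^4\,\mathbb E_*|r^b|^4})=O_p(B^{-1}\sqrt{n^{-2}n^{-4}})=O_p(n^{-3-\alpha})$, which dominates (the centering corrections of $\bar L\bar r$ type are $O_p(n^{-3-2\alpha})$, hence negligible for $\alpha\ge0$). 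Therefore $\Variance_*(X_{ob})=\Variance_*(Z)=O_p(n^{-3-\alpha})$; and since $|\text{Cov}_*(W,Z)|\le\sqrt{\Variance_*(W)\Variance_*(Z)}=O_p(n^{-5/2-\alpha})$, we obtain $\Variance_*(X_{sb})=\Variance_*(W+Z)=\Variance_*(W)(1+o_p(1))=\Theta_p(n^{-2-\alpha})$.

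The main obstacle is step (ii): establishing $\mathbb E_*\|\mu(\hat F^b)-\mu(\hat F)\|_{\mathcal H}^{2k}=O_p(n^{-k})$ up to $k=4$, uniformly in probability over the data so that the $O_p$ constants do not depend on $n$. This requires a Marcinkiewicz--Zygmund/Rosenthal-type inequality for degenerate V-statistics taking values in a Hilbert space, fed by the fourth-moment assumptions on the kernels; I expect it is already packaged inside Theorem~\ref{thm: influence exist in rkhs}. A secondary delicate point is the $\Theta_p$ lower bounds, which rest on the influence functions being non-degenerate --- i.e.\ $V_L=\Theta_p(1/n)$ rather than merely $O_p(1/n)$, and $\mathbb E_*(L^b)^4>(\mathbb E_*(L^b)^2)^2$ --- both of which follow from $h$ being non-constant together with the elementary fact that a normalized i.i.d.\ sum is not $\mathbb E_*$-a.s.\ constant; I would cite these from the same appendix result.
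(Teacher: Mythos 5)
Your proposal is correct and takes essentially the same route as the paper. The paper's proofs of Theorems~\ref{thm: original ci} and~\ref{thm: original ci sb} rest on the same first-order expansion $\hat\phi^b=\hat\phi+\hat{\mathcal I}^b+\epsilon^b$ (your $L^b$ and $r^b$), apply Lemma~\ref{lemma:variance of sample covariance} to the two-dimensional sample $(\hat\phi^b-\hat{\mathcal I}^b,\,\hat{\mathcal I}^b)$ to obtain exact variance formulas for the sample-covariance entries, and then feed in exactly the moment bounds you list in (i)--(ii); the $\Theta_p$ lower bound and the large-$B$ moment behaviour of $L^b$ are handled by Corollary~\ref{cor:nonortho order} and Lemma~\ref{lem: 1st fourth}, and the $O_p(n^{-k})$ bounds on $\epsilon^b$ come from Assumption~\ref{assumption:smoothness} via Theorem~\ref{thm: influence exist in rkhs}, as you anticipated. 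The one organizational difference is cosmetic but pleasant: your identity $X_{sb}=X_{ob}+(W-V_L)$ makes explicit that the two estimators differ only in that Orthogonal Bootstrap replaces the Monte-Carlo-noisy empirical influence variance $W$ by its simulation-exact value $V_L$, so $\Variance_*(X_{sb})-\Variance_*(X_{ob})$ is governed by $\Variance_*(W)=\Theta_p(n^{-2-\alpha})$; the paper reaches the same conclusion by directly expanding $\Variance_*(S_{B11}+S_{B12})$ without isolating this single ``swapped'' term.
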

   
   \begin{remark}
         To achieve relative consistency in variance estimation, one typically needs the simulation error to be $O_p(n^{-3})$  (see Theorem 1 in \cite{lam2022subsampling}). According to Theorem \ref{theorem:variance}, the Standard Bootstrap needs $\Omega(n)$ Monte Carlo replications but the Orthogonal Bootstrap only needs $O(1)$ Monte Carlo replications.
   \end{remark}
  The theorem is proved by combining Theorem \ref{thm: original ci} and Theorem \ref{thm: original ci sb} with Theorem \ref{thm: influence exist in rkhs} in the appendix. 

When the sample size $n$ is small, our Orthogonal Bootstrap estimator for variance can obtain negative values sometimes. An improved Orthogonal Bootstrap estimator (Algorithm \ref{alg:ciconstructioniob} in the appendix) should be used in this case. We note here that the improved Orthogonal Bootstrap estimator enjoys the same theoretical properties (Theorem \ref{thm: improved ci} in the appendix) as the original Orthogonal Bootstrap estimator. See Section \ref{appendix:iob} in the appendix for details.

\begin{algorithm*}
\caption{Variance Estimation via Orthogonal Bootstrap}\label{alg:varestimateob}
 %\hspace*{\algorithmicindent} 
 \textbf{Input}: A generic performance measure $\phi(F_1,\cdots,F_m)$, i.i.d samples  $\{X_{i,1},\cdots,X_{i,n_i}\}\in\mathbb{R}^{d_1}$ of $F_i$, and influence function $\mathcal{I}_i^\phi$ of $\phi$ respect to $\hat{F}_i$.  \\
 %\hspace*{\algorithmicindent} 
 \textbf{Output}: Estimation of {$\Variance\phi(\hat F_1,\cdots,\hat F_m)$}.
\begin{algorithmic}
\STATE $\hat \phi\leftarrow \phi(\hat F_1,\cdots,\hat F_m)$, where $\hat F_i=\frac{1}{n_i}\sum_{j=1}^{n_i}\delta_{X_{i,j}}$

\FOR{b=1:B}
\FOR{i=1:m}
\STATE Sample $\{x_{i,1}^b,\cdots,x_{i,n_i}^b\}$ i.i.d from $\hat{F_i}$
\ENDFOR
\STATE $\hat\phi^b\leftarrow\phi(\hat F_1^b,\cdots,\hat F_m^b)$, where $\hat{F}_i^b=\frac{1}{n_i}\sum_{j=1}^{n_i} \delta_{x_{i,j}^b}$
\STATE Calculate $\mathcal{\hat I}^b=\sum_{i=1}^m\frac{1}{n_i}\sum_{j=1}^{n_i}\mathcal{I}_i^\phi(\tilde{x}_{i,j})$
\ENDFOR
\STATE Calculate $\overline{\phi-\mathcal{I}}\leftarrow\frac{1}{B}\sum_{b=1}^{B}(\hat\phi^b-\mathcal{\hat I}^b)$, $\overline{\mathcal{I}}\leftarrow\frac{1}{B}\sum_{b=1}^{B}\mathcal{\hat I}^b$.
\STATE Estimate $\Variance_{F_1,\cdots,F_m}\phi(\hat F_1,\cdots,\hat F_m)$ by $S^2$, where
\begin{equation}\label{eq:varianceestimator}
    \begin{aligned}
        S^2&=\sum_{i=1}^m\frac{1}{n_i^2}\sum_{j=1}^{n_i}(\mathcal{I}_i^\phi (X_{i,j}))^2+\frac{1}{B}\sum_{b=1}^{B}\left(\hat\phi^b-\mathcal{\hat I}^b-\overline{\phi-\mathcal{I}}\right)^2+\frac{2}{B}\sum_{b=1}^{B}\left(\hat\phi^b-\mathcal{\hat I}^b-\overline{\phi-\mathcal{I}}\right)\left(\mathcal{\hat I}^b-\overline{\mathcal{I}}\right).
    \end{aligned}
\end{equation}
%\State $N \gets n$
%\While{$N \neq 0$}
%\If{$N$ is even}
%    \State $X \gets X \times X$
%    \State $N \gets \frac{N}{2}$  \Comment{This is a comment}
%\ElsIf{$N$ is odd}
%    \State $y \gets y \times X$
%    \State $N \gets N - 1$
%\EndIf
%\EndWhile
\end{algorithmic}
\end{algorithm*}

\begin{figure*}
\vspace{-0.1in}
    \centering
    \caption{ We consider modeling the relationship between resampled distribution and simulation output as nuisance estimation in orthogonal statistical learning. In Orthogonal Bootstrap, we use linear modeling for the nuisance estimation and only focus on the simulation of the orthogonal part (\emph{i.e.} the residual of linear modeling) to reduce the simulation error. }
    \includegraphics[width=4.5in]{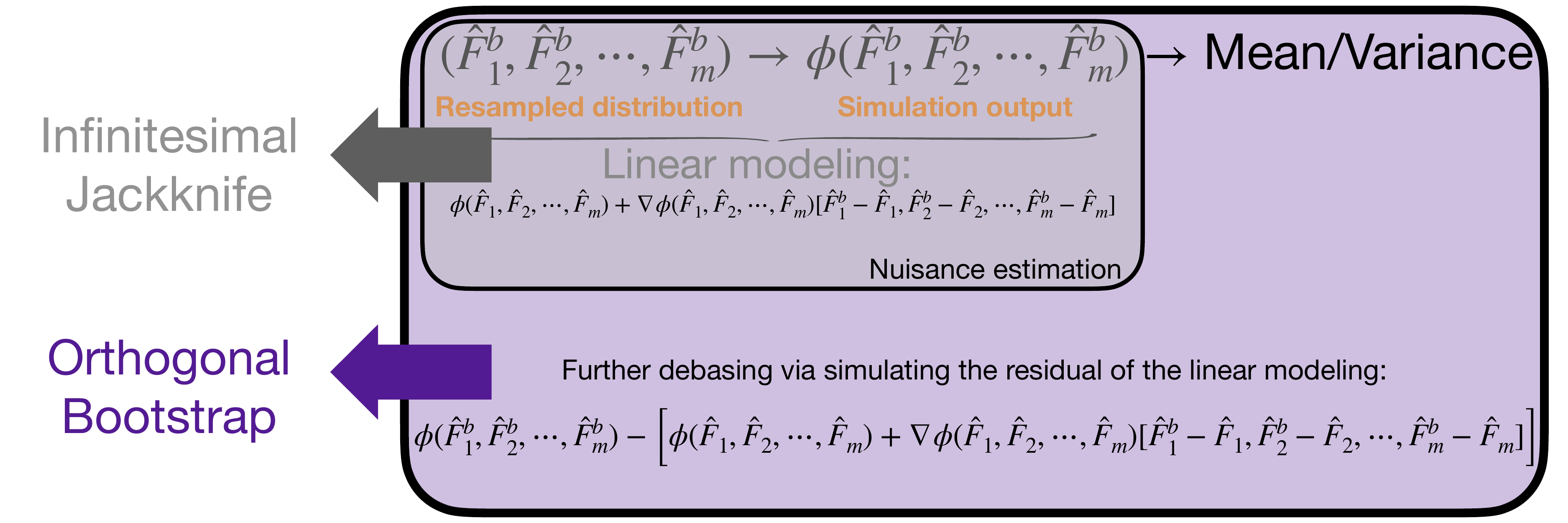}
    
    \label{fig:idea}
\vspace{-0.2in}
\end{figure*}

% Please add the following required packages to your document preamble:
% \usepackage{multirow}
\begin{table*}[]
\centering
\tiny
\caption{$95\%$ confidence interval performances with different Bootstrap methods: Standard Bootstrap, Cheap Bootstrap \cite{lam2022cheap} and Orthogonal Bootstrap. We also added the Infinite Jackknife baseline here. The Standard Bootstrap is significantly under-coverage when $B=2$ to $B=5$. For example, the Standard Bootstrap method only achieved $59\%$ coverage for estimating the variance of folded normal when only 2 Monte Carlo replications are applied. In contrast, Orthogonal Bootstrap can achieve the target coverage when $B$ is small. Compared with Cheap Bootstrap \cite{lam2022cheap}, Orthogonal Bootstrap does not enlarge the length of the constructed confidence interval. Our method also outperforms the Infinitesimal Jackknife.}
\label{table: ci}

\vspace{0.1in}
\begin{tabular}{cc||cc||cc||cc||cc}
 \hline
\multirow{2}{*}{}    & \multirow{2}{*}{$B$} & \multicolumn{2}{c||}{{\textbf{Variance of Folded normal}}}                    & \multicolumn{2}{c||}{\textbf{Variance of Double exponential}}                   & \multicolumn{2}{c||}{\textbf{Correlation of Bivariate Lognormal}}                    & \multicolumn{2}{c}{\textbf{Linear Regression}}              \\ \cline{3-10} \textbf{Method}
                     &                      & \multicolumn{1}{c|}{\textbf{Coverage}} & {\textbf{Width} (st. dev.)} & \multicolumn{1}{c|}{\textbf{Coverage}} & {\textbf{Width }(st. dev.)} & \multicolumn{1}{c|}{\textbf{Coverage}} & {\textbf{Width} (st. dev.)} & \multicolumn{1}{c|}{\textbf{Coverage}} & {\textbf{Width} (st. dev.)}\\ \hline\hline
Standard Bootstrap   &       2               & \multicolumn{1}{c|}{0.591}                  &    {0.043(0.033)}                  & \multicolumn{1}{c|}{0.580}                  &     {0.302(0.238)}                    & \multicolumn{1}{c|}{0.566}                  &{0.112(0.094)}                     & \multicolumn{1}{c|}{0.632}                  & {0.361(0.260)}                     \\ \hline
Cheap Bootstrap  \cite{lam2022cheap}     &   2                   & \multicolumn{1}{c|}{0.956}                  &  {0.145(0.077)}                    & \multicolumn{1}{c|}{0.952}                  &  {1.085(0.606)}                   & \multicolumn{1}{c|}{0.936}                  &{0.385(0.239)}                      & \multicolumn{1}{c|}{0.955}                  &  {1.199(0.615)}                     \\ \hline
Orthogonal Bootstrap &   2                   & \multicolumn{1}{c|}{0.952}                  &  {0.076(0.008)}                     & \multicolumn{1}{c|}{0.949}                  &    {0.552(0.077)}                  & \multicolumn{1}{c|}{0.911}                  &{0.194(0.063)}                      & \multicolumn{1}{c|}{0.946}                  &   {0.615(0.076)}                   \\ \hline\hline
Standard Bootstrap   &       5               & \multicolumn{1}{c|}{0.838}                  &          {0.063(0.024)}            & \multicolumn{1}{c|}{0.834}                  &   {0.454(0.180)}                   & \multicolumn{1}{c|}{0.788}                  &  {0.160(0.079)}                    & \multicolumn{1}{c|}{0.840}                  &  {0.533(0.191)}                    \\ \hline
Cheap Bootstrap \cite{lam2022cheap}      &   5                   & \multicolumn{1}{c|}{0.946}                  &   {0.096(0.032)}                   & \multicolumn{1}{c|}{0.945}                  &    {0.674(0.240)}                 & \multicolumn{1}{c|}{0.918}                  &{0.245(0.111)}                      & \multicolumn{1}{c|}{0.954}                  &  {0.779(0.250)}                    \\ \hline
Orthogonal Bootstrap &   5                   & \multicolumn{1}{c|}{0.954}                  &          {0.076(0.007)}             & \multicolumn{1}{c|}{0.950}                  &           {0.549(0.076)}          & \multicolumn{1}{c|}{0.913}                  &{0.195(0.066)}                      & \multicolumn{1}{c|}{0.950}                  &      {0.623(0.039)}                \\ \hline\hline
Standard Bootstrap   &       10               & \multicolumn{1}{c|}{0.906}                  &   {0.069(0.018)}                  & \multicolumn{1}{c|}{0.896}                  &    {0.510(0.143)}                  & \multicolumn{1}{c|}{0.855}                  &{0.178(0.073)}                      & \multicolumn{1}{c|}{0.905}                  &       {0.576(0.144)}               \\ \hline
Cheap Bootstrap \cite{lam2022cheap}      &   10                   & \multicolumn{1}{c|}{0.951}                  &    {0.084(0.021)}                  & \multicolumn{1}{c|}{0.950}                  &    {0.610(0.160)}                  & \multicolumn{1}{c|}{0.933}                  &{0.215(0.082)}                      & \multicolumn{1}{c|}{0.958}                  &    {0.693(0.156)}                  \\ \hline
Orthogonal Bootstrap &   10                   & \multicolumn{1}{c|}{0.950}                  &    {0.076(0.007)}                 & \multicolumn{1}{c|}{0.955}                  &       {0.548(0.074)}               & \multicolumn{1}{c|}{0.929}                  &  {0.189(0.058)}                    & \multicolumn{1}{c|}{0.954}                  &   {0.624(0.028)}                     \\ \hline\hline
Infinitesimal Jackknife &                      & \multicolumn{1}{c|}{0.937}                  &   {0.076(0.008)}                  & \multicolumn{1}{c|}{0.931}                  &       {0.548(0.075)}               & \multicolumn{1}{c|}{0.899}                  &{0.191(0.058)}                      & \multicolumn{1}{c|}{0.942}                  &   {1.679(0.150)}                 \\ \hline

\end{tabular}
\end{table*}

% Please add the following required packages to your document preamble:
% \usepackage{multirow}

\section{Numerical Examples}
\label{section:numerical}

In this section, we test the numerical performances of our Orthogonal Bootstrap and compare it with the Standard Bootstrap. For confidence and prediction interval examples, we also compared our method with the recently proposed Cheap Bootstrap method \cite{lam2022cheap}. We demonstrated that our Orthogonal Bootstrap achieved significantly smaller bias and higher empirical coverage probability over the original Bootstrap method when the number of Monte Carlo replications is small. Although Cheap Bootstrap \cite{lam2022cheap} can also provide comparable empirical coverage probability to our method, the mean width of the interval constructed by Cheap Bootstrap is much longer.  Our Orthogonal Bootstrap method provides interval with higher empirical coverage probability but {achieves the same expected width as the original Bootstrap.} The experiment details are left in Section \ref{appendix:experiment} in the appendix.

\subsection{Debiasing} 
\label{subsection:biascorrection}

We consider four numerical examples following \cite{ma2022correcting}. For all of the examples, we compute the the average bias (BIAS) of the estimates across 1000 experiments. We run the naive estimator without debiasing, Standard Bootstrap, and our Orthogonal Bootstrap for a small number of Monte Carlo replications $B=2,3,4,5,6,7,8,9,10$. To ensure robustness and reliability, we repeat the bootstrap procedure ten times and provide insights through the reporting of quantiles at the 5th, 50th, and 95th percentiles. The results is shown in Figure \ref{fig:BIAS}.

\paragraph{Function of Mean} We simulate the function-of-mean model, namely estimating $\phi=g(\boldsymbol{\mu})$ where $\boldsymbol{\mu}=\mathbb{E}\mathbf{X}$ for a $d$-dimensional random vector $\mathbf{X}$ and $g:\mathbb{R}^d\rightarrow\mathbb{R}$ is a function. We have i.i.d random sample $\{\mathbf{X}_1,\cdots,\mathbf{X}_n\}$ of random vector $\mathbf{X}$. 
We consider an ellipsoidal estimation problem $g(\boldsymbol{\mu})=\|\boldsymbol{\mu}\|_2^2$ and a fourth-order polynomial estimation problem $g(\boldsymbol{\mu})=\|\boldsymbol{\mu}\|_2^4$. The underlying distribution is set to be $\mathcal{N}(0.2\mathbf{1_d},I_{d})$. We use $d=25$ and a sample size $n=100$. The influence function of $g(\boldsymbol{\mu})=\|\boldsymbol{\mu}\|^2$ is $\mathcal{I}^g(\mathbf{x})=2(\mathbf{x}^T\hat{\boldsymbol{\mu}}-\|\hat{\boldsymbol{\mu}}\|^2)$, and the influence of $g(\boldsymbol{\mu})=\|\boldsymbol{\mu}\|^4$ follows by the chain rule.

\paragraph{Entropy}
We consider estimating entropy for discrete probability distributions, namely estimating $\phi=-\sum_{i=1}^d p_i\ln p_i$ where $p=(p_1,\cdots,p_d)$ satisfies $p_i>0$ and $\sum_{i=1}^d p_i =1$. We generate the groundtruth distribution $p$ from symmetric Dirichlet distribution with parameter $\alpha_i=1$ for all $i=1,\cdots,d$. Noisy observations of $p$ are the empirical distributions of single samples from $p$. We use a sample size $n=1000$ and dimension $d=100$. The influence function is $\mathcal{I}^\phi(x)= \sum_{i=1}^d (x_i-\hat{p}_i)(\log \hat{p}_i +1)$.

\paragraph{Constrained Optimization Problem}
The debiasing method can also be applied to optimization problems with randomness. Here we consider estimating $\phi = \arg\min_{\mathbf{x}} \mathbf{x}^T B \mathbf{x}$ under the constrain $A\mathbf{x}=\mathbf{b}$, where $\mathbf{x}\in\mathbb{R}^p$, $B\in\mathbb{R}^{p\times p}$ is a positive definite matrix, $A \in \mathbb{R}^{d\times p}$, and $\mathbf{b} \in \mathbb{R}^d$ while we only have access to its noisy observations, i.e. $\mathbf{b}\sim \mathcal{N}(\mathbf{b}^*,I_{d\times d})$. We sample $\mathbf{b}^*$ uniformly from the unit sphere $\mathbb{S}^{d-1}$. We use a sample size $n=10$, dimension $d=100$ and $p=200$.

To compute the influence function, we do not solve the quadratic programming problem directly and differentiate. Instead, we utilize a general procedure that can be applied to optimization problems lacking an explicit expression for their solutions by differentiating the KKT conditions of the constrained optimization problem. The details can be found Section \ref{appendix: constrained optimization} in the appendix.

\begin{figure}
    \centering
    \includegraphics[width=1.5in]{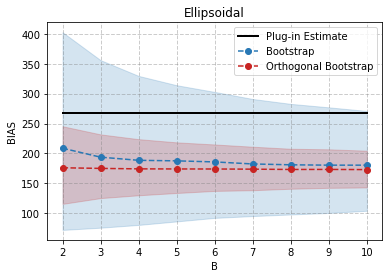}
    \includegraphics[width=1.5in]{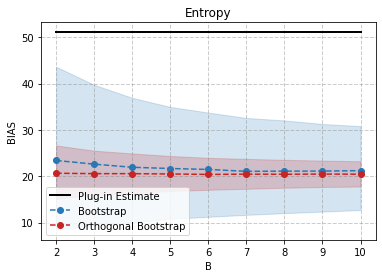}
    \includegraphics[width=1.5in]{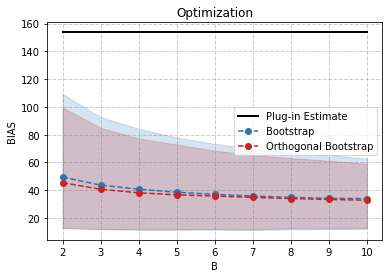}
    \includegraphics[width=1.5in]{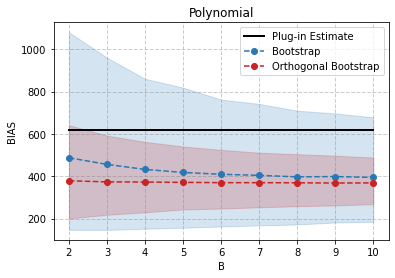}
    \caption{Orthogonal Bootstrap can significantly reduce the simulation output for the examples shown in \cite{ma2022correcting} when the number of Monte Carlo replications is limited. The $x$-axis represents the number of Monte Carlo replications and $y$-axis denotes the bias produced by the estimation. The shaded area represents the 90\% quantile interval for repeated simulations. Orthogonal Bootstrap can significantly reduce the simulation error.}
    \label{fig:BIAS}
\end{figure}

\begin{table*}[htbp]
\small
\centering
\caption{$95\%$ prediction interval performances with different Bootstrap methods: Standard Bootstrap, Cheap Bootstrap \cite{lam2022cheap}, and Orthogonal Bootstrap. Results are averaged over 3 random seeds. Our method can achieve 95\% coverage with the minimum times of Bootstrap without enlarging the prediction interval length.}
\label{realdata}
\vspace{0.1in}
\begin{tabular}{cc||c|c||c|c||c|c}
\hline
\multirow{2}{*}{} & \multirow{2}{*}{{$B$}} & \multicolumn{2}{c||}{{\textbf{Yacht Hydrodynamics}}} & \multicolumn{2}{c||}{{\textbf{Energy Efficiency}}} & \multicolumn{2}{c}{{\textbf{Kin8nm}}} \\
\cline{3-8}
\cline{3-8}
\textbf{Method} & & {\textbf{Coverage}} & {\textbf{Width}} & {\textbf{Coverage}} & {\textbf{Width}} & {\textbf{Coverage}} & {\textbf{Width}} \\
\hline\hline
Standard Bootstrap & 2 & 0.86 & 37.02 & 0.932 & 16.29 & 0.9455 & 0.6216 \\
\hline
Cheap Bootstrap \cite{lam2022cheap} & 2 & 1.00 & 82.26 & 1.00 & 36.03 & 1.00 & 1.3746 \\
\hline
Orthogonal Bootstrap & 2 & 0.99 & 45.88 & 0.951 & 15.54 & 0.9463 & 0.6224 \\
\hline\hline
Standard Bootstrap & 5 & 0.86 & 35.21 & 0.962 & 17.27 & 0.9455 & 0.6154 \\
\hline
Cheap Bootstrap \cite{lam2022cheap} & 5 & 0.86 & 46.42 & 0.993 & 22.77 & 0.9772 & 0.8110 \\
\hline
Orthogonal Bootstrap & 5 & 0.97 & 42.08 & 0.962 & 17.40 & 0.9455 & 0.6158 \\
\hline
\end{tabular}
\end{table*}

\subsection{Confidence Interval Construction} 
\label{subsection:CIconstructionexample}
In this section, we aim to construct the confidence interval \cite{hall1986bootstrap,hall1988bootstrap,efron1992bootstrap} with minimal resampling effort using our Orthogonal Bootstrap technique. 
The confidence interval constructed by the Standard Bootstrap method is
\begin{equation*}
\tiny
    \begin{aligned}
        [\hat\phi-z_{1-\alpha/2}\sqrt{\Variance_*\left(\phi(\hat F_{1}^{b},\cdots,\hat F_{m}^{b})\right)}, \hat\phi+z_{1-\alpha/2}\sqrt{\Variance_*\left(\phi(\hat F_{1}^{b},\cdots,\hat F_{m}^{b})\right)}],
    \end{aligned}
\end{equation*}

where $z_{1-\alpha/2}$ being the $(1-\alpha/2)-$quantile of the standard normal, $\hat\phi=\phi(\hat F_1,\cdots,\hat F_m)$ is the plug-in estimator of $\phi(F_1,\cdots,F_m)$ and $\hat F_i=\frac{1}{n}\sum_{j=1}^n \delta_{X_{i,j}}$. We use Orthogonal Bootstrap to estimate $\Variance_*(\phi(\hat F_{1}^{b},\cdots,\hat F_{m}^{b}))$. The details of the algorithm for confidence interval construction is provided in Algorithm \ref{alg:ciconstructionob} in the appendix.

\paragraph{Elementary Examples} Following \cite{lam2022cheap}, we consider a folded standard normal (\emph{i.e.},$|N(0,1)|$) and double exponential with rate 1 (\emph{i.e.} $\text{Sgn}\times \text{Exp}(1)$, where $\text{Sgn}=+1$ or $-1$ with equal probability and is independent with $\text{Exp}(1)$). We aim to provide confidence intervals for their variance. The two setups have explicit ground truth,   namely $1-2/\pi$ and $2$, respectively. 
The influence function of variance is $\mathcal{I}(x)=(x-\hat \mu)^2-\hat \sigma^2$, where $\hat \mu$ is the empirical mean and $\hat\sigma^2$ is the empirical variance. 
The third example, also follows \cite{lam2022cheap}, is estimating the correlation of bivariate lognormal (\emph{i.e.}$(e^{Z_1},e^{Z_2})$, where $(Z_1,Z_2)$ is the bivariate normal with mean zero, unit variance and correlation 0.5). The ground truth of this example is also known as $(e^{3/2}-e)/(e^2-e)$.
The influence function of correlation can be determined by the influence of covariance and the chain rules.
We use a sample size $n=1000$ for all three examples.  We run  1000 independent simulations and report the empirical coverage and the mean width of the confidence interval. We run the Standard Bootstrap, Cheap Bootstrap \cite{lam2022cheap} and Orthogonal Bootstrap using a small number of Monte Carlo replications $B=2,5,10$. We set the confidence level $(1-\alpha)$ as $95\%$. For each setting, we repeat our experiments 1000 times and report the empirical coverage, interval width mean and standard deviation. The result is shown in Table \ref{table: ci}.  

\paragraph{Regression Problem} We apply our method to a linear regression problem in \cite{sengupta2016subsampled,lam2022cheap}. We aim to fit a model $Y=\beta_1X_1+\cdots+\beta_dX_d+\epsilon$ where we set dimension $d=50$  and use data $\{(x_{1,i},\cdots,x_{d,i},Y_i)\}_{i=1}^n$ of size $n=5000$ to fit the model. We set $\log(X_i)\sim N(0,1)$ and $\epsilon\sim 10*N(0,1)$ as the data generating process. The influence function is easy to calculate in this case as the estimator is a $M$-estimator. For example, \cite{cook1980characterizations,koh2017understanding} calculate the influence function for  M-estimator. Specifically, consider the influence function of the M-estimator $\hat \theta = \arg\min_\theta \frac{1}{n}\sum_{i=1}^n L(z_i,\theta)$ , where $z_i$ represents training data and labels, $L(z,\theta)$ is the loss function for data $z$ and parameter $\theta$. The influence function for the parameter $\theta$ at point $z$ is  
\begin{equation}
    \begin{aligned}
        \mathcal{I}^{\theta}(z)&=\frac{d\arg\min_\theta \frac{1}{n}\sum_{i=1}^n L(z_i,\theta)+\epsilon L(z,\theta)}{d\epsilon}\bigg|_{\epsilon=0}\\
        &=-H_{\hat\theta}^{-1}\nabla_\theta L(z,\hat \theta),
    \end{aligned}
\end{equation}
where $H_{\hat \theta}=\frac{1}{n}\sum_{i=1}^n \nabla_\theta^2 L(z_i,\hat\theta)$ is the empirical hessian of the loss function and is positive definite by assumption. As shown in \cite{koh2017understanding,alaa2020discriminative}, the influence function can be solved efficiently via inverse hessian vector product combined with the modern  auto-grad systems
like TensorFlow \cite{abadi2016tensorflow} and Pytorch \cite{paszke2019pytorch}. The calculation of the influence function is much faster than retraining the model.  In this example, we run the Standard Bootstrap, Cheap Bootstrap \cite{lam2022cheap}, and our Orthogonal Bootstrap using a small number of Monte Carlo replications $B=2,5,10$ and report the empirical coverage, interval width mean, and standard deviation for the first coefficient in Table \ref{table: ci}.

\subsection{Prediction Interval Construction for Real Data}
In this section, we use Orthogonal Bootstrap to accelerate the pivot Bootstrap method \cite{contarino2022constructing} to construct prediction intervals for neural networks. Following \cite{alaa2020discriminative}, we conduct our experiments on 3 UCI benchmark datasets for regression: yacht hydrodynamics, energy efficiency \cite{Dua:2019} and kin8nm. 
We partition these datasets into distinct training and testing subsets. For the training datasets, we employ two-layer neural networks with a hidden dimension of 100 and a hyperbolic tangent (tanh) activation function. Subsequently, we apply Standard Bootstrap, Cheap Bootstrap, and our proposed Orthogonal Bootstrap to simulate the variance of the resampling step in the pivot Bootstrap method \cite{contarino2022constructing}. %These techniques are used to construct prediction intervals following \cite{contarino2022constructing}, incorporating the inherent model variance. 
The construction details for prediction intervals can be found in Algorithm \ref{alg:piconstructionob} in the appendix.
 and the training procedures specific to each dataset 
We set the target coverage to be $(1-\alpha)=0.95$. We report the empirical coverage and interval width mean in Table \ref{realdata}, where the empirical coverage is the percentage of test data points which fall into the corresponding prediction interval. Our reported results are averaged over three random seeds to ensure robustness and reliability.

\section{Conclusion and Discussion}
In summary, our paper introduces the concept of Orthogonal Bootstrap, a novel technique that streamlines the process of bootstrap resampling. By separately treating the non-orthogonal (influence function) and orthogonal parts, akin to utilizing  Infinitesimal Jackknife estimator as a control variate, we effectively reduce the number of required Monte Carlo replications. This innovation allows our method to maintain the higher-order coverage properties associated with traditional Bootstrap methods while simultaneously decreasing the computational burden. It's important to note that the control variate significantly reduces the computational costs involved in the simulation process while it doesn't alter the expected length of the constructed confidence interval. In essence, Orthogonal Bootstrap presents a practical and efficient solution for improving the accuracy and speed of Bootstrap resampling, making it a valuable tool for statistical analysis and inference.

\section{Impact Session}

Our paper contributes to fast and accurate quantification of the uncertainty for larger scale machine learning problems, which giving policy maker an idea of reliability of the AI prediction and is vital for risk management and help to be responsible in contexts where AI decisions have significant ethical implications. The theoretical results presented in the paper will have no ethical impact. 

% Acknowledgements should only appear in the accepted version.

% In the unusual situation where you want a paper to appear in the
% references without citing it in the main text, use \nocite

\bibliography{example_paper}
\bibliographystyle{icml2024}

%%%%%%%%%%%%%%%%%%%%%%%%%%%%%%%%%%%%%%%%%%%%%%%%%%%%%%%%%%%%%%%%%%%%%%%%%%%%%%%
%%%%%%%%%%%%%%%%%%%%%%%%%%%%%%%%%%%%%%%%%%%%%%%%%%%%%%%%%%%%%%%%%%%%%%%%%%%%%%%
% APPENDIX
%%%%%%%%%%%%%%%%%%%%%%%%%%%%%%%%%%%%%%%%%%%%%%%%%%%%%%%%%%%%%%%%%%%%%%%%%%%%%%%
%%%%%%%%%%%%%%%%%%%%%%%%%%%%%%%%%%%%%%%%%%%%%%%%%%%%%%%%%%%%%%%%%%%%%%%%%%%%%%%
\newpage
\appendix
\onecolumn
\section*{Organization of the Appendix}
The appendix is structured as follows. In Section \ref{appendix: expansion}, we provide some background on the von Mises expansion of the statistical functional \cite{serfling2009approximation}, kernel mean embeddings \cite{muandet2017kernel}, and multivariate calculus on Banach spaces \cite{chang2005methods}. Similar to the development of \cite{lam2022subsampling}, we propose general assumptions that theoretically guarantee the success of our method in Section \ref{appendix: theory}. We subsequently offer a range of illustrative examples where this assumption holds in the setting of kernel mean embeddings. Finally we present the details of our experiment and several additional experiment results in Section \ref{appendix:experiment}.

\section{Preliminaries}\label{appendix: expansion}
\subsection{The Von Mises Expansion}\label{appendix: von mises}
In this section, we present the Von Mises expansion, a distributional analog of the Taylor expansion applied for a statistical functional $T$. Given two points $F$ and $G$ in a collection $\mathcal{F}$ of distributions, we consider the taylor expansion of the statistic function $T$ over the line segment in $\mathcal{F}$  joining $F$ and $G$  consists of the set of distribution functions $\{(1-\lambda)F+\lambda G,0\le \lambda\le 1\}$, \emph{i.e.}
\begin{equation}
    \begin{aligned}
    \label{eq:taylor}
        T(G)-T(F)&=d_1 T(F;G-F)+\frac{1}{2!}d_2T(F;G-F)+\cdots,\\
        &=\sum_{k=1}^m \frac{1}{k!}d_kT(F;G-F)+\frac{1}{(m+1)!}\frac{d^{m+1}}{d\lambda^{(m+1)}}T(F+\lambda(G-F))|_{\lambda^\ast} \end{aligned}
\end{equation}
where $d_k(F;G-F)$ is the $k$-th order von Mises diffferential of $T$ at $F$ in the direction of $G$ to be
$$
d_k T(F;G-F)=\frac{d^k}{d\lambda^k} T(F+\lambda(G-F))|_{\lambda=0+}.
$$
Note that the von Mises differential is defined in Gateaux's manner (see Definition 1.1.2 in \cite{chang2005methods}). In typical cases, the  $k$-th order von Moses differential $d_k T(F;G-F)$ is always $k$-linear \cite{fernholz2012mises,abad2022approximating}, \emph{i.e.} there exists a function $T_k[x_1,\cdots,x_k],(x_1,\cdots,x_k)\in\mathbb{R}^k$ such that
$$
d_k T(F;G-F)=\int\cdots \int T_k[F;x_1,\cdots,x_k]\prod_{i=1}^k d[G(x_i)-F(x_i)]
$$
holds for all $G$ \cite{dayal1977converse} (e.g. Lemma \ref{lemma: Gateaux and Frechet}). Following Lemma 6.3.2.A/B in \cite{serfling2009approximation}, we use the V-statistics representation of $d_k T(F;F_n-F)$, we can have the following von Mises expansion \cite{mises1947asymptotic,filippova1962mises,reeds1976definition,van2000asymptotic,serfling2009approximation}
$$
\theta(G)=\theta(F)+\mathbb{E}_G\phi_1(X)+\frac{1}{2}\mathbb{E}_G \phi_2(X_1,X_2)+\cdots=\theta(F)+\sum_{k=1}^\infty \frac{1}{k!} \mathbb{E}_G\phi_k(X_1,\cdots,X_k)
$$
The function $\phi_1(x)$ is known as the influence function of $\theta$ and similarly $\phi_k(x_1,\cdots,x_k)$ is the $k$-th order influence function defined as
$$
\phi_k(x_1,\cdots,x_k)=\frac{d}{ds_1}|_{s_1=0}\cdots \frac{d}{ds_k}|_{s_k=0} \theta((1-\sum s_i)F+\sum s_i\delta_{x_i})
$$
%%%%%%%%%%%%%%%%%%%%%%%%%%%%%%%%%%%%%%%%%%%%%%%%%%%%%%%%%%%%%%%%%%%%%%%%%%%%%%%
%%%%%%%%%%%%%%%%%%%%%%%%%%%%%%%%%%%%%%%%%%%%%%%%%%%%%%%%%%%%%%%%%%%%%%%%%%%%%%%
Now we document some lemma that are useful for our theoretical development. We consider a statistical functional resembling the form of the $k$-th order von Mises differential.
\begin{lemma}[Section 6.3.2 Lemma A, \cite{serfling2009approximation}]\label{lemma:demean}
    Let $F$ be fixed and $h(x_1,\cdots,x_m)$ be given. A functional of the form 
    \[T(G)=\int\cdots\int h(x_1,\cdots,x_m)\prod_{i=1}^m\mathrm{d}[G(x_i)-F(x_i)]\]
    can be written as a functional of the form 
    \[T(G)=\int\cdots\int \tilde h(x_1,\cdots,x_m)\prod_{i=1}^m\mathrm{d}[G(x_i)]\]
    where the definition of $\tilde h$ depends on $F$. Moreover, we can have $\int \tilde h(x_1,\cdots,x_m)\mathrm{d}F(x_i)=0$ for all $i\in [m]$.
\end{lemma}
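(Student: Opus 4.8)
The plan is to produce $\tilde h$ explicitly by an inclusion--exclusion (centering) construction and then verify the two assertions by elementary bookkeeping. For each coordinate $i\in[m]$ introduce the partial-integration operator $E_i$ acting on functions of $(x_1,\dots,x_m)$ by $E_ig(x_1,\dots,x_m)=\int g(x_1,\dots,x_m)\,\mathrm dF(x_i)$; the output no longer depends on $x_i$, but we regard it as a function of all $m$ variables (constant in $x_i$). Three elementary facts drive everything: (a) the $E_i$ commute with each other, by Fubini; (b) each $E_i$ is idempotent, $E_iE_i=E_i$, because after integrating out $x_i$ there is nothing left to integrate; and (c) for any $f$ and any probability measure $G$, $\int f(x_i)\,\mathrm d[G(x_i)-F(x_i)]=\int\big((I-E_i)f\big)(x_i)\,\mathrm dG(x_i)$, since $E_if$ is constant in $x_i$ and $\int\mathrm dG=1$.

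Now set $\tilde h:=\Big(\prod_{i=1}^m(I-E_i)\Big)h=\sum_{S\subseteq[m]}(-1)^{m-|S|}\Big(\prod_{i\notin S}E_i\Big)h$, that is, $\tilde h(x_1,\dots,x_m)=\sum_{S\subseteq[m]}(-1)^{m-|S|}\int h(x_1,\dots,x_m)\prod_{i\notin S}\mathrm dF(x_i)$, which manifestly depends on $F$. Applying fact (c) one coordinate at a time inside the iterated integral --- legitimately, by Fubini and fact (a), under the integrability of $h$ against the relevant product measures that is implicit whenever this representation is used --- converts each factor $\mathrm d[G(x_i)-F(x_i)]$ into $\mathrm dG(x_i)$ at the cost of pre-composing the integrand with $I-E_i$, so that $\int\!\cdots\!\int h\prod_{i=1}^m\mathrm d[G(x_i)-F(x_i)]=\int\!\cdots\!\int\big(\prod_{i=1}^m(I-E_i)h\big)\prod_{i=1}^m\mathrm dG(x_i)=\int\!\cdots\!\int\tilde h\prod_{i=1}^m\mathrm dG(x_i)$, which is the first claim.

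For the degeneracy, fix $j\in[m]$. Using that the $E_i$ commute (fact (a)) we may write $E_j\tilde h=\Big(\prod_{i\ne j}(I-E_i)\Big)\,E_j(I-E_j)h$, and $E_j(I-E_j)=E_j-E_jE_j=0$ by fact (b); hence $E_j\tilde h=0$, which is exactly $\int\tilde h(x_1,\dots,x_m)\,\mathrm dF(x_j)=0$. (Equivalently, in the explicit sum one pairs each $S\ni j$ with $S\setminus\{j\}$: integrating $\mathrm dF(x_j)$ makes the two summands equal in magnitude but opposite in sign, so they cancel.) I do not anticipate a genuine obstacle here: the content is purely combinatorial, and the only care needed is invoking Fubini --- which requires nothing more than integrability of $h$ against the product measures, available in every application of this lemma --- and being consistent about treating partially integrated functions as functions of all $m$ variables so that the operators $I-E_i$ compose. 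As a bonus, since $\prod_i(I-E_i)$ is permutation-invariant, $\tilde h$ inherits symmetry from $h$, which is the form in which the lemma is later used.
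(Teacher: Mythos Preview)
Your construction is correct and is exactly the same $\tilde h$ the paper writes down: the paper's alternating sum $h-\sum_i\int h\,\mathrm dF(x_i)+\sum_{i<j}\int\int h\,\mathrm dF(x_i)\mathrm dF(x_j)-\cdots$ is precisely your $\prod_{i=1}^m(I-E_i)h$ expanded. The only difference is that the paper merely states the formula and asserts it works, whereas you supply the operator-calculus verification (commutativity, idempotence, and the key identity $\int f\,\mathrm d[G-F]=\int(I-E_i)f\,\mathrm dG$) --- so your write-up is strictly more complete than the paper's.
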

\begin{proof}
    Take 
    \begin{align*}
        \tilde h(x_1,\cdots,x_m)=&h(x_1,\cdots,x_m)-\sum_{i=1}^m\int h(x_1,\cdots,x_m)\mathrm{d}F(x_i)\\
        &+\sum_{i<j}\int\int h(x_1,\cdots,x_m)\mathrm{d}F(x_i)\mathrm{d}F(x_j)-\cdots\\
        &+(-1)^m\int\cdots\int h(x_1,\cdots,x_m)\prod_{i=1}^m\mathrm{d}F(x_i),
    \end{align*}
    then $\tilde h$ satisfies all properties claimed in the lemma.
\end{proof}
Using this alternative representation, we can obtain two lemma which gives the rate of the moment of the $V$-statistics type of functional in Lemma \ref{lemma:demean}.
\begin{lemma}[Section 6.3.2 Lemma B, \cite{serfling2009approximation}]\label{lemma:variance} Suppose that $\mathbb{E}_F\{h(X_1,\cdots,X_{i_m})^2\}<\infty$ for all $1\le i_1,\cdots,i_m\le m$. Then
\begin{equation}
    \begin{aligned}
    \label{secondorder}
        \mathbb{E}_F\left\{\left(\int\cdots\int h(x_1,\cdots,x_m)\prod_{i=1}^m d[\hat{F}(x_i)-F(x_i)]\right)^2\right\}=O(n^{-m}),
    \end{aligned}
\end{equation}
where $\hat{F}$ is the empirical distribution for i.i.d observations $X_1,X_2,\cdots,X_n$ of distribution function $F$. 
\end{lemma}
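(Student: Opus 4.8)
The plan is to reduce the statement to a second-moment bound for a \emph{completely degenerate} $V$-statistic and then run a combinatorial argument on the index multiset. First I would apply Lemma~\ref{lemma:demean} to replace $h$ by the centered kernel $\tilde h$ it produces, which satisfies $\int \tilde h(x_1,\cdots,x_m)\,\mathrm{d}F(x_k)=0$ for every $k\in[m]$; by the same lemma, evaluating at $G=\hat F$ gives the exact identity
\[
\int\!\cdots\!\int h(x_1,\cdots,x_m)\prod_{i=1}^m \mathrm{d}[\hat F(x_i)-F(x_i)]
=\int\!\cdots\!\int \tilde h(x_1,\cdots,x_m)\prod_{i=1}^m \mathrm{d}\hat F(x_i)
=\frac{1}{n^m}\sum_{i_1,\cdots,i_m=1}^n \tilde h(X_{i_1},\cdots,X_{i_m}).
\]
Since $\tilde h$ is a finite signed combination of partial $F$-integrals of $h$, Minkowski's inequality converts the hypothesis (finite second moment of $h$ evaluated at an arbitrary $m$-tuple of the arguments, repetitions allowed) into a bound $\mathbb{E}_F[\tilde h(X_{i_1},\cdots,X_{i_m})^2]\le C_h<\infty$ that is uniform over all such tuples, with $C_h$ depending only on $F$, $h$ and $m$.

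Next I would expand the square,
\[
\mathbb{E}_F\!\left[\Big(\tfrac{1}{n^m}\sum_{\mathbf i}\tilde h(X_{\mathbf i})\Big)^{2}\right]
=\frac{1}{n^{2m}}\sum_{\mathbf i,\mathbf j}\mathbb{E}_F\!\left[\tilde h(X_{\mathbf i})\,\tilde h(X_{\mathbf j})\right],
\qquad \mathbf i=(i_1,\cdots,i_m),\ \mathbf j=(j_1,\cdots,j_m)\in[n]^m,
\]
and exploit the key cancellation coming from degeneracy: if some index value occurs \emph{exactly once} in the combined multiset $\{i_1,\cdots,i_m,j_1,\cdots,j_m\}$, say as the entry $i_k$, then conditioning on all the other observations and integrating $X_{i_k}$ out against $F$ yields $0$, because $\tilde h$ is mean zero in its $k$-th slot and $X_{i_k}$ appears nowhere else. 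Hence only pairs $(\mathbf i,\mathbf j)$ in which every distinct index value occurs at least twice can contribute; as there are $2m$ slots, such a pair uses at most $m$ distinct labels from $\{1,\cdots,n\}$.

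Finally I would count and bound. The number of contributing pairs is at most $C_m n^m$: fix the equal-value pattern on the $2m$ slots (a set partition into blocks of size $\ge 2$, whose number depends only on $m$), then assign distinct labels from $[n]$ to the $\le m$ blocks, which costs $\le n^m$; and each contributing expectation is at most $\sqrt{\mathbb{E}_F[\tilde h(X_{\mathbf i})^2]}\,\sqrt{\mathbb{E}_F[\tilde h(X_{\mathbf j})^2]}\le C_h$ by Cauchy--Schwarz and the uniform moment bound. Combining, the sum is $\le n^{-2m}\cdot C_m n^m\cdot C_h=O(n^{-m})$, with the implicit constant depending only on $m$, $F$ and $h$ and not on $n$. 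The only genuinely delicate point is the bookkeeping in this last count, together with verifying that $C_h$ is truly uniform over repeated-argument evaluations of $\tilde h$; the degeneracy reduction in the first step is what makes the rest routine.
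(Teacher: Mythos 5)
Your proposal is correct and follows the same route as the paper: replace $h$ by the degenerate kernel $\tilde h$ from Lemma~\ref{lemma:demean}, rewrite the integral as a $V$-statistic, expand the second moment as a double sum over index $m$-tuples, use the mean-zero property of $\tilde h$ in each slot to kill any term with a singleton index, and count the surviving terms as $O(n^m)$. The only differences are cosmetic — you spell out the uniform moment bound for $\tilde h$ via Minkowski and the per-term bound via Cauchy--Schwarz, and you make the set-partition count explicit, whereas the paper states these as "clear" — but the argument is the same.
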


We replicate the proof of Lemma \ref{lemma:variance} here, for we use the same proof technique to prove Lemma \ref{lemma:four}
\begin{proof} 
Let $\tilde h$ be defined as in Lemma \ref{lemma:demean}. The left-hand side of (\ref{secondorder}) is given by 
$$
n^{-2m}\sum_{i_1=1}^n\cdots\sum_{i_m=1}^n\sum_{j_1=1}^n\cdots\sum_{j_m=1}^n\mathbb{E}_F \tilde h(X_{i_1},\cdots,X_{i_m})\tilde h(X_{j_1},\cdots,X_{j_m}).
$$

For $\int \tilde(X_1,\cdots,X_k)dF(x_i)=0$ for $1\le i\le k$, thus the typical term in the upper part may be possibly nonzero only if the sequence of indices $i_1,\cdots,i_m,j_1,\cdots,j_m$ contains each member at least twice. The number of such terms is clearly $O(n^m)$, and by the assumption that $\mathbb{E}_F\{h(X_1,\cdots,X_{i_m})^2\}<\infty$, we know (\ref{secondorder}) holds. 
\end{proof}

\begin{lemma}\label{lemma:four} Suppose that $\mathbb{E}_F\{h(X_1,\cdots,X_{i_m})^4\}<\infty$ for all $1\le i_1,\cdots,i_m\le m$. Then
\begin{equation}
    \begin{aligned}
    \label{fourthorder}
        \mathbb{E}_F\left\{\left(\int\cdots\int h(x_1,\cdots,x_m)\prod_{i=1}^m d[\hat{F}(x_i)-F(x_i)]\right)^4\right\}=O(n^{-2m}),
    \end{aligned}
\end{equation}
where $\hat{F}$ is the empirical distribution for i.i.d observations $X_1,X_2,\cdots,X_m$ of distribution function $F$. 
\end{lemma}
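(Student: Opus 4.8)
The plan is to reuse, almost verbatim, the argument given for Lemma \ref{lemma:variance}, now carrying a fourth power in place of a square. First I would apply Lemma \ref{lemma:demean} to replace $h$ by its centered version $\tilde h$, which represents the same functional and satisfies $\int \tilde h(x_1,\dots,x_m)\,\mathrm dF(x_i)=0$ for every $i\in[m]$. Since $\tilde h$ is a finite signed combination of partial $F$-averages of $h$, Jensen's inequality transfers the hypothesis: $\mathbb{E}_F\{\tilde h(X_{i_1},\dots,X_{i_m})^4\}<\infty$ for every index pattern $(i_1,\dots,i_m)$ (including ones with repeats). Then I would expand the left side of \eqref{fourthorder} as
\[
n^{-4m}\sum_{\mathbf i^{(1)}}\sum_{\mathbf i^{(2)}}\sum_{\mathbf i^{(3)}}\sum_{\mathbf i^{(4)}}\mathbb{E}_F\Big[\,\prod_{r=1}^4 \tilde h\big(X_{i^{(r)}_1},\dots,X_{i^{(r)}_m}\big)\Big],
\]
where each $\mathbf i^{(r)}=(i^{(r)}_1,\dots,i^{(r)}_m)$ ranges over $[n]^m$.

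Next I would show a summand is nonzero only when every value occurring among the $4m$ indices $\{i^{(r)}_\ell\}$ occurs at least twice: if some value $v$ appeared exactly once, conditioning on all other variables and integrating out $X_v$ gives $\int\tilde h(\cdots)\,\mathrm dF(x_v)=0$, which annihilates the whole expectation. A configuration in which all values have multiplicity $\ge 2$ uses at most $2m$ distinct elements of $[n]$, so the number of surviving configurations is $O(n^{2m})$, with the constant depending only on $m$.

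Finally I would bound each surviving term by a fixed constant via the generalized H\"older inequality,
\[
\Big|\mathbb{E}_F\Big[\,\prod_{r=1}^4 \tilde h\big(X_{i^{(r)}_1},\dots,X_{i^{(r)}_m}\big)\Big]\Big|\le \prod_{r=1}^4\Big(\mathbb{E}_F\big[\tilde h\big(X_{i^{(r)}_1},\dots,X_{i^{(r)}_m}\big)^4\big]\Big)^{1/4}\le M,
\]
where $M$ is the finite maximum of $\mathbb{E}_F[\tilde h(\cdots)^4]$ over the finitely many index patterns. Putting the pieces together, the left side of \eqref{fourthorder} is at most $n^{-4m}\cdot O(n^{2m})\cdot M=O(n^{-2m})$, as claimed.

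The only real subtlety beyond bookkeeping is the combinatorial count — making sure that ``each value appears at least twice among $4m$ slots'' indeed forces at most $2m$ distinct values, and that an isolated index really does kill the expectation, which relies precisely on the marginal mean-zero property produced by Lemma \ref{lemma:demean} together with independence of $X_1,\dots,X_n$. The moment transfer to $\tilde h$ and the H\"older bound are routine, so I do not expect any obstruction there.
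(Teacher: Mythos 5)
Your proof is correct and follows essentially the same route as the paper's: expand the fourth power over index tuples, use the marginal mean-zero property of the centered $\tilde h$ from Lemma \ref{lemma:demean} to kill any term with an isolated index, count the $O(n^{2m})$ surviving configurations, and bound each by a constant using the fourth-moment hypothesis. You are somewhat more careful than the paper in spelling out the moment transfer to $\tilde h$ and the generalized H\"older bound on each surviving term (the paper states these implicitly), but the argument is the same.
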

\begin{proof} The left-hand side of (\ref{fourthorder}) is given by
    \begin{equation*}
        \begin{aligned}
n^{-4m}\sum_{i^1_1=1}^n\cdots\sum_{i^1_m=1}^n\sum_{i^2_1=1}^n\cdots\sum_{i^2_m=1}^n\sum_{i^1_3=1}^n\cdots\sum_{i^3_m=1}^n\sum_{i^4_1=1}^n\cdots\sum_{i^4_m=1}^n\mathbb{E}\prod_{j=1}^4 h(X_{i^j_1},\cdots,X_{i^j_m})
        \end{aligned}
    \end{equation*}
For $\int h(X_1,\cdots,X_k)dF(x_i)=0$ for $1\le i\le k$, thus the typical term in the upper part may be possibly nonzero only if $i_1^1,\cdots,i_m^1,i_1^2,\cdots,i_m^2,i_1^3,\cdots,i_m^3,i_1^4,\cdots,i_m^4$ contains each member at least twice. The number of such terms is clearly $O(n^{2m})$, and by the assumption that $\mathbb{E}_F\{h(X_1,\cdots,X_{i_m})^4\}<\infty$, we know (\ref{fourthorder}) holds.
\end{proof}

The Von Mises Expansion exists for many functionals, for example, divergence \cite{serfling2009approximation,kandasamy2014influence} and (regularized) M-estimation \cite{serfling2009approximation,giordano2019swiss,giordano2019higher}. For the Taylor expansion \ref{eq:taylor} to be rigorous, from Lemma \ref{lemma:variance}, we showed that it is suffices to show that $n^{m/2}\sup_{0\le\lambda\le1}\left|\frac{d^{m+1}}{d\lambda^{m+1}}T(F+\lambda(F_n-F))\right|\overset{p}{\rightarrow}0$, or to bound the remainder $n^{m/2}R_{mn}=n^{m/2}\left(T(F_n)-T(F)-\sum_{k=1}^m \frac{1}{k!}d_k(F;F_n-F)\right)\overset{p}{\rightarrow}0$.

\subsection{Expansion on Normed Space}
An alternative functional derivative can be defined via Fr\'echet derivative if the space of distribution is equipped with a norm.
Let $\mathcal{D}:=\{\Delta:\Delta=c(G-H)|c\in\mathbb{R},G\in\mathcal{F},H\in\mathcal{F}\}$ be the linear space generated by differences . Let $\mathcal{D}$ be equipped with a norm $\|\cdot\|$.
The first order Fr\'echet derivative $T(F;G-F)$ is a linear functional which satisfies 
\begin{equation}\label{def: 1st frechet derivative}
    \lim_{G\to F}\frac{|T(G)-T(F)-T(F;G-F)|}{\|G-F \|} =0
\end{equation}
if it exists.

The following lemma shows that if the first order Fr\'echet derivative exists, then the first order von Mises differential exists and is linear.
\begin{lemma} [Section 6.2.2, Lemma A, \cite{serfling2009approximation}]\label{lemma: Gateaux and Frechet}
    Suppose that $T$ has a Fr\'echet derivative at $F$ with respect to $\|\cdot\|$,
    then for any $G$, $d_1 T(F;G-F)$ exists and 
    \[d_1 T(F;G_F)=T(F;G-F).\] 
\end{lemma}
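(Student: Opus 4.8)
The plan is to reduce the statement to a direct application of the definition of the Fr\'echet derivative along the line segment $\lambda \mapsto F + \lambda(G-F)$. First I would dispose of the trivial direction $G = F$, where both $d_1 T(F; G-F)$ and the Fr\'echet derivative $T(F; G-F)$ vanish. For $G \neq F$, set $G_\lambda := F + \lambda(G-F)$ for $\lambda \in (0,1]$; this is a point of the segment joining $F$ and $G$, hence lies in the admissible class $\mathcal{F}$, and $G_\lambda - F = \lambda(G-F) \in \mathcal{D}$, so $\|G_\lambda - F\| = \lambda\|G-F\|$ is well defined and tends to $0$ as $\lambda \to 0+$.

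Next I would invoke the hypothesis that $T$ is Fr\'echet differentiable at $F$ with respect to $\|\cdot\|$: since $G_\lambda \to F$ in norm as $\lambda \to 0+$, the defining relation \eqref{def: 1st frechet derivative} gives
\[
\frac{|T(G_\lambda) - T(F) - T(F; G_\lambda - F)|}{\|G_\lambda - F\|} \xrightarrow[\lambda\to 0+]{} 0.
\]
Using linearity of the Fr\'echet derivative $v \mapsto T(F; v)$, we have $T(F; G_\lambda - F) = \lambda\, T(F; G-F)$, while $\|G_\lambda - F\| = \lambda\|G-F\|$; dividing numerator and denominator by $\lambda$ shows the displayed quantity equals
\[
\frac{1}{\|G-F\|}\left| \frac{T(F + \lambda(G-F)) - T(F)}{\lambda} - T(F; G-F) \right|.
\]

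Consequently $\frac{T(F+\lambda(G-F)) - T(F)}{\lambda} \to T(F; G-F)$ as $\lambda \to 0+$. Since by definition $d_1 T(F; G-F) = \frac{d}{d\lambda} T(F + \lambda(G-F))\big|_{\lambda = 0+}$ is exactly this one-sided limit of difference quotients, it follows that $d_1 T(F; G-F)$ exists and equals $T(F; G-F)$, which is the claim. I do not anticipate any real obstacle here: the only point needing care is that the von Mises differential is a one-sided derivative, so it suffices to work with $\lambda \to 0+$ and the convex combinations $G_\lambda$, $\lambda \in (0,1]$, all of which remain in $\mathcal{F}$; the degenerate direction $G = F$ is handled separately as above, and the identity $T(F;\lambda(G-F)) = \lambda\, T(F;G-F)$ uses only the linearity built into the definition of the Fr\'echet derivative.
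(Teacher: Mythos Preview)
Your argument is correct and is precisely the standard textbook proof: restrict the Fr\'echet condition \eqref{def: 1st frechet derivative} to the ray $G_\lambda = F + \lambda(G-F)$, use homogeneity of the norm and linearity of $T(F;\cdot)$ to factor out $\lambda$, and recognize the resulting limit as the one-sided von Mises derivative. The paper does not supply its own proof of this lemma, citing Serfling's monograph instead, and your proof is exactly the argument one finds there.
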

\begin{lemma} [Section 6.2.2, Lemma B, \cite{serfling2009approximation}]
    Let $T$ have a differential at $F$ with respect to $\|\cdot\|$.
    Let $\{X_i\}_{i=1}^n$ be i.i.d. sample from $F$ and $\hat{T}_n=\frac{1}{n}\sum_{i=1}^{n}\delta_{X_i}$.
    If $\sqrt{n}\|F_n-F\|=O(1)$, then $\sqrt{n}(T(\hat{F})-T(F)-T(F;\hat{F}-F))\to^p 0$.
\end{lemma}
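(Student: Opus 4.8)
The plan is the standard two-factor decomposition of the first-order remainder. Write
\[
R(G) := T(G) - T(F) - T(F; G-F),
\]
so the quantity to control is exactly $\sqrt n\,R(\hat F)$, where $\hat F = \hat T_n = \tfrac1n\sum_{i=1}^n\delta_{X_i}$. By the very definition of the Fr\'echet derivative (Equation \eqref{def: 1st frechet derivative}), $R(G) = o(\|G-F\|)$ as $G\to F$; equivalently, the monotone modulus
\[
\rho(t) := \sup\bigl\{\, |R(G)|/\|G-F\| \,:\, 0 < \|G-F\| \le t \,\bigr\},\qquad \rho(0):=0,
\]
satisfies $\rho(t)\to 0$ as $t\to 0^+$. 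First I would record the pointwise bound $|R(\hat F)| \le \rho\bigl(\|\hat F - F\|\bigr)\,\|\hat F - F\|$, which holds for every realization (trivially so when $\hat F = F$). Multiplying by $\sqrt n$ gives
\[
\sqrt n\,|R(\hat F)| \;\le\; \rho\!\left(\|\hat F - F\|\right)\cdot \sqrt n\,\|\hat F - F\|.
\]

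Now the second factor is $O_p(1)$ by hypothesis. For the first factor, observe that $\sqrt n\,\|\hat F - F\| = O_p(1)$ forces $\|\hat F - F\| = O_p(n^{-1/2})\overset{p}{\to}0$, and I would then show $\rho(\|\hat F - F\|)\overset{p}{\to}0$ by unfolding $\varepsilon$--$\delta$: given $\varepsilon>0$, pick $t_\varepsilon>0$ with $\rho(t)\le\varepsilon$ for all $0\le t\le t_\varepsilon$ (possible since $\rho(0^+)=0$ and $\rho$ is nondecreasing); then $\Pr\bigl(\rho(\|\hat F - F\|)>\varepsilon\bigr) \le \Pr\bigl(\|\hat F - F\|>t_\varepsilon\bigr)\to 0$. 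Hence $\rho(\|\hat F - F\|)$ is $o_p(1)$, and the product of an $o_p(1)$ term with an $O_p(1)$ term is $o_p(1)$, so $\sqrt n\,R(\hat F)\overset{p}{\to}0$, which is the claim.

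The only delicate point --- the main (minor) obstacle --- is precisely that last step: $\rho$ need not be continuous and the ratio defining it is undefined at $G = F$, so one cannot simply invoke the continuous mapping theorem; the $\varepsilon$--$\delta$ argument above is what makes it rigorous, and it uses nothing about $T$ beyond Fr\'echet differentiability and the rate assumption $\sqrt n\|\hat F - F\| = O_p(1)$ (for the empirical measure in the relevant norm this rate would itself come from a Donsker-type or DKW bound, but it is taken as a hypothesis here). No moment conditions are needed, and Lemma~\ref{lemma: Gateaux and Frechet} is not invoked in the proof; it is the companion statement identifying $T(F;\cdot)$ with the von Mises differential $d_1T(F;\cdot)$, which is what lets one read the conclusion as asymptotic linearity with influence function $\phi_1$.
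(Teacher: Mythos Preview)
Your argument is correct and is precisely the standard proof of this result: factor the remainder as a product of a modulus-of-differentiability term going to zero and a norm term that is $O_p(1)$. The paper does not supply its own proof of this lemma---it is simply quoted from Serfling with attribution---so there is nothing to compare against beyond noting that your approach matches the textbook argument essentially line for line.
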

We can define higher order Fr\'echet derivatives in a similar way \cite{chang2005methods}. 
The second order derivative is defined to be a bilinear mapping which satisfies
\begin{equation}\label{def: 2nd frechet derivative}
    \|T(G)-T(F)-T(F;G-F)-\frac{1}{2}T(F;G-F,G-F)\|=o(\|G-F\|^2)
\end{equation}
if it exists, and the $m$th-order derivatives at $F$ are defined successively by 
\begin{equation}\label{def mth frechet derivative}
   \|T(G)-T(F)-\sum_{i=1}^m\frac{1}{m!}T(F;\underbrace{G-F,\cdots,G-F}_{m\text{ times}})\|=o(\|G-F\|^m) 
\end{equation}
if they exist.

\subsection{Kernel Mean Embeddings}\label{appendix: rkhs}
Suppose $\mathcal{X}$ is a fixed nonempty set and $k:\mathcal{X}\times \mathcal{X}\to\mathbb{R}$ be a real-valued positive definite kernel function associated with the reproducing kernel Hilbert space $\mathcal{H}$. Recall the reproducing kernel property says that the point-wise evaluation of $f\in\mathcal{H}$ can be expressed by its inner product with the kernel.
Suppose $\mathcal{F}$ consists of distributions over $\mathcal{X}$. Recall the definition of kernel mean embeddings (Definition 3.1 in \cite{muandet2017kernel})
\[ \mu:\mathcal{F}\to\mathcal{H},\quad F\mapsto \int k(\mathbf{x},\cdot)\mathrm{d} F( \mathbf{x}).\]
The following lemma states the condition that a distribution $F$ must satisfy to be embedded into $\mathcal{H}$.
\begin{lemma} [Lemma 3.1, \cite{muandet2017kernel}]\label{can be embed}
    If $\mathbb{E}_{X\sim F}\sqrt{k(X,X)}<\infty$, then $\mu(F)\in\mathcal{H}$ and $\mathbb{E}_F(g(X))=\langle g,\mu(F)\rangle_\mathcal{H} $.
\end{lemma}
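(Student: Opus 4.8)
The plan is to obtain $\mu(F)$ as the Riesz representer of the expectation functional on $\mathcal{H}$, and then to verify that this representer coincides with the integral $\int k(\mathbf{x},\cdot)\,\mathrm{d}F(\mathbf{x})$ appearing in the definition of the embedding.

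First I would introduce the linear map $T_F:\mathcal{H}\to\mathbb{R}$ given by $T_F(g):=\mathbb{E}_{X\sim F}g(X)$ and check that it is well defined and bounded. By the reproducing property, for every $\mathbf{x}\in\mathcal{X}$ we have $g(\mathbf{x})=\langle g,k(\mathbf{x},\cdot)\rangle_{\mathcal{H}}$, so Cauchy--Schwarz gives
\[
|g(\mathbf{x})|\le \|g\|_{\mathcal{H}}\,\|k(\mathbf{x},\cdot)\|_{\mathcal{H}}=\|g\|_{\mathcal{H}}\sqrt{k(\mathbf{x},\mathbf{x})}.
\]
Taking expectations and using the hypothesis $\mathbb{E}_{X\sim F}\sqrt{k(X,X)}<\infty$ shows $\mathbb{E}_F|g(X)|<\infty$, so $T_F$ is well defined on all of $\mathcal{H}$, and moreover $|T_F(g)|\le \|g\|_{\mathcal{H}}\,\mathbb{E}_F\sqrt{k(X,X)}$, so $T_F$ is a bounded linear functional with $\|T_F\|\le \mathbb{E}_F\sqrt{k(X,X)}$.

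Next I would invoke the Riesz representation theorem: there is a unique $\mu_F\in\mathcal{H}$ with $T_F(g)=\langle g,\mu_F\rangle_{\mathcal{H}}$ for every $g\in\mathcal{H}$, which is precisely the claimed identity $\mathbb{E}_F g(X)=\langle g,\mu_F\rangle_{\mathcal{H}}$. It then remains to identify $\mu_F$ with $\int k(\mathbf{x},\cdot)\,\mathrm{d}F(\mathbf{x})$. Evaluating $\mu_F$ at an arbitrary $\mathbf{y}\in\mathcal{X}$ and applying the reproducing property once more,
\[
\mu_F(\mathbf{y})=\langle \mu_F,k(\mathbf{y},\cdot)\rangle_{\mathcal{H}}=T_F\big(k(\mathbf{y},\cdot)\big)=\mathbb{E}_F k(X,\mathbf{y})=\int k(\mathbf{x},\mathbf{y})\,\mathrm{d}F(\mathbf{x}),
\]
so $\mu_F$ is exactly the pointwise integral, i.e. $\mu_F=\mu(F)\in\mathcal{H}$, and both assertions of the lemma follow.

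The only genuine subtlety — and the step I would be most careful about — is making sure the object written as ``$\int k(\mathbf{x},\cdot)\,\mathrm{d}F(\mathbf{x})$'' in the definition of $\mu$ is meaningful as an $\mathcal{H}$-valued integral and really coincides with the Riesz representer just constructed. This is handled by the estimate $\int\|k(\mathbf{x},\cdot)\|_{\mathcal{H}}\,\mathrm{d}F(\mathbf{x})=\int\sqrt{k(\mathbf{x},\mathbf{x})}\,\mathrm{d}F(\mathbf{x})<\infty$ which, together with (weak) measurability of $\mathbf{x}\mapsto k(\mathbf{x},\cdot)$, yields Bochner integrability; the Bochner integral then commutes with the bounded functionals $\langle\cdot,g\rangle_{\mathcal{H}}$, reproducing $\langle\mu(F),g\rangle_{\mathcal{H}}=\mathbb{E}_F g(X)$ and hence agreeing with $\mu_F$ by uniqueness of the Riesz representer. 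Alternatively one may bypass Bochner integration entirely and simply read the displayed pointwise identity as the definition of $\mu(F)$, in which case the Riesz argument above already delivers the full statement.
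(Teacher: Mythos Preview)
Your argument is correct and is exactly the standard proof of this fact (bound the evaluation-at-$X$ functional via Cauchy--Schwarz and the reproducing property, apply Riesz, then identify the representer pointwise). Note, however, that the paper does not actually supply its own proof of this lemma: it is stated as a quotation of Lemma~3.1 from \cite{muandet2017kernel} and used as background, so there is nothing in the paper to compare against beyond observing that your argument is the same one given in that reference.
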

It is easy to see that if $F$ and $G$ can be embedded into $\mathcal{H}$, so is the linear space generated by them.
Therefore, we can equip $\mathcal{D}$ with $\|\cdot\|_\mathcal{H}$ by embedding $\mathcal{D}$ into $\mathcal{H}$. This is also know as kernel maximum mean discrepancy (kernel MMD).

The next lemma states that with high probability the empirical distribution is close to the original distribution in kernel MMD.
\begin{lemma}[Theorem 3.4, \cite{muandet2017kernel}]
    Let $k:\mathcal{X}\times\mathcal{X}\to\mathbb{R}$ be a continuous positive definite kernel on a separable topological space $\mathcal{X}$ with $\sup_{x\in\mathcal{X}}k(x,x)\le C<\infty$.
    For any $\delta\in(0,1)$, with probability at least $1-\delta$,
    \[\sqrt{n}\|F_n-F\|_{\mathcal{H}}\le \sqrt{C}+\sqrt{2C\log{\frac{1}{\delta}}}\]
\end{lemma}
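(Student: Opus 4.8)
The plan is to reproduce the standard argument for this concentration bound (Theorem~3.4 of \cite{muandet2017kernel}): view $\|F_n-F\|_{\mathcal H}$ as a real-valued function of the i.i.d.\ sample, control its expectation by a direct second-moment computation, and then apply McDiarmid's bounded-differences inequality. Write $\mu(F_n)=\frac1n\sum_{i=1}^n k(X_i,\cdot)$ for the empirical kernel mean embedding, so that $\|F_n-F\|_{\mathcal H}=\|\mu(F_n)-\mu(F)\|_{\mathcal H}=:g(X_1,\dots,X_n)$. First I would observe that $\sup_{x}k(x,x)\le C$ forces $\mathbb{E}_{X\sim F}\sqrt{k(X,X)}\le\sqrt C<\infty$, so by Lemma~\ref{can be embed} both $\mu(F)$ and each section $k(X_i,\cdot)$ genuinely lie in $\mathcal H$ (with $\|k(x,\cdot)\|_{\mathcal H}=\sqrt{k(x,x)}\le\sqrt C$), which makes the Hilbert-space manipulations below legitimate.

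For the expectation, I would use Jensen's inequality, $\mathbb{E}\,g\le(\mathbb{E}\,g^2)^{1/2}$, and expand
\[
\mathbb{E}\,\|\mu(F_n)-\mu(F)\|_{\mathcal H}^2=\frac1{n^2}\sum_{i,j}\mathbb{E}\big\langle k(X_i,\cdot)-\mu(F),\,k(X_j,\cdot)-\mu(F)\big\rangle_{\mathcal H}.
\]
By independence and $\mathbb{E}\big[k(X,\cdot)-\mu(F)\big]=0$ every off-diagonal term vanishes, while each of the $n$ diagonal terms equals $\mathbb{E}\,k(X,X)-\|\mu(F)\|_{\mathcal H}^2\le\mathbb{E}\,k(X,X)\le C$. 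Hence $\mathbb{E}\,g^2\le C/n$ and $\mathbb{E}\,g\le\sqrt{C/n}$.

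Next I would check the bounded-differences property. Replacing the coordinate $X_i$ by any $X_i'$ alters $\mu(F_n)$ by $\frac1n\big(k(X_i,\cdot)-k(X_i',\cdot)\big)$, whose $\mathcal H$-norm is at most $\frac1n\big(\sqrt{k(X_i,X_i)}+\sqrt{k(X_i',X_i')}\big)\le 2\sqrt C/n$, so the reverse triangle inequality shows $g$ changes by at most $c_i:=2\sqrt C/n$. McDiarmid's inequality then gives, for every $t>0$,
\[
\mathbb{P}\big(g-\mathbb{E}\,g\ge t\big)\le\exp\!\Big(-\frac{2t^2}{\sum_{i=1}^n c_i^2}\Big)=\exp\!\Big(-\frac{n t^2}{2C}\Big),
\]
and choosing $t=\sqrt{2C\log(1/\delta)/n}$ makes the right-hand side equal to $\delta$. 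Thus, with probability at least $1-\delta$, $g\le\mathbb{E}\,g+t\le\sqrt{C/n}+\sqrt{2C\log(1/\delta)/n}$; multiplying through by $\sqrt n$ gives $\sqrt n\,\|F_n-F\|_{\mathcal H}\le\sqrt C+\sqrt{2C\log(1/\delta)}$.

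There is no single deep obstacle here — the result is classical — but a few points need care. One must (i) confirm the embeddability hypotheses so the inner-product computation in the expectation step is valid (this is exactly where boundedness of $k$ enters, and continuity of $k$ on a separable $\mathcal X$ also guarantees measurability of $g$ so that McDiarmid applies), and (ii) keep the constants sharp: using $\|k(x,\cdot)\|_{\mathcal H}=\sqrt{k(x,x)}\le\sqrt C$ for each of the two kernel sections gives $c_i=2\sqrt C/n$ and hence $\sum_i c_i^2 = 4C/n$, which is precisely what produces the factor $\sqrt{2C}$ in front of $\sqrt{\log(1/\delta)}$. The most error-prone step in practice will be this bounded-differences constant; a looser bound there would inflate the numerical factor and fail to match the stated inequality.
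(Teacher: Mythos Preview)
The paper does not give its own proof of this lemma; it simply quotes the result as Theorem~3.4 of \cite{muandet2017kernel} and uses it as background. Your argument is the standard one (Jensen on the second moment to bound the mean, then McDiarmid with $c_i=2\sqrt{C}/n$) and the constants are tracked correctly, so the proposal is fine as a self-contained proof of the cited statement.
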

For further information on kernel mean embeddings, we refer the readers to \cite{muandet2017kernel}.

\subsection{Multivariate Expansion on Banach Spaces}
In this section, we delve into the foundations of multivariate calculus within the context of Banach spaces. 
We begin by delving into the fundamentals of calculus within the framework of a Banach space \cite{chang2005methods}. Let $X$ denote a Banach space, and consider an open set $U\subset X$. Let $f:U\to\mathbb{R}$ be a map.
In parallel with our treatment of statistical functionals, we define the Fr\'echet derivatives at a point $x\in U$ as elaborated earlier (for example, Definition \ref{def: 1st frechet derivative}). We denote them by $Df(x)$, $D^2f(x)$, and $D^{m}f(x)$ respectively.

Let's now explore the extension of multivariate calculus to the setting of multiple Banach spaces. Suppose we have $m$ Banach spaces $X_1,\dots,X_m$ and denote the norm of $X_i$ as $\|\cdot\|_i$.
Then the direct product of these Banach spaces, denoted as $X:=X_1\times\cdots\times X_m$, forms a Banach space itself. This space is equipped with the direct product norm $\|(x_1,\cdots,x_m)\|:=\max_i \|x_i\|_i$, where $x_i\in X_i$ ($i\in \{1,\cdots,m\}$). 

Let $U\subset X$ be an open set. For a map $f:U\to\mathbb{R}$, we can define the Fréchet derivative at a point $x\in U$, referring to it as the \textbf{total derivative} of $f$ at $x$. We denote it by $Df(x)$.
Also, for any $x=(x_1,\cdots,x_m)\in U$ and any $i\in \{1,\cdots,m\}$,
we can define the $i$-th \textbf{partial derivative} of $f$ at $x$ to be the Fr\'echet derivative of $f$ with respect to $x_i$ while holding the other variables fixed, provided the limit exists. We denote it by $\frac{\partial f}{\partial x_j} (x)$. By definition this is a linear functional on $X_i$.

If all partial derivative exist and are continuous, then $f$ is said to be of class $C^1$. 
In this case, we can differentiate first-order partial derivatives to obtain second-order partial derivatives 
\[\frac{\partial^2 f}{\partial x_k \partial x_j} (x):=\frac{\partial}{\partial x_k}(\frac{\partial f}{\partial x_j})\]
if they exists. Continuing this way leads to higher-order partial derivatives.
A function $f$ is said to be \textbf{of class} $\mathbf{C^k}$ if all the partial derivatives of $f$ of order less than or equal to $k$ exist and are continuous.

The above definitions generalize important concepts of multivariate calculus to product Banach spaces. Many desirable properties of basic multivariate calculus also generalize to this setting. For example, we have the usual chain rule for total derivatives and the equality of mixed partial derivatives. For the readers' convenience, we state the equality for mixed partial derivatives briefly as follows.
Suppose $f$ is of class $C^k$, then the partial derivatives exist up to order $k$, and the mixed partial derivatives of any order are independent of the order of differentiation. The proofs are easily generalized from the corresponding proofs in standard multivariate calculus.

For a multivariate function, the partial derivatives is easier to calculate than the total derivatives. The good news is that we can use the partial derivatives to determine the total derivative. Specifically, we have 
\begin{theorem}\label{thm: partial and total}
   Let $f$ be differentiable at $x\in U$. Then all of the partial derivatives of $f$ at $x$ exist, and 
   \[Df(x) = (\frac{\partial f}{\partial x_1} (x),\cdots,\frac{\partial f}{\partial x_m} (x))\]
\end{theorem}
\begin{proof}
    For $v=(v_1,\cdots,v_m)$ with norm samll enough such that $x+v\in U$, let $R(v)=f(x+v)-f(x)-Df(x)(v)$. The fact that $f$ is differentiable at $x$ implies that $f(v)/\|v\|$ goes to zero as $\|v\|\to 0$. The $i$-th partial derivative of $f$ at $x$ exists because it is exactly the Fr\'echet derivative when setting $v_j=0$ for all $j\ne i$, i.e., \[\frac{\partial f}{\partial x_i} (x)(v_i)=Df(x)(v_i).\]
    As $Df(x)(v)$ is linear in $v$, we obtain the desired result.
\end{proof}
We can also generalize the multivariate Taylor's theorem to Banach spaces.
In order to express it concisely, it helps to introduce some shorthand notation. For any $n$ tuple $I=(i_1,\cdots,i_n)$ of indices, and 
\[\partial^I = \frac{\partial^m}{\partial x_{i_1}\cdots \partial x_{i_n}}\]
\[(x-a)^I = (x_{i_1}-a_{i_1})\cdots (x_{i_n}-a_{i_n}). \]
Then we have
\begin{theorem}\label{thm: multi Taylor}
    Suppose $f$ is of class $C^{k+1}$ for some $k\ge 0$, then 
    \[f(x)=f(x_0)+\sum_{i=1}^k \frac{1}{i!}\sum_{I:|I|=i}\partial^If(x_0)(x-x_0)^I+\epsilon(x),\]
    where $\epsilon(x)=\frac{1}{k!}\sum_{I:|I|=k+1}(x-x_0)^I\int_0^1(1-t)^k\partial^If(x_0+t(x-x_0))\mathrm{d}t$
\end{theorem}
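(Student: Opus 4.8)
The plan is to reduce the statement to the classical one-variable Taylor theorem with integral remainder by restricting $f$ to the segment joining $x_0$ and $x$. As the formula implicitly requires, I would assume the segment $\{x_0+t(x-x_0):t\in[0,1]\}$ lies in the open set $U$ (otherwise the statement is local). Set $g(t)=f(x_0+t(x-x_0))$ for $t\in[0,1]$. Since $t\mapsto x_0+t(x-x_0)$ is affine and $f$ is of class $C^{k+1}$, the composite $g$ is $C^{k+1}$ on $[0,1]$, so the one-dimensional Taylor formula applies:
\begin{equation*}
  g(1)=\sum_{i=0}^{k}\frac{g^{(i)}(0)}{i!}+\frac{1}{k!}\int_0^1(1-t)^k\,g^{(k+1)}(t)\,\mathrm{d}t .
\end{equation*}
Since $g(0)=f(x_0)$ and $g(1)=f(x)$, everything reduces to expressing $g^{(i)}$ through the partial derivatives of $f$.

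Writing $v=x-x_0=(v_1,\dots,v_m)$, the chain rule for Fr\'echet derivatives together with Theorem \ref{thm: partial and total} gives $g'(t)=Df(x_0+tv)(v)=\sum_{j=1}^m\frac{\partial f}{\partial x_j}(x_0+tv)(v_j)$. Differentiating again and decomposing each resulting total derivative into partials in the same way, one obtains by induction on $i$ that
\begin{equation*}
  g^{(i)}(t)=\sum_{I:\,|I|=i}\partial^I f(x_0+tv)\,(x-x_0)^I ,\qquad 0\le i\le k+1,
\end{equation*}
where $\partial^I f(\cdot)(x-x_0)^I$ denotes the (symmetric) multilinear functional $\partial^I f(\cdot)$ evaluated on the tuple of increments indexed by $I$. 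The hypothesis $f\in C^{k+1}$ enters exactly here: it guarantees that all mixed partials of order at most $k+1$ exist, are continuous, and are independent of the order of differentiation — so that $\partial^I f$ is well defined and the $(x-x_0)^I$ notation is unambiguous — and it guarantees continuity of $g^{(k+1)}$, which is what legitimizes the integral form of the remainder used above.

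Substituting the formula for $g^{(i)}$ into the displayed one-dimensional expansion, the $i$-th term ($1\le i\le k$) becomes $\frac{1}{i!}\sum_{|I|=i}\partial^I f(x_0)(x-x_0)^I$, and the remainder becomes $\frac{1}{k!}\int_0^1(1-t)^k\sum_{|I|=k+1}\partial^I f(x_0+t(x-x_0))(x-x_0)^I\,\mathrm{d}t$. Interchanging the finite sum with the integral and pulling the fixed increment tuple $(x-x_0)^I$ outside the integral — justified since the integrand is a continuous scalar function of $t$ and multilinear evaluation is linear and continuous — turns this into $\sum_{|I|=k+1}(x-x_0)^I\cdot\frac{1}{k!}\int_0^1(1-t)^k\partial^I f(x_0+t(x-x_0))\,\mathrm{d}t=\epsilon(x)$, which completes the argument.

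The substance of the proof is entirely the one-line reduction to the single-variable Taylor theorem, so I do not expect any real difficulty. The one place that needs care is the inductive step computing $g^{(i)}$: one must keep track of which argument slot of $\partial^I f$ receives which increment $v_j$, and invoke equality of mixed partial derivatives both to reorganize the sum over ordered index tuples and to make the symmetric notation $(x-x_0)^I$ meaningful. The only other point worth stating explicitly is the standing assumption that $[x_0,x]\subset U$, without which $g$ fails to be defined on all of $[0,1]$.
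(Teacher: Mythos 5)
Your proof is correct and follows essentially the same strategy as the paper's (very terse) sketch: the paper tells the reader to combine a higher-order version of Theorem~\ref{thm: partial and total} with the Banach-space Taylor theorem (Theorem~1.1.10 in \cite{chang2005methods}), and your inductive computation of $g^{(i)}(t)=\sum_{|I|=i}\partial^I f(x_0+tv)(x-x_0)^I$ via repeated applications of the chain rule and Theorem~\ref{thm: partial and total} is exactly that higher-order generalization, while your segment restriction $g(t)=f(x_0+t(x-x_0))$ plus the classical one-variable Taylor formula with integral remainder is precisely how the cited Banach-space Taylor theorem is proved. In short, you have unpacked the paper's citation into a self-contained argument rather than taking a genuinely different route, and the two points you flag as needing care (the standing assumption $[x_0,x]\subset U$, and bookkeeping the argument slots in the inductive step while invoking equality of mixed partials) are indeed the only places where the argument requires attention.
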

\begin{proof}
    The proof can be carried out by first generalizing Theorem \ref{thm: partial and total} to higher-order derivatives and then use Theorem 1.1.10 in \cite{chang2005methods}.
\end{proof}

As an example, let $\mathcal{H}_i$ be Hilbert spaces. Consider the following function defined via the tensorized inner product
\[f(x_1,\cdots,x_m)=\langle h, x_1\otimes\cdots\otimes x_m\rangle_{\mathcal{H}_1\otimes\cdots\otimes\mathcal{H}_m}\]
then
\[\frac{\partial f}{\partial x_j} (x)(v_j)=\langle h, x_1\otimes\cdots \otimes x_{j-1}
\otimes v_j\otimes x_{j+1}\otimes\cdots\otimes x_m\rangle_{\mathcal{H}_1\otimes\cdots\otimes\mathcal{H}_m}\]
and 
\[ \frac{\partial^2 f}{\partial x_i\partial x_j}[v_1,v_2]=\left\{\begin{matrix}
 \langle h, x_1\otimes\cdots \otimes v_i\otimes\cdots \otimes v_j\otimes\cdots\otimes x_m\rangle_{\mathcal{H}_1\otimes\cdots\otimes\mathcal{H}_m} & i\ne j\\
 0 & i=j
\end{matrix}\right.\]
We can derive higher order derivatives similarly. Moreover, any derivative of order strictly larger than $m$ is equal to zero, for example,
\[\partial^{m+1} f = 0\]

Hence we have by Theorem \ref{thm: multi Taylor} that
\begin{equation}\label{equation: finite-horizon taylor}
    f(x)=f(x_0)+\sum_{i=1}^m \frac{1}{i!}\sum_{I:|I|=i}\partial^If(x_0)(x-x_0)^I
\end{equation}

\section{Proof of Main Results}\label{appendix: theory}
Following \cite{lam2022subsampling}, we first verify the improvement of our Orthogonal Bootstrap under the following smoothness assumption of our target performance measure. Without explicit statement, we assume $n_1=n_2=\cdots=n_m=n$ for simplicity.

\begin{assumption}[Smoothness at True Input Model, Assumption 3 \cite{lam2022subsampling}]\label{assumption:true smoothness}
$$
    \phi( F_1,\cdots, F_m)=\phi(\hat F_1,\cdots,\hat F_m)+\sum_{i=1}^m\int \phi_i(x)\mathrm{d} \hat F_i(x)+\delta
$$
satisfies $\mathbb{E}[\delta^2]=o(n^{-1})$, $\Variance_{X_i\sim F_i} [\phi_i(X_i)]>0$, and $\mathbb{E}_{X_i\sim F_i}[\phi_i^4(X_i)]<\infty$ for all $i\in [m]$.
\end{assumption}
This assumption guarantees the existence of non-degenerate influence functions with respect to the true model.
The first condition $\mathbb{E}[\delta^2]=o(n^{-1})$ guarantees that the error of the approximation by influence functions is negligible when the number of data $n$ is large. Indeed, the influence function term is asymptotically of order $\Theta_p(n^{-\frac{1}{2}})$ by the central limit theorem, whereas the error $\delta$ is implied to be $o_p(n^{-\frac{1}{2}})$.
The second condition $\Variance_{X_i\sim F_i} [\phi_i(X_i)]>0$ says that the influence function $\phi_i$ are non-degenerate. The last assumption is needed to control the variance of the influence function at the empirical model, see Corollary \ref{cor:nonortho order}.

\begin{assumption}[Smoothness at Empirical Input Model, Assumption 4 \cite{lam2022subsampling}]\label{assumption:smoothness}
$$
    \phi(\hat F_1^b,\cdots,\hat F_m^b)=\phi(\hat F_1,\cdots,\hat F_m)+\sum_{i=1}^m\int \mathcal{I}_i^\phi(x)\mathrm{d} \hat F_i^b(x)+\epsilon
$$
satisfies $\mathbb{E}_*[\epsilon^4]=O_p(n^{-4})$ and $\mathbb E [(\mathcal{I}_i^\phi-\phi_i)^4(X_{i,1})]=o(1)$ for all $i\in [m]$.
\end{assumption}
This assumption guarantees the existence of non-degenerate influence functions with respect to the empirical model. The moment condition on the remainder is needed for controlling the variance of our orthogonal bootstrap estimator. Since a particular empirical model is chosen from the set of empirical models generated by the true model, the condition is described in terms of stochastic order. If the performance measure is sufficiently smooth, for example second order von Mises differential exists, then the second order differential is of order $\Theta_p(\frac{1}{n})$. Therefore expecting $\epsilon$ to be of order $O_p(\frac{1}{n})$ is reasonable. The last assumption entails the observation that the empirical distributions $\hat F_i$ converges to true ones $F_i$ as the data size $n$ grows and hence the empirical influence functions $\mathcal{I}_i^\phi$ are expect to approach the influence functions $\phi_i$ associated with the true input distributions.

\subsection{Variance of Non-Orthogonal Part}
When the influence functions at the empirical model exist, we can obtain a closed form formula for the variance of the non-orthogonal part. As this result is used in the algorithm, we provide the full result and do not assume $n_1=\cdots=n_m=n$ here.
\begin{lemma}\label{lem: 1st var}
If the influence functions exist and are $1$-linear, i.e.
$$\nabla \phi(\hat F_1,\cdots,\hat F_m)\left[\hat F_{1}^{b}-\hat F_1,\cdots,\hat F_{m}^{b}-\hat F_m\right]=\sum_{i=1}^m\int \mathcal{I}_i^\phi(x)\mathrm{d} \hat F_i^b(x),$$
then we have
{
\begin{equation}
        \Variance_* \left(\nabla \phi(\hat F_1,\cdots,\hat F_m)\left[\hat F_{1}^{b}-\hat F_1,\cdots,\hat F_{m}^{b}-\hat F_m\right]\right)=\sum_{i=1}^m\frac{1}{n_i^2}\sum_{j=1}^{n_i}(\mathcal{I}_i^\phi (X_{i,j}))^2
\end{equation}
}
\end{lemma}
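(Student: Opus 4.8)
The plan is to reduce the statement to a direct second-moment computation after exploiting the mean-zero property of the influence functions and the independence of the bootstrap resamples across the $m$ input distributions. First I would unpack the notation: since $\hat F_i^b=\frac{1}{n_i}\sum_{j=1}^{n_i}\delta_{x_{i,j}^b}$ with $x_{i,1}^b,\dots,x_{i,n_i}^b$ drawn i.i.d.\ from $\hat F_i$, the hypothesized $1$-linearity gives
\[
\nabla \phi(\hat F_1,\cdots,\hat F_m)\left[\hat F_{1}^{b}-\hat F_1,\cdots,\hat F_{m}^{b}-\hat F_m\right]=\sum_{i=1}^m\frac{1}{n_i}\sum_{j=1}^{n_i}\mathcal{I}_i^\phi(x_{i,j}^b).
\]
The key preliminary fact is that each influence function has mean zero under the corresponding empirical distribution, i.e.\ $\frac{1}{n_i}\sum_{j=1}^{n_i}\mathcal{I}_i^\phi(X_{i,j})=\mathbb{E}_{X\sim\hat F_i}[\mathcal I_i^\phi(X)]=0$ for every $i$; this is the normalization built into the Riesz representer discussed before the lemma, and it can also be read off by substituting $\hat F_i^b=\hat F_i$ into the displayed identity. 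Consequently $\mathbb{E}_*$ of the whole expression is $0$, so $\Variance_*$ is just its second moment under the resampling law.

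Next I would expand the square. Because the resamples $\{x_{i,\cdot}^b\}$ are drawn independently for distinct $i$, the cross terms factor as products of $\mathbb E_*$'s, each of which vanishes by the mean-zero property; hence
\[
\Variance_*\!\left(\sum_{i=1}^m\frac{1}{n_i}\sum_{j=1}^{n_i}\mathcal{I}_i^\phi(x_{i,j}^b)\right)=\sum_{i=1}^m \Variance_*\!\left(\frac{1}{n_i}\sum_{j=1}^{n_i}\mathcal{I}_i^\phi(x_{i,j}^b)\right).
\]
Within each block the summands $\mathcal I_i^\phi(x_{i,j}^b)$, $j=1,\dots,n_i$, are i.i.d.\ copies of $\mathcal I_i^\phi$ evaluated at a draw from $\hat F_i$, so the block variance equals $\frac{1}{n_i^2}\cdot n_i\cdot\Variance_{X\sim\hat F_i}(\mathcal I_i^\phi(X))=\frac{1}{n_i}\Variance_{X\sim\hat F_i}(\mathcal I_i^\phi(X))$. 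Finally, using mean-zero once more, $\Variance_{X\sim\hat F_i}(\mathcal I_i^\phi(X))=\mathbb E_{X\sim\hat F_i}[(\mathcal I_i^\phi(X))^2]=\frac{1}{n_i}\sum_{j=1}^{n_i}(\mathcal I_i^\phi(X_{i,j}))^2$, and substituting this back yields the claimed formula $\sum_{i=1}^m\frac{1}{n_i^2}\sum_{j=1}^{n_i}(\mathcal I_i^\phi(X_{i,j}))^2$.

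There is no serious obstacle here; the computation is routine. The only point requiring care is the justification of the mean-zero property of $\mathcal I_i^\phi$ with respect to $\hat F_i$ and of the across-$i$ independence of the resamples, both of which are either part of the standing setup or immediate from the definition of the bootstrap procedure in Algorithm~\ref{alg:varestimateob}. If one did not want to invoke mean-zero as a normalization, one could alternatively define $\tilde{\mathcal I}_i^\phi:=\mathcal I_i^\phi-\frac{1}{n_i}\sum_{j}\mathcal I_i^\phi(X_{i,j})$, observe that $\nabla\phi$ is unchanged under this recentring (again by plugging in $\hat F_i^b=\hat F_i$), and run the same argument; this is the only place any genuine choice enters the proof.
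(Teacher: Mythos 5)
Your proof is correct and follows essentially the same route as the paper's: identify the gradient with the bootstrap average of the (mean-zero) influence functions, use independence across the $m$ resampled distributions to drop cross-covariances, use i.i.d.\ sampling within each block to reduce to a single-draw variance, and evaluate that variance as the empirical second moment. One small caveat on your parenthetical alternative justification of mean-zero: substituting $\hat F_i^b=\hat F_i$ for all $i$ only shows that $\sum_{i=1}^m\int\mathcal I_i^\phi\,\mathrm{d}\hat F_i=0$, not that each summand vanishes individually, so the per-block mean-zero property really does come from the Riesz normalization stated in the text (or, equivalently, the recentring you describe), exactly as the paper invokes it ``by definition.''
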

\begin{proof}
    Notice that $\mathbb{E}_{X_i\sim \hat F_i^b}\mathcal{I}_i^\phi(X_i)=\frac{1}{n_i}\sum_{k=1}^{n_i} \mathcal{I}_i^\phi(X^b_{i,k})$, where $X^b_{i,k}\sim \hat{F_i}$. Thus we have
    \begin{align*}
        & \Variance_* (\sum_{i=1}^{m}\mathbb{E}_{X_i\sim \hat F_i^b}\mathcal{I}_i^\phi(X_i)) = \sum_{i_1=1}^{m} \sum_{i_2=1}^{m} \text{Cov}_* (\mathbb{E}_{X_{i_1}\sim \hat F_{i_1}^b}\mathcal{I}_{i_1}^\phi(X_{i_1}),\mathbb{E}_{X_{i_2}\sim \hat F_{i_2}^b}\mathcal{I}_{i_2}^\phi(X_{i_2}))\\
        &= \sum_{i=1}^{m} \text{Var}_* (\mathbb{E}_{X_{i}\sim \hat F_{i}^b}\mathcal{I}_{i_1}^\phi(X_i)) + \sum_{i\ne j}\text{Cov}_* (\mathbb{E}_{X_{i}\sim \hat F_{i}^b}\mathcal{I}_{i}^\phi(X_i),\mathbb{E}_{X_{j}\sim \hat F_{j}^b}\mathcal{I}_{j}^\phi(X_{j}))
    \end{align*}
    As $\hat F_{i}^b$ is independent of $\hat F_{j}^b$, the second term in the above equation equals to zero. Therefore,
    \begin{align*}
        &\Variance_* \left(\nabla \phi(\hat F_1,\cdots,\hat F_m)\left[\hat F_{1}^{b}-\hat F_1,\cdots,\hat F_{m}^{b}-\hat F_m\right]\right) =\sum_{i=1}^{m} \Variance_* (\mathbb{E}_{X_{i}\sim \hat F_{i}^b}\mathcal{I}_{i}^\phi(X_i))\\&=\sum_{i=1}^{m}\frac{1}{n_i^2}\sum_{k,j} \text{Cov}_*( \mathcal{I}_i^\phi(X^b_{i,k}), \mathcal{I}_i^\phi(X^b_{i,j}))=\sum_{i=1}^{m}\frac{1}{n_i}\Variance_*(\mathcal{I}_i^\phi (X_{i,1}^b))
    \end{align*}
    where the last equality holds by the independence between $X^b_{i,k}$ and $X^b_{i,j}$ $(j\ne k)$.
    Note that as $\mathbb E_*(\mathcal{I}_i^\phi (X_{i,1}^b))=0$ by definition, \[\text{Var}_*(\mathcal{I}_i^\phi (X_{i,1}^b))=\frac{1}{n_i}\sum_{j=1}^n(\mathcal{I}_i^\phi (X_{i,j}))^2,\]
    thus we obtain the desired result.
\end{proof}

From now on we continue to assume $n_1=\cdots=n_m=n$. We aim to determine the order of the random variable $\Variance_* \left(\nabla \phi(\hat F_1,\cdots,\hat F_m)\left[\hat F_{1}^{b}-\hat F_1,\cdots,\hat F_{m}^{b}-\hat F_m\right]\right)$ under Assumption \ref{assumption:true smoothness} and Assumption \ref{assumption:smoothness}. As the random variable is always positive, we use the first moment method, \emph{i.e.} the Markov's inequality.

%\begin{lemma}[Chebyshev's inequality]
%For a random variable $X$ with $\mathbb EX\ne 0$ and $\Variance X<\infty$, we have
%    $$\mathbb P\left(|\frac{X}{\mathbb E X}-1|>\epsilon \right)\leq \frac{\Variance X}{\epsilon^2 (\mathbb E X)^2}.$$
%\end{lemma}

Now we can prove the following two results.
\begin{corollary}\label{cor:nonortho order}
Under Assumption \ref{assumption:true smoothness} and Assumption \ref{assumption:smoothness}, we have
{
\begin{equation}
        \Variance_* \left(\nabla \phi(\hat F_1,\cdots,\hat F_m)\left[\hat F_{1}^{b}-\hat F_1,\cdots,\hat F_{m}^{b}-\hat F_m\right]\right)=\Theta_p(\frac{1}{n})
\end{equation}
}
\end{corollary}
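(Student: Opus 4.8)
The plan is to use the closed-form expression from Lemma \ref{lem: 1st var}, which gives
\[
\Variance_* \left(\nabla \phi(\hat F_1,\cdots,\hat F_m)\left[\hat F_{1}^{b}-\hat F_1,\cdots,\hat F_{m}^{b}-\hat F_m\right]\right)=\sum_{i=1}^m\frac{1}{n^2}\sum_{j=1}^{n}(\mathcal{I}_i^\phi (X_{i,j}))^2,
\]
and then show that, for each fixed $i$, the quantity $\frac{1}{n}\sum_{j=1}^n (\mathcal{I}_i^\phi(X_{i,j}))^2$ concentrates around a positive constant, so that the whole sum is $\Theta_p(n^{-1})$. Since the random variable in question is nonnegative, as the excerpt hints, I would split the argument into a matching upper bound (via Markov's inequality / the first moment method) and a lower bound.

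For the \textbf{upper bound}, I would bound $\mathbb{E}\big[\frac{1}{n}\sum_{j=1}^n (\mathcal{I}_i^\phi(X_{i,j}))^2\big] = \mathbb{E}[(\mathcal{I}_i^\phi(X_{i,1}))^2]$ (using that the $X_{i,j}$ are i.i.d.\ from $F_i$, though one must be careful that $\mathcal{I}_i^\phi$ itself depends on the empirical data $\hat F_i$). The cleanest route is to compare $\mathcal{I}_i^\phi$ with the population influence function $\phi_i$: write $\mathcal{I}_i^\phi = \phi_i + (\mathcal{I}_i^\phi - \phi_i)$, use $(a+b)^2 \le 2a^2 + 2b^2$, and invoke $\mathbb{E}[\phi_i^4(X_{i,1})]<\infty$ from Assumption \ref{assumption:true smoothness} together with $\mathbb{E}[(\mathcal{I}_i^\phi-\phi_i)^4(X_{i,1})]=o(1)$ from Assumption \ref{assumption:smoothness} (with a Cauchy–Schwarz or Jensen step to pass from fourth moments to second moments, and to handle the dependence of $\mathcal{I}_i^\phi$ on the sample). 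This shows $\mathbb{E}\big[\sum_i \frac{1}{n^2}\sum_j (\mathcal{I}_i^\phi(X_{i,j}))^2\big] = O(n^{-1})$, hence the quantity is $O_p(n^{-1})$ by Markov.

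For the \textbf{lower bound}, I would show $\frac{1}{n}\sum_{j=1}^n (\mathcal{I}_i^\phi(X_{i,j}))^2 = \Omega_p(1)$. The natural approach is again the decomposition $\mathcal{I}_i^\phi = \phi_i + (\mathcal{I}_i^\phi-\phi_i)$: the term $\frac{1}{n}\sum_j \phi_i^2(X_{i,j})$ converges in probability to $\mathbb{E}[\phi_i^2(X_{i,1})]$ by the law of large numbers (finite fourth moment gives finite variance), and $\mathbb{E}[\phi_i^2(X_{i,1})] = \Variance_{X_i\sim F_i}[\phi_i(X_i)] > 0$ by Assumption \ref{assumption:true smoothness} (the influence function has mean zero). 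The cross term and the $(\mathcal{I}_i^\phi-\phi_i)^2$ term are negligible by the same $o(1)$-moment bound and Cauchy–Schwarz. Combining, $\frac{1}{n}\sum_j (\mathcal{I}_i^\phi(X_{i,j}))^2 \to^p \mathbb{E}[\phi_i^2(X_{i,1})] > 0$, so the full sum is $\Theta_p(n^{-1})$.

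\textbf{Main obstacle.} The delicate point is that $\mathcal{I}_i^\phi$ is not a fixed function but depends on the empirical distribution $\hat F_i$, so $\frac{1}{n}\sum_j (\mathcal{I}_i^\phi(X_{i,j}))^2$ is not a clean i.i.d.\ average and the law of large numbers does not apply directly. The comparison with the fixed population influence function $\phi_i$ is precisely what circumvents this: Assumption \ref{assumption:smoothness} is tailored so that $\mathcal{I}_i^\phi - \phi_i$ is small in fourth moment (in stochastic order over the data), which lets one transfer all asymptotics to the well-behaved average $\frac{1}{n}\sum_j \phi_i^2(X_{i,j})$. Making the $o(1)$-in-stochastic-order bookkeeping precise — ensuring the error terms are genuinely $o_p(1)$ after summing over $i=1,\dots,m$ — is the part requiring the most care, but it is routine given the two assumptions.
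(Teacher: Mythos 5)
Your proof is correct and takes essentially the same route as the paper's: both start from the closed form $\xi := \sum_{i=1}^m n^{-2}\sum_{j=1}^n(\mathcal{I}_i^\phi(X_{i,j}))^2$ given by Lemma \ref{lem: 1st var}, and both compare $\mathcal{I}_i^\phi$ against the fixed population influence function $\phi_i$ via Assumptions \ref{assumption:true smoothness} and \ref{assumption:smoothness}. The paper computes $\mathbb{E}\xi = n^{-1}\sum_{i=1}^m \mathbb{E}(\mathcal{I}_i^\phi(X_{i,1}))^2$, shows this is $\Theta(1/n)$ by showing $0 < \mathbb{E}(\mathcal{I}_i^\phi(X_{i,1}))^2 < \infty$ (using exactly the decomposition $\mathcal{I}_i^\phi = \phi_i + (\mathcal{I}_i^\phi - \phi_i)$ you propose), and then asserts $\xi = \Theta_p(1/n)$ ``via a standard first moment argument using Markov's inequality.''

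Where your proposal is genuinely more careful is the lower bound direction, and the extra care is warranted. Markov's inequality together with $\mathbb{E}\xi = O(1/n)$ does give $\xi = O_p(1/n)$, but $\mathbb{E}\xi = \Omega(1/n)$ together with $\xi \ge 0$ does \emph{not} by itself yield $\xi = \Omega_p(1/n)$: a nonnegative variable can have mean of order $1/n$ while vanishing, or being $o(1/n)$, with nonvanishing probability. Your law-of-large-numbers argument --- split $n^{-1}\sum_j(\mathcal{I}_i^\phi(X_{i,j}))^2$ into the genuine i.i.d.\ average $n^{-1}\sum_j \phi_i^2(X_{i,j})$, which converges in probability to $\Variance_{X_i\sim F_i}[\phi_i(X_i)] > 0$, plus a cross term and the term $n^{-1}\sum_j(\mathcal{I}_i^\phi - \phi_i)^2(X_{i,j})$, both $o_p(1)$ by the $o(1)$ fourth-moment bound in Assumption \ref{assumption:smoothness} and Cauchy--Schwarz --- actually establishes $n\xi \to^p \sum_{i=1}^m \Variance_{X_i\sim F_i}[\phi_i(X_i)] > 0$, which delivers both directions of the $\Theta_p$ claim simultaneously. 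This supplies the step that the paper's phrase ``first moment argument'' glosses over, so your version is a strict rigorization of the paper's. Your diagnosis of the main obstacle (that $\mathcal{I}_i^\phi$ is sample-dependent, so $n^{-1}\sum_j(\mathcal{I}_i^\phi(X_{i,j}))^2$ is not a clean i.i.d.\ average, circumvented by transferring to $\phi_i$) is also exactly the role Assumption \ref{assumption:smoothness} plays in the paper's own calculation.
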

\begin{proof}
    Let $$\xi=\Variance_* \left(\nabla \phi(\hat F_1,\cdots,\hat F_m)\left[\hat F_{1}^{b}-\hat F_1,\cdots,\hat F_{m}^{b}-\hat F_m\right]\right),$$ then $\xi>0$ is a random variable with respect to the probability distributions $F_1,\cdots,F_m$. The expectation of $\xi$ is $\Theta(\frac{1}{n})$ as
    \[\mathbb E \xi=\frac{1}{n}\sum_{i=1}^m\mathbb E(\mathcal{I}_i^\phi (X_{i,1}))^2,\]
    where we combine $0<\Variance_{X_i\sim F_i} [\phi_i(X_i)]<\infty$ and $\mathbb E [(\mathcal{I}_i^\phi-\phi_i)^2(X_{i,1})]=o(1)$ to show that $0<\mathbb E(\mathcal{I}_i^\phi (X_{i,1}))^2<\infty$.
    
    %Similar to the calculation in Lemma \ref{lem: 1st var}, the variance of $\xi$ is 
    %\begin{align*}
    %    \Variance \xi&=\Variance \sum_{i=1}^m\frac{1}{n^2}\sum_{j=1}^{n}(\mathcal{I}_i^\phi (X_{i,j}))^2\\
    %    &=\frac{1}{n^4}\sum_{i_1,i_2}\text{Cov}(\sum_{j=1}^{n}(\mathcal{I}_{i_1}^\phi (X_{i_1,j}))^2,\sum_{j=1}^{n}(\mathcal{I}_{i_2}^\phi (X_{i_2,j}))^2)\\
    %    &=\frac{1}{n^4}\sum_{i}\text{Cov}(\sum_{j=1}^{n}(\mathcal{I}_{i}^\phi (X_{i,j}))^2,\sum_{j=1}^{n}(\mathcal{I}_{i}^\phi (X_{i,j}))^2)\\
    %    &=\frac{1}{n^4}\sum_{i,j,k}\text{Cov}((\mathcal{I}_{i}^\phi (X_{i,j}))^2,(\mathcal{I}_{i}^\phi (X_{i,k}))^2).
    %\end{align*}
    %Note that as $\mathbb E [(\mathcal{I}_i^\phi-\phi_i)^4(X_{i,1})]=O(\frac{1}{n^2})$ and $\mathbb{E}_{X_i\sim F_i}[\phi_i^4(X_i)]<\infty$, 
    %\[\text{Cov}((\mathcal{I}_{i}^\phi (X_{i,j}))^2,(\mathcal{I}_{i}^\phi (X_{i,k}))^2)=\text{Cov}((\phi_{i} (X_{i,j}))^2,(\phi_{i} (X_{i,k}))^2)+O(\frac{1}{\sqrt{n}}).\]
    %Now as $\phi_{i}(X_{i,j})$ is independent of $\phi_{i}(X_{i,k})$ when $j\ne k$,
    %\[
    %    \sum_{i,j,k}\text{Cov}((\phi_{i} (X_{i,j}))^2,(\phi_{i} (X_{i,k}))^2)
    %    =n\sum_{i=1}^m\Variance_{X_i\sim F_i} \left[(\phi_i(X_i))^2\right].
    %\]
    %Therefore $\Variance \xi=O(n^{-\frac{5}{2}})$.
    Via a standard first moment argument using Markov's inequality, we obtain $\xi=\Theta_p(\frac{1}{n})$. 
\end{proof}

\begin{lemma}\label{lem: 1st fourth}
Under Assumption \ref{assumption:true smoothness} and Assumption \ref{assumption:smoothness}, we have
{
\begin{equation}
         \Variance_* \left(\nabla \phi(\hat F_1,\cdots,\hat F_m)\left[\hat F_{1}^{b}-\hat F_1,\cdots,\hat F_{m}^{b}-\hat F_m\right]\right)^2=\Theta_p(\frac{1}{n^2})
\end{equation}
}
\end{lemma}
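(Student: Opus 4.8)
The plan is to establish, using Lemma~\ref{lem: 1st var}, that
\[
\Variance_* \left(\nabla \phi(\hat F_1,\cdots,\hat F_m)\left[\hat F_{1}^{b}-\hat F_1,\cdots,\hat F_{m}^{b}-\hat F_m\right]\right)=\sum_{i=1}^m\frac{1}{n^2}\sum_{j=1}^{n}(\mathcal{I}_i^\phi (X_{i,j}))^2=:\xi,
\]
so that $\xi^2$ is an explicit polynomial in the data $\{X_{i,j}\}$, and then to control its expectation. Writing $S_i:=\frac{1}{n}\sum_{j=1}^n (\mathcal{I}_i^\phi(X_{i,j}))^2$ we have $\xi=\frac{1}{n}\sum_{i=1}^m S_i$, hence $\xi^2=\frac{1}{n^2}\sum_{i,i'}S_iS_{i'}$ and $\mathbb{E}\xi^2=\frac{1}{n^2}\sum_{i,i'}\mathbb{E}[S_iS_{i'}]$. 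The main point will be that each $S_i$ concentrates around $\mathbb{E}(\mathcal{I}_i^\phi(X_{i,1}))^2$, a strictly positive finite constant (the positivity and finiteness coming, as in Corollary~\ref{cor:nonortho order}, from combining $0<\Variance_{X_i\sim F_i}[\phi_i(X_i)]<\infty$ with $\mathbb{E}[(\mathcal{I}_i^\phi-\phi_i)^4(X_{i,1})]=o(1)$, which also upgrades to the needed fourth-moment control $\mathbb{E}(\mathcal{I}_i^\phi(X_{i,1}))^4<\infty$).

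First I would handle the upper bound. Since $\mathcal{I}_i^\phi$ has a finite fourth moment (under true model $F_i$, for $n$ large), $\mathbb{E}[S_i^2]=O(1)$ by a direct i.i.d.\ expansion: $\mathbb{E}[S_i^2]=\frac{1}{n^2}\sum_{j,k}\mathbb{E}[(\mathcal{I}_i^\phi(X_{i,j}))^2(\mathcal{I}_i^\phi(X_{i,k}))^2]$, which is $O(1)$ because the $n$ diagonal terms contribute $O(1/n)\cdot\mathbb{E}(\mathcal{I}_i^\phi)^4$ and the $O(n^2)$ off-diagonal terms contribute $(\mathbb{E}(\mathcal{I}_i^\phi)^2)^2+o(1)$. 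By Cauchy--Schwarz $\mathbb{E}[S_iS_{i'}]\le (\mathbb{E}[S_i^2]\,\mathbb{E}[S_{i'}^2])^{1/2}=O(1)$, so $\mathbb{E}\xi^2=\frac{1}{n^2}\cdot O(m^2)=O(n^{-2})$, and Markov's inequality gives $\xi^2=O_p(n^{-2})$.

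For the matching lower bound $\xi^2=\Omega_p(n^{-2})$, I would argue directly on $\xi$ rather than on $\xi^2$: Corollary~\ref{cor:nonortho order} already gives $\xi=\Theta_p(1/n)$, in particular there is $c>0$ with $\mathbb{P}(\xi\ge c/n)\to 1$; squaring, $\mathbb{P}(\xi^2\ge c^2/n^2)\to 1$, i.e.\ $\xi^2=\Omega_p(n^{-2})$. Combining with the upper bound yields $\xi^2=\Theta_p(n^{-2})$, which is the claim. The only mildly delicate step is the uniform fourth-moment estimate $\mathbb{E}(\mathcal{I}_i^\phi(X_{i,1}))^4=O(1)$: strictly speaking $\mathcal{I}_i^\phi$ depends on the empirical model $\hat F_i$, so one must pass through the triangle inequality $\|\mathcal{I}_i^\phi\|_{L^4(F_i)}\le \|\phi_i\|_{L^4(F_i)}+\|\mathcal{I}_i^\phi-\phi_i\|_{L^4(F_i)}$ and invoke $\mathbb{E}[\phi_i^4(X_{i,1})]<\infty$ together with $\mathbb{E}[(\mathcal{I}_i^\phi-\phi_i)^4(X_{i,1})]=o(1)$ from Assumptions~\ref{assumption:true smoothness} and \ref{assumption:smoothness}; this is exactly the bookkeeping already used in Corollary~\ref{cor:nonortho order}, so I expect it to be the main (but routine) obstacle.
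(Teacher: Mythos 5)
You have mis‑parsed the statement: the paper means $\Variance_*\bigl[\,\bigl(\nabla\phi(\hat F_1,\ldots,\hat F_m)[\hat F_1^b-\hat F_1,\ldots,\hat F_m^b-\hat F_m]\bigr)^2\,\bigr]$, the simulation variance of the \emph{square} of the resampled influence term, not $\bigl(\Variance_*[\cdots]\bigr)^2$, the square of its simulation variance. This is clear from the paper's own proof, which introduces $\eta:=\mathbb E_*[Z^4]$ for $Z:=\nabla\phi(\cdots)[\cdots]$ and derives the claimed quantity as $\eta-\xi^2=\mathbb E_*[Z^4]-(\mathbb E_*[Z^2])^2=\Variance_*(Z^2)$, and from the way the lemma is then invoked inside Theorems~\ref{thm: original ci} and~\ref{thm: original ci sb}, where what is needed is precisely the simulation variance of squared quantities (e.g.\ of $(\hat\phi^b-\mathbb E_*\hat\phi^b)^2$ and of $(\hat{\mathcal I}^b)^2$).

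Your reading makes the lemma an immediate one‑line corollary of Corollary~\ref{cor:nonortho order} — as you in fact observe when you say ``Corollary~\ref{cor:nonortho order} already gives $\xi=\Theta_p(1/n)$; squaring, $\xi^2=\Theta_p(1/n^2)$.'' That should itself be a warning sign: a lemma that is a trivial restatement of the preceding corollary would not be stated separately, nor would it need a proof. In contrast, $\Variance_*(Z^2)$ is a genuinely new quantity. To control it one must compute the fourth $\mathbb E_*$‑moment of the resampled influence sum $Z=\frac{1}{n}\sum_{i=1}^m\sum_{k=1}^n\mathcal I_i^\phi(X_{i,k}^b)$ by expanding the quadruple sum, using that the $\mathcal I_i^\phi$ are mean‑zero under $\hat F_i$ and that the $X_{i,k}^b$ are i.i.d., so only index configurations in which every drawn sample appears at least twice contribute. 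This $V$‑statistic style bookkeeping is exactly where the work is; it yields $\eta$ as a combination of $n^{-3}$ times $\mathbb E_*(\mathcal I_i^\phi)^4$, $n^{-3}$ times $(\mathbb E_*(\mathcal I_i^\phi)^2)^2$, and $n^{-2}$ times cross‑products $\mathbb E_*(\mathcal I_i^\phi)^2\mathbb E_*(\mathcal I_j^\phi)^2$, from which $\eta-\xi^2=\Theta_p(1/n^2)$ follows after controlling $\mathbb E[\cdot]$ over $F$ via the fourth‑moment parts of Assumptions~\ref{assumption:true smoothness} and~\ref{assumption:smoothness} and applying Markov. None of this structure appears in your proposal because you never touch $\mathbb E_*[Z^4]$: your analysis stays entirely at the level of second $\mathbb E_*$‑moments (the closed‑form $\Variance_* Z$ of Lemma~\ref{lem: 1st var}) and then studies that deterministic‑in‑$*$ quantity as a function of the data. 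So the approach is not merely different — it cannot be repaired into a proof of the paper's statement without essentially starting over with the fourth‑moment expansion.
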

\begin{proof}
    Let $$\eta=\mathbb E_* \left(\nabla \phi(\hat F_1,\cdots,\hat F_m)\left[\hat F_{1}^{b}-\hat F_1,\cdots,\hat F_{m}^{b}-\hat F_m\right]\right)^4,$$ then $\eta>0$ is a random variable with respect to the probability distributions $F_1,\cdots,F_m$.
    Recall that \[\nabla \phi(\hat F_1,\cdots,\hat F_m)\left[\hat F_{1}^{b}-\hat F_1,\cdots,\hat F_{m}^{b}-\hat F_m\right]=\frac{1}{n}\sum_{i=1}^m\sum_{k=1}^n\mathcal{I}_{i}^\phi (X_{i,k}^b).\]
    Note that $\mathbb E_{X_i\sim \hat F_i}\mathcal{I}_i^\phi(X_i)=0$, we have 
    \begin{align*}
        \eta =&\frac{1}{n^4}\mathbb E_*\sum_{i_1,k_1,i_2,k_2,i_3,k_3,i_4,k_4}\mathcal{I}_{i_1}^\phi (X_{i_1,k_1}^b)\mathcal{I}_{i_2}^\phi (X_{i_2,k_2}^b)\mathcal{I}_{i_3}^\phi (X_{i_3,k_3}^b)\mathcal{I}_{i_4}^\phi (X_{i_4,k_4}^b)\\
        =&\frac{1}{n^4}\mathbb E_*\sum_{i,k_1,k_2,k_3,k_4}\mathcal{I}_{i}^\phi (X_{i,k_1}^b)\mathcal{I}_{i}^\phi (X_{i,k_2}^b)\mathcal{I}_{i}^\phi (X_{i,k_3}^b)\mathcal{I}_{i}^\phi (X_{i,k_4}^b)\\&+\frac{3}{n^4}\sum_{i_1\ne i_2,k_1,k_2,k_3,k_4}\mathbb E_*\mathcal{I}_{i_1}^\phi (X_{i_1,k_1}^b)\mathcal{I}_{i_1}^\phi (X_{i_1,k_2}^b)\mathbb E_*\mathcal{I}_{i_2}^\phi (X_{i_2,k_3}^b)\mathcal{I}_{i_2}^\phi (X_{i_2,k_4}^b)\\
        =&\frac{3(n-1)}{n^3}\sum_{i=1}^{m}(\mathbb E_*(\mathcal{I}_i^\phi (X_{i,1}^b))^2)^2 + \frac{1}{n^3}\sum_{i=1}^{m}\mathbb E_*(\mathcal{I}_i^\phi (X_{i,1}^b))^4 +\frac{3}{n^2}\sum_{i\ne j}\mathbb E_*(\mathcal{I}_i^\phi (X_{i,1}^b))^2\mathbb E_*(\mathcal{I}_j^\phi (X_{j,1}^b))^2
    \end{align*}
    by a calculation similar to Lemma \ref{lemma:variance of sample covariance}. Thus
    \begin{align*}
        &\Variance_* \left(\nabla \phi(\hat F_1,\cdots,\hat F_m)\left[\hat F_{1}^{b}-\hat F_1,\cdots,\hat F_{m}^{b}-\hat F_m\right]\right)^2\\=&\eta-\xi^2\\
        =&\frac{(2n-3)}{n^3}\sum_{i=1}^{m}(\mathbb E_*(\mathcal{I}_i^\phi (X_{i,1}^b))^2)^2 + \frac{1}{n^3}\sum_{i=1}^{m}\mathbb E_*(\mathcal{I}_i^\phi (X_{i,1}^b))^4 +\frac{2}{n^2}\sum_{i\ne j}\mathbb E_*(\mathcal{I}_i^\phi (X_{i,1}^b))^2\mathbb E_*(\mathcal{I}_j^\phi (X_{j,1}^b))^2.
    \end{align*}

    As $\mathbb{E}_{X_i\sim F_i}[\phi_i^4(X_i)]<\infty$ and $\mathbb E_{X_i\sim F_i} [(\mathcal{I}_i^\phi-\phi_i)^4(X_i)]=o(1)$, we have
    \begin{align*}
        &\mathbb E [\mathbb E_*(\mathcal{I}_i^\phi (X_{i,1}^b))^2]^2=\mathbb E [\frac{1}{n}\sum_{j=1}^n(\mathcal{I}_i^\phi (X_{i,j}))^2]^2\\
        =&\frac{1}{n^2}\mathbb E \sum_{j,k}(\mathcal{I}_i^\phi (X_{i,j}))^2(\mathcal{I}_i^\phi (X_{i,k}))^2=O(1),
    \end{align*}
    $\mathbb E(\mathcal{I}_i^\phi (X_{i}))^2=O(1)$, and $\mathbb E(\mathcal{I}_i^\phi (X_{i}))^4=O(1)$.

     The expectation of $\Variance_* \left(\nabla \phi(\hat F_1,\cdots,\hat F_m)\left[\hat F_{1}^{b}-\hat F_1,\cdots,\hat F_{m}^{b}-\hat F_m\right]\right)^2$ is $\Theta(\frac{1}{n^2})$ as
    \begin{align*}
        &\mathbb E \Variance_* \left(\nabla \phi(\hat F_1,\cdots,\hat F_m)\left[\hat F_{1}^{b}-\hat F_1,\cdots,\hat F_{m}^{b}-\hat F_m\right]\right)^2\\
        =& \frac{(2n-3)}{n^3}\sum_{i=1}^{m}\mathbb E(\mathbb E_*(\mathcal{I}_i^\phi (X_{i,1}^b))^2)^2 +\frac{2}{n^2}\mathbb E\sum_{i\ne j}\mathbb E_*(\mathcal{I}_i^\phi (X_{i,1}^b))^2\mathbb E_*(\mathcal{I}_j^\phi (X_{j,1}^b))^2+ \frac{1}{n^3}\sum_{i=1}^{m}\mathbb E\mathbb E_*(\mathcal{I}_i^\phi (X_{i,1}^b))^4 \\
        =&\frac{(2n-3)}{n^3}\sum_{i=1}^{m}\mathbb E (\mathbb E_*(\mathcal{I}_i^\phi (X_{i,1}^b))^2)^2+\frac{2}{n^2}\sum_{i\ne j}\mathbb E(\mathcal{I}_i^\phi (X_{i}))^2\mathbb E(\mathcal{I}_j^\phi (X_{j}))^2+\frac{1}{n^3}\sum_{i=1}^{m}\mathbb E(\mathcal{I}_i^\phi (X_{i}))^4.
    \end{align*}

    Combining the above argument and using Markov's inequality again, we get $\eta=\Theta_p(\frac{1}{n^2})$. 
\end{proof}

\subsection{Improvement of Orthogonal Bootstrap when Simulating the Mean}
\begin{theorem}\label{thm: debias}
    Under Assumption \ref{assumption:smoothness}, consider the Orthogonal Bootstrap debiasing estimator defined in Equation \eqref{eq:debiasestimator}
    \[X:=2\hat{\phi}-\frac{1}{B}\sum_{b=1}^{B}(\hat{\phi}^b-\sum_{i=1}^{m}\frac{1}{n}\sum_{j=1}^{n}\mathcal{I}_i^\phi(x_{i,j}^b)).\]
    If the number of Monte Carlo replications $B\geq Cn^\alpha$ for some absolute constant $C>0$ and $\alpha\geq 0$, then the simulation error $\Variance_*(X)=O_p(\frac{1}{n^{2+\alpha}})$.
\end{theorem}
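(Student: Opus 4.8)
The plan is to reduce everything to a variance computation for the i.i.d.\ average appearing in the estimator. Since $\hat\phi$ is non-random conditional on the data, $\Variance_*(X) = \frac{1}{B}\Variance_*\!\left(\hat\phi^b - \sum_{i=1}^m \frac1n\sum_{j=1}^n \mathcal{I}_i^\phi(x_{i,j}^b)\right)$, because the $B$ bootstrap replications are i.i.d.\ given the data. So it suffices to show that the single-replication quantity $\xi := \hat\phi^b - \sum_{i=1}^m \frac1n\sum_{j=1}^n \mathcal{I}_i^\phi(x_{i,j}^b)$ has $\Variance_*(\xi) = O_p(n^{-2})$; then with $B \ge Cn^\alpha$ we get $\Variance_*(X) \le \frac{1}{Cn^\alpha}\Variance_*(\xi) = O_p(n^{-2-\alpha})$.

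To control $\Variance_*(\xi)$, I would use Assumption \ref{assumption:smoothness}, which says exactly that $\hat\phi^b = \hat\phi + \sum_{i=1}^m \int \mathcal{I}_i^\phi(x)\,\mathrm{d}\hat F_i^b(x) + \epsilon = \hat\phi + \sum_{i=1}^m \frac1n\sum_{j=1}^n \mathcal{I}_i^\phi(x_{i,j}^b) + \epsilon$. Therefore $\xi = \hat\phi + \epsilon$, so $\Variance_*(\xi) = \Variance_*(\epsilon)$. Now bound $\Variance_*(\epsilon) \le \mathbb{E}_*[\epsilon^2] \le (\mathbb{E}_*[\epsilon^4])^{1/2} = (O_p(n^{-4}))^{1/2} = O_p(n^{-2})$, where the moment bound $\mathbb{E}_*[\epsilon^4] = O_p(n^{-4})$ is precisely the content of Assumption \ref{assumption:smoothness}. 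This gives the claimed $\Variance_*(X) = O_p(n^{-2-\alpha})$.

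The one subtlety to handle carefully is the interplay between the conditional (starred) randomness and the outer randomness in the data: $\Variance_*(\xi)$ is itself a random variable in $F_1,\dots,F_m$, and the $O_p$ statement is about that random variable. The step $\Variance_*(\epsilon) \le (\mathbb{E}_*[\epsilon^4])^{1/2}$ holds pathwise (Jensen / power-mean applied to the conditional expectation), and then the assumed stochastic order $\mathbb{E}_*[\epsilon^4] = O_p(n^{-4})$ transfers through the square root, since $x \mapsto \sqrt{x}$ is monotone and continuous. I would also note explicitly that the decomposition across $B$ replications only uses that the replications are conditionally i.i.d.\ and that $\hat\phi$ and the empirical influence functions $\mathcal{I}_i^\phi$ (being functions of the original data only) are $\mathbb{E}_*$-measurable constants.

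The main obstacle — such as it is — is not in this theorem at all but in verifying that Assumption \ref{assumption:smoothness} actually holds under the stated RKHS/Fréchet-differentiability hypotheses (Assumptions \ref{assumption: rkhs embed}–\ref{assumption: lip in rkhs}); given Assumption \ref{assumption:smoothness} as a black box, the proof is essentially the two-line variance bookkeeping above. The only place where one must resist a shortcut is in not conflating $\mathbb{E}_*[\epsilon^2]$ with $\Variance_*(\epsilon)$ in the wrong direction; the inequality $\Variance_*(\epsilon) \le \mathbb{E}_*[\epsilon^2]$ is the one we want and is trivially true, so no centering term needs to be tracked.
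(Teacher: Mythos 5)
Your proof is correct and follows essentially the same route as the paper's: factor out $\tfrac{1}{B}$ by conditional i.i.d., cancel the influence-function sum against the linear term in Assumption \ref{assumption:smoothness} to leave $\hat\phi+\epsilon$, and note $\Variance_*(X)=\tfrac1B\Variance_*(\epsilon)=O_p(n^{-2-\alpha})$. The only difference is cosmetic — you spell out the Jensen step $\Variance_*(\epsilon)\le\mathbb{E}_*[\epsilon^2]\le(\mathbb{E}_*[\epsilon^4])^{1/2}$ and the transfer of the $O_p$ order through the square root, which the paper leaves implicit when it asserts $\Variance_*\epsilon=O_p(n^{-2})$ directly from the fourth-moment condition.
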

\begin{proof}
    The simulation variance of our debiasing estimator is
    \[\Variance_* X = \frac{1}{B}\Variance_* (\hat{\phi}^b-\sum_{i=1}^{m}\frac{1}{n}\sum_{j=1}^{n}\mathcal{I}_i^\phi(x_{i,j}^b)).\]
    By Assumption \ref{assumption:smoothness}, $\hat{\phi}^b=\hat{\phi}+\sum_{i=1}^{m}\frac{1}{n}\sum_{j=1}^{n}\mathcal{I}^\phi_i(x^b_{i,j})+\epsilon$ where $\Variance_*\epsilon=O_p(\frac{1}{n^2})$.
    Therefore \[\Variance_* X=\frac{1}{B}\Variance_*\epsilon = O_p(\frac{1}{n^2B}) = O_p(\frac{1}{n^{2+\alpha}}).\]
\end{proof}
It is also easy to obtain a conditional central limit theorem for our estimator if we further assume $\Variance_*\epsilon>0$.
\begin{theorem}
    Under Assumption \ref{assumption:smoothness}, and further assuming that $\Variance_*\epsilon>0$. Consider the Orthogonal Bootstrap debiasing estimator defined in Equation \eqref{eq:debiasestimator}
    \[X:=2\hat{\phi}-\frac{1}{B}\sum_{b=1}^{B}(\hat{\phi}^b-\sum_{i=1}^{m}\frac{1}{n}\sum_{j=1}^{n}\mathcal{I}_i^\phi(x_{i,j}^b)).\]
    Then conditioning on the input data $\hat F_1,\cdots,\hat F_m$, we have
    $$\sqrt{B}\frac{X-\mathbb E_*X}{\sqrt{\Variance_* \epsilon}}\to \mathcal{N}(0,1)$$
    as $B$ tends to infinity. Moreover, we have $\Variance_*\epsilon=O_p(\frac{1}{n^2})$.
\end{theorem}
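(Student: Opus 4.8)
The plan is to reduce the statement to the classical i.i.d.\ central limit theorem applied conditionally on the data, after an exact algebraic cancellation. First I would invoke Assumption~\ref{assumption:smoothness} to write, for each replication $b$, $\hat\phi^b=\hat\phi+\sum_{i=1}^m\frac1n\sum_{j=1}^n\mathcal I_i^\phi(x_{i,j}^b)+\epsilon^b$, where $\epsilon^1,\dots,\epsilon^B$ are the remainder terms attached to the $B$ independent bootstrap resamples. Substituting this into the definition of $X$ cancels the influence-function control variate exactly, leaving the telescoped form
\[
X=2\hat\phi-\frac1B\sum_{b=1}^B\Bigl(\hat\phi+\epsilon^b\Bigr)=\hat\phi-\frac1B\sum_{b=1}^B\epsilon^b .
\]
Hence $X-\mathbb E_*X=-\bigl(\tfrac1B\sum_{b=1}^B\epsilon^b-\mathbb E_*\epsilon\bigr)$, and therefore
\[
\sqrt B\,\frac{X-\mathbb E_*X}{\sqrt{\Variance_*\epsilon}}
=-\frac{1}{\sqrt B}\sum_{b=1}^B\frac{\epsilon^b-\mathbb E_*\epsilon}{\sqrt{\Variance_*\epsilon}} .
\]

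Next I would observe that, conditional on $\hat F_1,\dots,\hat F_m$, the $B$ resamples are drawn independently and identically, so $\epsilon^1,\dots,\epsilon^B$ are i.i.d.\ copies of $\epsilon$ under $\mathbb P_*$; each has conditional mean $\mathbb E_*\epsilon$ and conditional variance $\Variance_*\epsilon$, which is finite (a bootstrap resample from $n$ points takes only finitely many values, and in any case $\mathbb E_*[\epsilon^4]=O_p(n^{-4})<\infty$ by Assumption~\ref{assumption:smoothness}) and strictly positive by the added hypothesis. The displayed right-hand side is then, up to the irrelevant overall sign, a normalized sum of i.i.d.\ standardized random variables, so the Lindeberg--L\'evy CLT (applied under $\mathbb P_*$, treating the realized data as fixed) gives convergence in distribution to $\mathcal N(0,1)$ as $B\to\infty$.

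For the ``moreover'' assertion I would simply bound the conditional variance by the conditional fourth moment via Jensen's (or Cauchy--Schwarz) inequality: $\Variance_*\epsilon\le\mathbb E_*[\epsilon^2]\le\bigl(\mathbb E_*[\epsilon^4]\bigr)^{1/2}=\bigl(O_p(n^{-4})\bigr)^{1/2}=O_p(n^{-2})$, again using Assumption~\ref{assumption:smoothness}; note this is the same conclusion already recorded in Theorem~\ref{thm: debias}.

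The main obstacle here is not analytic---once the cancellation is noted the result is essentially textbook---but careful bookkeeping of the two sources of randomness. One must state the convergence as a statement under the resampling law $\mathbb P_*$ for (almost) every realization of the data, rather than confusing it with a statement over the data distribution $F$; and one must justify that the $\epsilon^b$ are genuinely conditionally i.i.d.\ (this is exactly where the independence of the bootstrap draws enters) and that $\Variance_*\epsilon$ is a legitimate, finite, nonzero normalizing constant (the roles of the fourth-moment bound and of the assumed positivity $\Variance_*\epsilon>0$). With those points handled, no further estimates are needed.
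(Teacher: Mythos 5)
Your proof is correct, and it is the argument the paper has in mind: after the control‑variate cancellation $X=\hat\phi-\frac1B\sum_b\epsilon^b$, the conditional i.i.d.\ structure of the resamples plus the assumed $\Variance_*\epsilon>0$ reduces the claim to the Lindeberg--L\'evy CLT under $\mathbb P_*$, and the Cauchy--Schwarz step $\Variance_*\epsilon\le(\mathbb E_*[\epsilon^4])^{1/2}=O_p(n^{-2})$ gives the ``moreover.'' The paper states this theorem without an explicit proof (remarking only that it is ``easy to obtain'' from the proof of Theorem~\ref{thm: debias}), and you have supplied exactly the missing details, including the bookkeeping about which probability measure the limit is taken under.
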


For comparison, we provide the result for bootstrap here.
\begin{theorem}\label{thm: debias sb}
    Under Assumption \ref{assumption:true smoothness} and Assumption \ref{assumption:smoothness}, consider the Standard Bootstrap debiasing estimator defined by
    \[X:=2\hat{\phi}-\frac{1}{B}\sum_{b=1}^{B}\hat{\phi}^b.\]
    Then conditioning on the input data $\hat F_1,\cdots,\hat F_m$, we have 
    $$\sqrt{B}\frac{X-\mathbb E_*X}{\sqrt{\Variance_* \hat{\phi}^b}}\to \mathcal{N}(0,1)$$
    as $B$ tends to infinity. Moreover, we have $\Variance_*\hat{\phi}^b=\Theta_p(\frac{1}{n})$.
\end{theorem}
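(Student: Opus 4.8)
The plan is to prove the two assertions in the order that keeps the logic clean: first the order bound $\Variance_*\hat\phi^b=\Theta_p(n^{-1})$, and then the conditional central limit theorem, which relies on the positivity half of that bound. The CLT part will parallel the proof of the Orthogonal Bootstrap CLT stated just above, with $\hat\phi^b$ now playing the role of $\hat\phi^b-\sum_{i=1}^m\frac1n\sum_{j=1}^n\mathcal I_i^\phi(x_{i,j}^b)$.

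For the order bound I would substitute the expansion of Assumption~\ref{assumption:smoothness}, namely $\hat\phi^b=\hat\phi+\sum_{i=1}^m\frac1n\sum_{j=1}^n\mathcal I_i^\phi(x_{i,j}^b)+\epsilon$, and decompose
\[
\Variance_*\hat\phi^b=\Variance_*\!\Big(\sum_{i=1}^m\tfrac1n\sum_{j=1}^n\mathcal I_i^\phi(x_{i,j}^b)\Big)+\Variance_*\epsilon+2\,\mathrm{Cov}_*\!\Big(\sum_{i=1}^m\tfrac1n\sum_{j=1}^n\mathcal I_i^\phi(x_{i,j}^b),\epsilon\Big).
\]
By Lemma~\ref{lem: 1st var} the first term equals $\sum_{i=1}^m n^{-2}\sum_{j=1}^n(\mathcal I_i^\phi(X_{i,j}))^2$, which is $\Theta_p(n^{-1})$ by Corollary~\ref{cor:nonortho order}. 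The second term is $O_p(n^{-2})$: by Jensen, $\Variance_*\epsilon\le\mathbb E_*\epsilon^2\le(\mathbb E_*\epsilon^4)^{1/2}=O_p(n^{-2})$. The cross term is $O_p(n^{-3/2})$ by Cauchy--Schwarz. Since $n^{-3/2}$ and $n^{-2}$ are $o(n^{-1})$, the influence term dominates in both directions and $\Variance_*\hat\phi^b=\Theta_p(n^{-1})$; in particular $\Variance_*\hat\phi^b>0$ with probability tending to one.

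For the CLT, write $X-\mathbb E_*X=-\frac1B\sum_{b=1}^B(\hat\phi^b-\mathbb E_*\hat\phi^b)$, so
\[
\sqrt B\,\frac{X-\mathbb E_*X}{\sqrt{\Variance_*\hat\phi^b}}=-\frac{1}{\sqrt{B\,\Variance_*\hat\phi^b}}\sum_{b=1}^B\big(\hat\phi^b-\mathbb E_*\hat\phi^b\big).
\]
Conditional on $\hat F_1,\dots,\hat F_m$, the $B$ resamples are drawn independently, so $\hat\phi^1,\dots,\hat\phi^B$ are i.i.d.; each $\hat\phi^b$ takes only finitely many values, hence has finite conditional variance, which is strictly positive on the event identified in the previous step. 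On that event the classical Lindeberg--L\'evy CLT gives convergence of the right-hand side to $-\mathcal N(0,1)=\mathcal N(0,1)$ as $B\to\infty$ (the fourth-moment bound of Assumption~\ref{assumption:smoothness} could instead supply a Lyapunov condition, but it is not needed); since the limit law does not depend on the data, the unconditional statement follows as well.

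The argument is essentially routine, and the only point requiring care is the interplay between the two regimes: the CLT is a statement about $B\to\infty$ for \emph{fixed} data, whereas $\Theta_p(n^{-1})$ concerns the randomness of the data as $n\to\infty$. Establishing the $\Theta_p$ bound first and using its lower half to guarantee $\Variance_*\hat\phi^b>0$ on a high-probability event resolves this cleanly. The mildly delicate step inside the order bound is verifying that the $O_p(n^{-3/2})$ cross term cannot cancel the leading $\Theta_p(n^{-1})$ term, which holds precisely because it is $o_p(n^{-1})$ relative to it.
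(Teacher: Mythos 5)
Your proof is correct and takes essentially the same approach as the paper: substitute the expansion from Assumption~\ref{assumption:smoothness}, note the influence term is $\Theta_p(n^{-1})$ by Corollary~\ref{cor:nonortho order}, bound the remainder and cross-covariance by Jensen and Cauchy--Schwarz as $o_p(n^{-1})$, and then apply the classical CLT to the conditionally i.i.d.\ resamples. Your observation that $\hat\phi^b$ takes only finitely many values conditional on the data (hence has finite conditional variance automatically) is a small but welcome clarification that the paper leaves implicit.
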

\begin{proof}
    The simulation variance of $X$ is $\Variance_* X = \frac{1}{B}\Variance_* \hat{\phi}^b.$ By Assumption \ref{assumption:smoothness}, we have
    $\hat{\phi}^b=\hat{\phi}+\sum_{i=1}^{m}\frac{1}{n}\sum_{j=1}^{n}\mathcal{I}^\phi_i(x^b_{i,j})+\epsilon$ where $\Variance_*\epsilon=O_p(\frac{1}{n^2})$.
    Combined with Corollary \ref{cor:nonortho order}, we have
    \[\Variance_*\left[\sum_{i=1}^{m}\frac{1}{n}\sum_{j=1}^{n}\mathcal{I}^\phi_i(x^b_{i,j})\right]=\Theta_p(\frac{1}{n}).\]
    Therefore by Cauchy's inequality 
    \[\Variance_*(\sum_{i=1}^{m}\frac{1}{n}\sum_{j=1}^{n}\mathcal{I}^\phi_i(x^b_{i,j})+\epsilon) = \Theta_p(\frac{1}{n})+o_p(\frac{1}{n})=\Theta_p(\frac{1}{n}).\]
    An application of central limit theorem yields the result.  
\end{proof}
These two theorems show that conditioning on the input data, the variance of the Standard Bootstrap estimator  due to simulation is at least $n$ times larger than the variance of the Orthogonal Bootstrap estimator due to simulation. This provides a separation result for the two estimators, showing that our estimator is strictly better under certain smoothness assumptions of the performance measure.

\subsection{Improvement of Orthogonal Bootstrap when Simulating the Variance}
We first begin with a lemma that calculate the variance of the sample covariance matrix. 
\begin{lemma}\label{lemma:variance of sample covariance}
    Let $X_1,\cdots,X_n$ be i.i.d. sample drawn from a multivariate distribution function $F$ in $\mathbb{R}^p$ with finite fourth moments. 
    Let $X_{ki}$ denote the $k$-th coordinate of the random variable $X_i$, and let $\mu_k:=\mathbb{E}X_{ki}$ be its mean. Let $\mu_{kl}:=\mathbb{E}(X_{ki}-\mu_k)(X_{li}-\mu_l)$ be the second order central moment. Let $\mu_{klm}:=\mathbb{E}(X_{ki}-\mu_k)(X_{li}-\mu_l)(X_{mi}-\mu_m)$ be the third order central moment.
    Let $\mu_{klmq}:=\mathbb{E}(X_{ki}-\mu_k)(X_{li}-\mu_l)(X_{mi}-\mu_m)(X_{qi}-\mu_q)$ be the fourth order central moment.
    
    Let the sample covariance matrix be 
    \[ S_n=\frac{1}{n}\sum_{i=1}^{n}(X_i-\bar{X})(X_i-\bar{X})^T,\] 
    then \[\text{Cov}(S_{nkl},S_{nmq})=\frac{(n-1)^2}{n^3}\alpha_{klmq}+\frac{n-1}{n^3}(\mu_{km}\mu_{lq}+\mu_{kq}\mu_{ml}),\] where 
    \[\alpha_{klmq}=\mu_{klmq}-\mu_{kl}\mu_{mq}.\]
\end{lemma}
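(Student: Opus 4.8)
The plan is to combine the translation invariance of the sample covariance matrix with a U-statistic representation, so that the computation reduces to two Hoeffding projections plus a single fourth-moment expansion. First I would observe that $S_n$ is unchanged under $X_i \mapsto X_i - \mu$, so without loss of generality I may assume $\mu_k = 0$ for all $k$; then every central moment coincides with the corresponding raw moment, e.g. $\mu_{kl} = \mathbb{E}[X_{k1}X_{l1}]$ and $\alpha_{klmq} = \mathbb{E}[X_{k1}X_{l1}X_{m1}X_{q1}] - \mu_{kl}\mu_{mq}$.

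Next I would write each entry of the (biased) sample covariance as a rescaled degree-two U-statistic,
\[ S_{nkl} = \frac{n-1}{n}\,U_{kl}, \qquad U_{kl} = \binom{n}{2}^{-1}\sum_{i<j} h_{kl}(X_i,X_j), \qquad h_{kl}(x,y) = \tfrac12(x_k-y_k)(x_l-y_l), \]
which follows from the algebraic identity $\sum_{i<j}(X_{ki}-X_{kj})(X_{li}-X_{lj}) = n\sum_{i}(X_{ki}-\bar X_k)(X_{li}-\bar X_l)$; the kernel $h_{kl}$ is symmetric with mean $\mu_{kl}$, so $U_{kl}$ is the usual unbiased covariance estimator and the factor $(n-1)/n$ records the $1/n$ versus $1/(n-1)$ normalization. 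I would then invoke Hoeffding's covariance formula for two degree-two U-statistics, $\mathrm{Cov}(U_{kl},U_{mq}) = \binom{n}{2}^{-1}\big[\,2(n-2)\,\sigma_1^{(kl,mq)} + \sigma_2^{(kl,mq)}\big]$, where $\sigma_1$ is the covariance of the first Hoeffding projections and $\sigma_2$ the covariance of the kernels themselves.

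The two inputs are computed as follows. A one-line calculation gives the first projection $h_{1,kl}(x) := \mathbb{E}_Y h_{kl}(x,Y) = \tfrac12(x_kx_l + \mu_{kl})$, so $\sigma_1^{(kl,mq)} = \tfrac14\,\mathrm{Cov}(X_{k1}X_{l1},X_{m1}X_{q1}) = \tfrac14\alpha_{klmq}$. For $\sigma_2$ I would expand $\mathbb{E}\big[(X_k-Y_k)(X_l-Y_l)(X_m-Y_m)(X_q-Y_q)\big]$ over independent centered copies $X,Y$: the $2^4$ terms group according to how many factors take the $X$-branch, every term with exactly one or exactly three $X$-factors vanishes because it carries a first moment of the complementary copy, and what survives is $\mathbb{E}[D_kD_lD_mD_q] = 2\mu_{klmq} + 2(\mu_{kl}\mu_{mq} + \mu_{km}\mu_{lq} + \mu_{kq}\mu_{ml})$ with $D = X - Y$; hence $\sigma_2^{(kl,mq)} = \tfrac12\alpha_{klmq} + \tfrac12(\mu_{km}\mu_{lq} + \mu_{kq}\mu_{ml})$.

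Finally I would substitute into Hoeffding's formula, using $\binom{n}{2}^{-1} = 2/(n(n-1))$ and $2(n-2)\cdot\tfrac14 + \tfrac12 = \tfrac{n-1}{2}$, to get $\mathrm{Cov}(U_{kl},U_{mq}) = \tfrac1n\alpha_{klmq} + \tfrac{1}{n(n-1)}(\mu_{km}\mu_{lq} + \mu_{kq}\mu_{ml})$, and then multiply by $((n-1)/n)^2$ to recover the claimed expression for $\mathrm{Cov}(S_{nkl},S_{nmq})$. The only genuinely delicate step is the bookkeeping in the $\sigma_2$ expansion — ensuring the odd-order (and in particular third-moment) terms drop out and that the paired ``Isserlis-type'' contributions are collected with the right multiplicities — whereas everything else is routine. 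If one prefers to avoid U-statistics, the same result follows from directly expanding $S_{nkl} = \tfrac1n\sum_i X_{ki}X_{li} - \bar X_k\bar X_l$ and evaluating $\mathbb{E}[S_{nkl}S_{nmq}] - \mathbb{E}[S_{nkl}]\mathbb{E}[S_{nmq}]$ by partitioning the resulting quadruple index sum according to which indices coincide; this is more tedious but entirely elementary.
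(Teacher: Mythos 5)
Your proposal is correct, and it takes a genuinely different route from the paper. The paper's proof centers $S_{nkl}$ around the true means, splits it as $S_{nkl}=A_{nkl}-B_{nkl}$ with $A_{nkl}=\tfrac1n\sum_i(X_{ki}-\mu_k)(X_{li}-\mu_l)$ and $B_{nkl}=(\bar X_k-\mu_k)(\bar X_l-\mu_l)$, and then computes $\mathrm{Cov}(A_{nkl},A_{nmq})$, $\mathrm{Cov}(A_{nkl},B_{nmq})$, $\mathrm{Cov}(B_{nkl},B_{nmq})$ by expanding explicit multi-index sums and counting which index patterns survive. You instead write $S_{nkl}=\tfrac{n-1}{n}U_{kl}$ for the degree-two U-statistic with kernel $h_{kl}(x,y)=\tfrac12(x_k-y_k)(x_l-y_l)$ and invoke the Hoeffding covariance formula $\mathrm{Cov}(U_{kl},U_{mq})=\binom{n}{2}^{-1}[2(n-2)\zeta_1+\zeta_2]$, reducing the whole calculation to the two projection covariances $\zeta_1=\tfrac14\alpha_{klmq}$ and $\zeta_2=\tfrac12\alpha_{klmq}+\tfrac12(\mu_{km}\mu_{lq}+\mu_{kq}\mu_{ml})$ (both of which I have checked; in particular your expansion of $\mathbb{E}[D_kD_lD_mD_q]$ over independent centered copies is right, and the odd $X$/$Y$-split terms do vanish). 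The trade-off is that the paper's approach is entirely self-contained and elementary at the cost of rather more index bookkeeping, whereas yours outsources the combinatorics to a standard U-statistic lemma and makes the $(n-1)/n$ bias factor and the structure of the two contributions (the $\alpha_{klmq}$ term from the linear Hoeffding projection, the $\mu_{km}\mu_{lq}+\mu_{kq}\mu_{ml}$ term from the degenerate piece) transparent. Both yield the identical final formula.
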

\begin{proof}
    First of all we have the decomposition
    $$S_{nkl}=\underbrace{\frac{1}{n}\sum_{i=1}^{n}(X_{ki}-\mu_k)(X_{li}-\mu_l)}_{:=A_{nkl}}-\underbrace{(\bar X_k-\mu_k)(\bar X_l-\mu_l)}_{:=B_{nkl}}.$$
    Note that the covariance between $S_{nkl}$ and $S_{nmq}$ is 
    \begin{align*}
        &\text{Cov}(S_{nkl},S_{nmq})\\
        =&\text{Cov}(A_{nkl},A_{nmq})-\text{Cov}(A_{nkl},B_{nmq})-\text{Cov}(B_{nkl},A_{nmq})+\text{Cov}(B_{nkl},B_{nmq})
    \end{align*}
    For the first term we have 
    \begin{align*}
        &\text{Cov}(A_{nkl},A_{nmq})\\
        =&\frac{1}{n}\text{Cov}((X_{ki}-\mu_k)(X_{li}-\mu_l),(X_{mi}-\mu_m)(X_{qi}-\mu_q))\\
        =&\frac{1}{n}\left(\mathbb{E}(X_{ki}-\mu_k)(X_{li}-\mu_l)(X_{mi}-\mu_m)(X_{qi}-\mu_q)-\mu_{kl}\mu_{mq}\right)\\
        =&\frac{1}{n}\alpha_{klmq}
    \end{align*}
    For the last term we have
    \begin{align*}
        &\text{Cov}(B_{nkl},B_{nmq})\\
        =&\text{Cov}(\frac{1}{n^2}\sum_i(X_{ki}-\mu_k)\sum_j(X_{lj}-\mu_l),\frac{1}{n^2}\sum_s(X_{ms}-\mu_k)\sum_t(X_{qt}-\mu_q))\\
        =& \frac{1}{n^4}\sum_{i,j,s,t}\mathbb E(X_{ki}-\mu_k)(X_{lj}-\mu_l)(X_{ms}-\mu_m)(X_{qt}-\mu_q)\\&-\frac{1}{n^4}\mathbb E \sum_{i,j}(X_{ki}-\mu_m)(X_{lj}-\mu_l)\mathbb E\sum_{s,t}(X_{ms}-\mu_m)(X_{qt}-\mu_q)\\
        =&\frac{n(n-1)}{n^4}(\mu_{kl}\mu_{mq}+\mu_{km}\mu_{lq}+\mu_{kq}\mu_{ml}) -\frac{1}{n^2}\mu_{kl}\mu_{mq} \\
        &+\frac{1}{n^3}\mathbb E(X_{k}-\mu_k)(X_{l}-\mu_l)(X_{m}-\mu_k)(X_{q}-\mu_q)\\
        =&\frac{n-1}{n^3}(\mu_{km}\mu_{lq}+\mu_{kq}\mu_{ml})+\frac{1}{n^3}\alpha_{klmq}
    \end{align*}
    For the cross term we have 
    \begin{align*}
        &\text{Cov}(A_{nkl},B_{nmq})\\
        =&\text{Cov}(\frac{1}{n}\sum_{i}(X_{ki}-\mu_k)(X_{li}-\mu_l),\frac{1}{n^2}\sum_s(X_{ms}-\mu_k)\sum_t(X_{qt}-\mu_q))\\
        =& \frac{1}{n^3}\sum_{i,s,t}\mathbb E(X_{ki}-\mu_k)(X_{li}-\mu_l)(X_{ms}-\mu_m)(X_{qt}-\mu_q)\\&-\frac{1}{n^3}\mathbb E \sum_{i}(X_{ki}-\mu_k)(X_{li}-\mu_l)\mathbb E\sum_{s,t}(X_{ms}-\mu_s)(X_{qt}-\mu_q)\\
        =&\frac{n(n-1)}{n^3}\mu_{kl}\mu_{mq}+\frac{1}{n^2}\mu_{klmq} -\frac{1}{n}\mu_{kl}\mu_{mq} \\
        =&\frac{1}{n^2}\alpha_{klmq}
    \end{align*}
    Therefore 
    \[\text{Cov}(S_{nkl},S_{nmq})=\frac{(n-1)^2}{n^3}\alpha_{klmq}+\frac{n-1}{n^3}(\mu_{km}\mu_{lq}+\mu_{kq}\mu_{ml})\]
\end{proof}

\begin{theorem}\label{thm: original ci}
    Under Assumption \ref{assumption:smoothness}, consider the Orthogonal Bootstrap variance estimator defined in Equation \eqref{eq:varianceestimator}
    \[ X:=\frac{1}{n^2}\sum_{i=1}^m\sum_{j=1}^n(\mathcal{I}_i^\phi (X_{i,j}))^2+\frac{1}{B}\sum_{b=1}^{B}\left(\hat\phi^b-\mathcal{\hat I}^b-\overline{\phi-\mathcal{I}}\right)^2+\frac{2}{B}\sum_{b=1}^{B}\left(\hat\phi^b-\mathcal{\hat I}^b-\overline{\phi-\mathcal{I}}\right)\left(\mathcal{\hat I}^b-\overline{\mathcal{I}}\right).\] 
    If the number of Monte Carlo replications $B\geq Cn^\alpha$ for some absolute constant $C>0$ and $\alpha\geq 0$, then $\Variance_*(X)=O_p(\frac{1}{n^{3+\alpha}})$.
\end{theorem}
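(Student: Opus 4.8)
The plan is to show that, conditional on the original data $\hat F_1,\dots,\hat F_m$, the estimator $X$ equals an $\hat F$-measurable constant plus a sample variance and a sample covariance of quantities that are i.i.d.\ across the $B$ Monte Carlo replications, and then to bound the simulation variances of those two pieces via Lemma \ref{lemma:variance of sample covariance}. Concretely, the leading term $\frac{1}{n^2}\sum_i\sum_j(\mathcal{I}_i^\phi(X_{i,j}))^2$ is a function of the data alone and hence contributes nothing to $\Variance_*$. For the remaining terms, write $V_b:=\hat{\mathcal{I}}^b=\sum_{i=1}^m\frac1n\sum_{j=1}^n\mathcal{I}_i^\phi(x_{i,j}^b)$ and use Assumption \ref{assumption:smoothness} in the form $\hat\phi^b=\hat\phi+V_b+\epsilon_b$, so that $\hat\phi^b-\hat{\mathcal{I}}^b-\overline{\phi-\mathcal{I}}=\epsilon_b-\bar\epsilon$ and $\hat{\mathcal{I}}^b-\overline{\mathcal{I}}=V_b-\bar V$. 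Thus
\[
X=c+\frac1B\sum_{b=1}^B(\epsilon_b-\bar\epsilon)^2+\frac2B\sum_{b=1}^B(\epsilon_b-\bar\epsilon)(V_b-\bar V),
\]
with $c$ deterministic given the data; conditional on the data the pairs $(\epsilon_b,V_b)$, $b=1,\dots,B$, are i.i.d.\ (with all moments finite, the conditional law being supported on finitely many resamples), so $X-c$ is exactly $S_{B,11}+2S_{B,12}$ in the notation of Lemma \ref{lemma:variance of sample covariance}, applied to the bivariate sample $(\epsilon_b,V_b)$ with sample size $B$.

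Next I would invoke Lemma \ref{lemma:variance of sample covariance} to expand $\Variance_*(S_{B,11})$ and $\Variance_*(S_{B,12})$ in terms of the conditional central moments $\mu_{11}:=\Variance_*\epsilon$, $\mu_{22}:=\Variance_*V$, $\mu_{12}:=\mathrm{Cov}_*(\epsilon,V)$, $\mu_{1111}:=\mathbb{E}_*[(\epsilon-\mathbb{E}_*\epsilon)^4]$, $\mu_{1212}:=\mathbb{E}_*[(\epsilon-\mathbb{E}_*\epsilon)^2(V-\mathbb{E}_*V)^2]$, picking up prefactors $\frac{(B-1)^2}{B^3}=O(B^{-1})$ and $\frac{B-1}{B^3}=O(B^{-2})$. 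I then bound these moments: Assumption \ref{assumption:smoothness} gives $\mathbb{E}_*\epsilon^4=O_p(n^{-4})$, hence (using $(\mathbb{E}_*\epsilon)^4\le\mathbb{E}_*\epsilon^4$) $\mu_{11}=O_p(n^{-2})$ and $\mu_{1111}=O_p(n^{-4})$; since $\mathbb{E}_*\hat{\mathcal{I}}^b=0$, Corollary \ref{cor:nonortho order} gives $\mu_{22}=\mathbb{E}_*V^2=O_p(n^{-1})$, and the fourth-moment computation inside the proof of Lemma \ref{lem: 1st fourth} gives $\mathbb{E}_*V^4=O_p(n^{-2})$; Cauchy--Schwarz then yields $\mu_{12}^2\le\mu_{11}\mu_{22}=O_p(n^{-3})$ and $\mu_{1212}\le(\mathbb{E}_*[(\epsilon-\mathbb{E}_*\epsilon)^4]\,\mathbb{E}_*V^4)^{1/2}=O_p(n^{-3})$. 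Substituting, $\Variance_*(S_{B,11})=O_p(n^{-4}B^{-1})$ and $\Variance_*(S_{B,12})=O_p(n^{-3}B^{-1})$, and since $\sqrt{\Variance_*(X)}\le\sqrt{\Variance_*(S_{B,11})}+2\sqrt{\Variance_*(S_{B,12})}$ we conclude $\Variance_*(X)=O_p(n^{-3}B^{-1})$; with $B\ge Cn^\alpha$ this is $O_p(n^{-3-\alpha})$, as claimed.

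The main obstacle is the moment bookkeeping in the second paragraph. One has to carry the non-vanishing conditional mean $\mathbb{E}_*\epsilon=O_p(n^{-1})$ through the conversion between raw and centered fourth moments of $\epsilon$, and --- the essential point --- establish that the cross-moment $\mathbb{E}_*[(\epsilon-\mathbb{E}_*\epsilon)^2(\hat{\mathcal{I}}^b)^2]$ is $O_p(n^{-3})$, i.e.\ the product of the $O_p(n^{-1})$ size of $\epsilon^2$ (from Assumption \ref{assumption:smoothness}) and the $O_p(n^{-1})$ size of $(\hat{\mathcal{I}}^b)^2$ (from Corollary \ref{cor:nonortho order}); this is precisely the term that pins down the $n^{-3}$ rate. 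The rest --- the reduction to a constant plus sample (co)variances and the application of Lemma \ref{lemma:variance of sample covariance} --- is routine.
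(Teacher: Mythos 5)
Your proof is correct and follows essentially the same route as the paper's: reduce the two Monte-Carlo terms of $X$ to a sample variance and a sample covariance of the bivariate i.i.d.\ replication sample $(\epsilon_b,\hat{\mathcal I}^b)$, apply Lemma~\ref{lemma:variance of sample covariance} to expand their simulation variances, and bound the conditional moments using Assumption~\ref{assumption:smoothness}, Corollary~\ref{cor:nonortho order}, and the fourth-moment bound from Lemma~\ref{lem: 1st fourth}, with the cross term $\mathbb{E}_*[(\epsilon-\mathbb{E}_*\epsilon)^2(\hat{\mathcal I}^b)^2]=O_p(n^{-3})$ being the rate-determining quantity. The only minor departure is that you bound $\Variance_*(X)$ via the standard-deviation triangle inequality for $S_{B,11}+2S_{B,12}$, which lets you skip the explicit $\text{Cov}_*(S_{B,11},S_{B,12})$ computation (and the $\alpha_{1112}$ moment) that the paper carries out; this is a clean simplification that gives the same rate.
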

\begin{proof}
Let $Y^b:=\hat\phi^b-\mathcal{\hat I}^b$ and $Z^b:=\hat\phi^b$. Regard $(Y^b,Z^b)$ as a two dimensional vector, then its sample covariance matrix is 
    
    \[S_{B}=\begin{pmatrix}
\frac{1}{B}\sum_{b=1}^{B}\left(\hat\phi^b-\mathcal{\hat I}^b-\overline{\phi-\mathcal{I}}\right)^2  &\frac{1}{B}\sum_{b=1}^{B}\left(\hat\phi^b-\mathcal{\hat I}^b-\overline{\phi-\mathcal{I}}\right)\left(\mathcal{\hat I}^b-\overline{\mathcal{I}}\right) \\
 \frac{1}{B} \sum_{b=1}^{B}\left(\hat\phi^b-\mathcal{\hat I}^b-\overline{\phi-\mathcal{I}}\right)\left(\mathcal{\hat I}^b-\overline{\mathcal{I}}\right) &\frac{1}{B}\sum_{b=1}^{B}\left(\mathcal{\hat I}^b-\overline{\mathcal{I}}\right)^2
\end{pmatrix},\]
    so $S_{B11}:=\frac{1}{B}\sum_{b=1}^{B}\left(\hat\phi^b-\mathcal{\hat I}^b-\overline{\phi-\mathcal{I}}\right)^2$ and $S_{B12}:=\frac{2}{B}\sum_{b=1}^{B}\left(\hat\phi^b-\mathcal{\hat I}^b-\overline{\phi-\mathcal{I}}\right)\left(\mathcal{\hat I}^b-\overline{\mathcal{I}}\right)$.
    Then the variance of $X$ is
    \[\Variance_* X =\Variance_*(S_{B11}+S_{B12})=\Variance_* S_{B11}+2\text{Cov}_*(S_{B11},S_{B12})+\Variance_* S_{B12}.\]

    By Lemma \ref{lemma:variance of sample covariance}
    \[\Variance_* S_{B11}=\frac{(B-1)^2}{B^3}\alpha_{1111}+\frac{2(B-1)}{B^3}\mu_{11}^2,\]
    \[\Variance_* S_{B12}=\frac{(B-1)^2}{B^3}\alpha_{1212}+\frac{2(B-1)}{B^3}\mu_{12}^2,\]
    and
    \[\text{Cov}(S_{B11},S_{B12})=\frac{(B-1)^2}{B^3}\alpha_{1112}+\frac{2(B-1)}{B^3}\mu_{11}\mu_{12}.\]
    Therefore 
    \[\Variance_* X=\frac{(B-1)^2}{B^3}(\alpha_{1111}+\alpha_{1212}-2\alpha_{1112})+\frac{2(B-1)}{B^3}(\mu_{11}-\mu_{12})^2
    \]

    By definition and recalling that $\mathbb E_*\mathcal{\hat I}^b=0$, $\Variance_*\mathcal{\hat I}^b=\Theta_p(\frac{1}{n})$ (Corollary \ref{cor:nonortho order}), and $\Variance_*(\mathcal{\hat I}^b)^2=\Theta_p(\frac{1}{n^2})$ (Lemma \ref{lem: 1st fourth}),
    \begin{align*}
        &\mu_{11}=\Variance_* \epsilon^b = O_p(\frac{1}{n^2})\\
        &\mu_{12}=\Variance_* \epsilon^b +\mathbb E_*\epsilon^b\mathcal{\hat I}^b= O_p(\frac{1}{n^{\frac{3}{2}}})\\
        &\mu_{1111}=\mathbb E_*(\epsilon-\mathbb E_*\epsilon)^4=O_p(\frac{1}{n^4})\\
        &\mu_{1212}=\mathbb E_*(\epsilon-\mathbb E_*\epsilon)^4+2\mathbb E_*(\epsilon-\mathbb E_*\epsilon)^3\mathcal{\hat I}^b+\mathbb E_*(\epsilon-\mathbb E_*\epsilon)^2(\mathcal{\hat I}^b)^2=O_p(\frac{1}{n^3})\\
        &\mu_{1112}=\mathbb E_*(\epsilon-\mathbb E_*\epsilon)^4+\mathbb E_*(\epsilon-\mathbb E_*\epsilon)^3\mathcal{\hat I}^b=O_p(\frac{1}{n^{\frac{7}{2}}})
    \end{align*}
    Therefore when $B\geq Cn^\alpha$, $\Variance_* X=O_p(\frac{1}{n^{3+\alpha}})$.
\end{proof}
\begin{theorem}
    Under Assumption \ref{assumption:smoothness}, consider the Orthogonal Bootstrap variance estimator defined in Equation \eqref{eq:varianceestimator}
    \[ X:=\frac{1}{n^2}\sum_{i=1}^m\sum_{j=1}^n(\mathcal{I}_i^\phi (X_{i,j}))^2+\frac{1}{B}\sum_{b=1}^{B}\left(\hat\phi^b-\mathcal{\hat I}^b-\overline{\phi-\mathcal{I}}\right)^2+\frac{2}{B}\sum_{b=1}^{B}\left(\hat\phi^b-\mathcal{\hat I}^b-\overline{\phi-\mathcal{I}}\right)\left(\mathcal{\hat I}^b-\overline{\mathcal{I}}\right).\] 
    Then conditioning on input data $\hat F_1,\cdots,\hat F_m$, we have
    $$\sqrt{B}\frac{X-\mathbb E_*X}{\sqrt{S^2}}\to \mathcal{N}(0,1)$$
    as $B$ tends to infinity. Moreover, $S^2=O_p(\frac{1}{n^3})$.
\end{theorem}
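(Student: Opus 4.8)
The plan is to reduce the claim to the classical asymptotic theory of sample covariance matrices of i.i.d.\ vectors, applied \emph{conditionally} on the data. Fix the data $\hat F_1,\dots,\hat F_m$ and, for each replication $b$, set $Y^b:=\hat\phi^b-\hat{\mathcal I}^b$ and $W^b:=\hat{\mathcal I}^b$; conditionally on the data the pairs $(Y^b,W^b)$ are i.i.d.\ with finite conditional fourth moments — this is exactly what Assumption~\ref{assumption:smoothness} supplies through $\mathbb E_*[\epsilon^4]=O_p(n^{-4})$, together with Lemma~\ref{lem: 1st fourth} for the fourth moments of $\hat{\mathcal I}^b$. As already noted in the proof of Theorem~\ref{thm: original ci}, the first summand of $X$ is deterministic given the data, while the remaining two summands equal $\hat\sigma_{YY}+2\hat\sigma_{YW}$, a \emph{fixed linear} image of the $2\times2$ sample covariance matrix $S_B$ of the vectors $(Y^b,W^b)$. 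Hence $X-\mathbb E_*X$ is a fixed linear functional of $S_B-\mathbb E_*S_B$.

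First I would invoke the classical fact that the sample covariance matrix of i.i.d.\ vectors with finite fourth moments is $\sqrt B$-asymptotically normal, whose asymptotic (co)variances are obtained by letting $B\to\infty$ in Lemma~\ref{lemma:variance of sample covariance}; under the hood this is the multivariate CLT for the averages $(\overline{Y},\overline W,\overline{Y^2},\overline{YW})$ followed by the delta method. Composing with the fixed linear functional gives $\sqrt B\,(X-\mathbb E_*X)\Rightarrow\mathcal N(0,\sigma_*^2)$ conditionally, where $\sigma_*^2=\alpha_{1111}+\alpha_{1212}-2\alpha_{1112}$ in the notation of the proof of Theorem~\ref{thm: original ci}; this is precisely the quantity the (sample-moment) estimator $S^2$ is built to estimate. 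The divisor-$B$ bias of $S_B$ is $O(1/B)$, so $\sqrt B\big(\mathbb E_*[\hat\sigma_{YY}+2\hat\sigma_{YW}]-(\Variance_*Y^b+2\mathrm{Cov}_*(Y^b,W^b))\big)\to0$ and centering at $\mathbb E_*X$ is harmless. A conditional law of large numbers applied to the empirical fourth moments in $S^2$ gives $S^2\to\sigma_*^2$ in $\mathbb P_*$-probability, and Slutsky's theorem then yields $\sqrt B\,(X-\mathbb E_*X)/\sqrt{S^2}\Rightarrow\mathcal N(0,1)$.

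For the rate claim $S^2=O_p(n^{-3})$ I would reuse the moment estimates already recorded in the proof of Theorem~\ref{thm: original ci}: there one finds $\mu_{1111}=O_p(n^{-4})$, $\mu_{1212}=O_p(n^{-3})$ and $\mu_{1112}=O_p(n^{-7/2})$ (the Cauchy–Schwarz bound $\mathbb E_*[(\epsilon^b)^2(\hat{\mathcal I}^b)^2]\le\sqrt{\mathbb E_*(\epsilon^b)^4}\sqrt{\mathbb E_*(\hat{\mathcal I}^b)^4}=O_p(n^{-3})$ being the dominant contribution), so $\sigma_*^2=\alpha_{1212}+o_p(n^{-3})=O_p(n^{-3})$. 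Applying the same Cauchy–Schwarz and V-statistic fourth-moment arguments (as in Lemma~\ref{lemma:four} and Lemma~\ref{lem: 1st fourth}) to the expectation over the data of the empirical moments making up $S^2$ transfers this order to $S^2$ itself. This is also consistent with Theorem~\ref{thm: original ci}: with $B\ge Cn^\alpha$ one has $\Variance_*X\approx S^2/B=O_p(n^{-3-\alpha})$.

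The main obstacle is the bookkeeping in the delta-method step — one must check that the nonlinear sample-mean-centering part of $\hat\sigma_{YY},\hat\sigma_{YW}$ contributes only an $O_p(1/B)$ correction that vanishes after $\sqrt B$-scaling — and the non-degeneracy of the limit: dividing by $\sqrt{S^2}$ is legitimate only once $\sigma_*^2=\alpha_{1111}+\alpha_{1212}-2\alpha_{1112}>0$, which requires an extra hypothesis in the spirit of $\Variance_*\epsilon>0$ (paralleling the companion CLT for the debiasing estimator). Everything else is a routine combination of the conditional CLT, the conditional LLN, Slutsky's theorem, and the moment bounds already established for the orthogonal and non-orthogonal parts.
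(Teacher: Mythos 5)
The paper leaves this particular theorem without a written proof: it sits between Theorem~\ref{thm: original ci} and Theorem~\ref{thm: original ci sb} as an unproved companion statement, and only the surrounding theorems are argued explicitly (Theorem~\ref{thm: original ci sb} in particular closes by ``apply the central limit theorem when we are considering the large $B$ limit, conditioning on the input data''). Your strategy reconstructs exactly the argument the paper is gesturing at: condition on the data so the pairs $(Y^b,W^b)=(\hat\phi^b-\hat{\mathcal I}^b,\hat{\mathcal I}^b)$ are conditionally i.i.d.\ with finite fourth moments, notice that the random part of $X$ is a fixed linear image $(S_B)_{11}+2(S_B)_{12}$ of the $2\times 2$ sample covariance matrix $S_B$, invoke the multivariate CLT for the empirical moments $(\overline{Y},\overline{W},\overline{Y^2},\overline{YW})$ together with the delta method, compose with the linear functional, and finish with Slutsky. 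The $O_p(n^{-3})$ rate then follows from the same moment estimates the paper records inside the proof of Theorem~\ref{thm: original ci} ($\mu_{1111}=O_p(n^{-4})$, $\mu_{1212}=O_p(n^{-3})$, $\mu_{1112}=O_p(n^{-7/2})$, with $\alpha_{klmq}=\mu_{klmq}-\mu_{kl}\mu_{mq}$), so this is the intended proof rather than an alternative route.

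Two points worth flagging. First, you are right that a non-degeneracy hypothesis is missing: dividing by $\sqrt{S^2}$ requires $\alpha_{1111}+\alpha_{1212}-2\alpha_{1112}>0$, and the paper's own companion CLT for the debiasing estimator does explicitly add the analogous hypothesis $\Variance_*\epsilon>0$; the variance-estimation version as stated omits it, which is a gap in the statement your proof correctly surfaces. Second, the role of $S^2$ is ambiguous in the statement: in the debiasing CLT the normalizer is the exact conditional variance $\Variance_*\epsilon$, which strongly suggests $S^2$ here denotes the population per-replication variance $\sigma_*^2=\alpha_{1111}+\alpha_{1212}-2\alpha_{1112}$ given the data, not a sample estimator of it (and note the $S^2$ of Equation~\eqref{eq:varianceestimator} estimates $\Variance_*\hat\phi^b$, not the variance of the variance estimator, so it cannot be what is meant). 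Under that reading your ``conditional LLN for $S^2$'' step is vacuous, since there is nothing random left to converge; the rest of your argument, including the rate computation, goes through unchanged.
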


For comparison, we provide the result for Standard Bootstrap here.
\begin{theorem}\label{thm: original ci sb}
    Under Assumption \ref{assumption:true smoothness} and Assumption \ref{assumption:smoothness}, consider the Standard Bootstrap variance estimator
    \[X:=\frac{1}{B}\sum_{b=1}^{B}(\hat{\phi}^b-\overline{\phi})^2.\]
    Then conditioning on the input data $\hat F_1,\cdots,\hat F_m$, we have
    $$\sqrt{B}\frac{X-\mathbb E_*X}{\sqrt{S^2}}\to \mathcal{N}(0,1).$$
    Moreover, we have $S^2=\Theta_p(\frac{1}{n^2})$.
\end{theorem}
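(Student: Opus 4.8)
The plan is to recognise that this is the Standard-Bootstrap counterpart of Theorem~\ref{thm: original ci} and to argue as follows. Conditionally on the data $\hat F_1,\dots,\hat F_m$, the resampled statistics $\hat\phi^1,\dots,\hat\phi^B$ are i.i.d.\ and take only finitely many values, hence are bounded and have all moments, and $X=\frac1B\sum_b(\hat\phi^b-\overline\phi)^2$ is their (biased) sample variance. Writing $\mu:=\mathbb E_*\hat\phi^b$ and $\sigma^2:=\Variance_*\hat\phi^b$, the identity $X=\frac1B\sum_b(\hat\phi^b-\mu)^2-(\overline\phi-\mu)^2$ shows that $\sqrt B\,(X-\sigma^2)$ equals $\sqrt B$ times the centered sample mean of the i.i.d.\ bounded variables $(\hat\phi^b-\mu)^2$ plus the remainder $-\sqrt B(\overline\phi-\mu)^2=O_p(B^{-1/2})$; the i.i.d.\ central limit theorem, applied conditionally on the data, then gives convergence to $\mathcal N(0,S^2)$ with
\[
S^2:=\Variance_*\big((\hat\phi^b-\mathbb E_*\hat\phi^b)^2\big)=\mathbb E_*(\hat\phi^b-\mu)^4-\sigma^4,
\]
provided $S^2>0$. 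Since $\mathbb E_*X=\frac{B-1}{B}\sigma^2$, we have $\sqrt B\,(\mathbb E_*X-\sigma^2)\to0$, so the same limit holds with $X-\sigma^2$ replaced by $X-\mathbb E_*X$; self-normalising by $\sqrt{S^2}$ and applying Slutsky yields the stated CLT.

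It then remains to show $S^2=\Theta_p(n^{-2})$, which simultaneously supplies the non-degeneracy $S^2>0$ used above. I would invoke Assumption~\ref{assumption:smoothness} to write $\hat\phi^b-\mathbb E_*\hat\phi^b=\widehat{\mathcal I}^b+\delta^b$, where $\widehat{\mathcal I}^b:=\sum_{i=1}^m\frac1n\sum_{j=1}^n\mathcal I_i^\phi(x_{i,j}^b)$ is the mean-zero non-orthogonal part and $\delta^b:=\epsilon^b-\mathbb E_*\epsilon^b$ obeys $\mathbb E_*(\delta^b)^4=O_p(n^{-4})$. The key point is not to expand $\mathbb E_*(\hat\phi^b-\mu)^4$ and $\sigma^4$ separately, since both are $\Theta_p(n^{-2})$ and almost cancel; instead keep $S^2=\Variance_*((\widehat{\mathcal I}^b+\delta^b)^2)$ and expand the square inside the variance. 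The leading term $\Variance_*((\widehat{\mathcal I}^b)^2)$ is exactly the quantity shown to be $\Theta_p(n^{-2})$ in Lemma~\ref{lem: 1st fourth}, its two-sided order being genuine because $\frac1n\sum_j(\mathcal I_i^\phi(X_{i,j}))^2\to\mathbb E(\mathcal I_i^\phi)^2\in(0,\infty)$. For the remaining pieces, Corollary~\ref{cor:nonortho order} together with Lemma~\ref{lem: 1st fourth} gives $\mathbb E_*(\widehat{\mathcal I}^b)^4=\Theta_p(n^{-2})$, so by Cauchy--Schwarz $\mathbb E_*(\widehat{\mathcal I}^b\delta^b)^2\le(\mathbb E_*(\widehat{\mathcal I}^b)^4)^{1/2}(\mathbb E_*(\delta^b)^4)^{1/2}=O_p(n^{-3})$ and $\Variance_*((\delta^b)^2)\le\mathbb E_*(\delta^b)^4=O_p(n^{-4})$, and the cross-covariances among the three pieces are then $O_p(n^{-5/2})$ by a further Cauchy--Schwarz. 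Collecting these, $S^2=\Variance_*((\widehat{\mathcal I}^b)^2)+O_p(n^{-5/2})=\Theta_p(n^{-2})$.

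The main obstacle is this last step. Because $\mathbb E_*(\hat\phi^b-\mu)^4$ and $\sigma^4$ are each of order $n^{-2}$ and nearly cancel in $S^2$, a naive moment expansion loses the order, and the argument has to be organised around the genuinely $\Theta_p(n^{-2})$ object $\Variance_*((\widehat{\mathcal I}^b)^2)$, whose two-sided bound is imported from Lemma~\ref{lem: 1st fourth}; the $\delta^b$-corrections must then be shown to be $o_p(n^{-2})$, which needs the fourth-moment control in Assumptions~\ref{assumption:true smoothness}--\ref{assumption:smoothness}. The conditional CLT itself, the $B$-versus-$(B-1)$ adjustment, and the Hölder/Cauchy--Schwarz bookkeeping are routine.
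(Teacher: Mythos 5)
Your proposal is correct and follows essentially the same route as the paper: expand $\hat\phi^b$ via Assumption~\ref{assumption:smoothness}, import the orders $\Variance_*(\widehat{\mathcal I}^b)=\Theta_p(n^{-1})$ and $\Variance_*\bigl((\widehat{\mathcal I}^b)^2\bigr)=\Theta_p(n^{-2})$ from Corollary~\ref{cor:nonortho order} and Lemma~\ref{lem: 1st fourth}, and then invoke the conditional i.i.d.\ CLT in the large-$B$ limit. The one place you deviate in presentation is instructive: the paper's proof computes $\Variance_*X$ exactly from Lemma~\ref{lemma:variance of sample covariance} and then dismisses the order step with the single phrase ``combining the above estimate,'' whereas you make explicit why the $\Theta_p(n^{-2})$ bound does not collapse under the near-cancellation of $\mathbb E_*(\hat\phi^b-\mu)^4$ and $\sigma^4$ --- namely, by computing the variance of the squared \emph{centered} quantity directly and showing, via Cauchy--Schwarz with the fourth-moment control $\mathbb E_*(\delta^b)^4=O_p(n^{-4})$, that every correction to $\Variance_*\bigl((\widehat{\mathcal I}^b)^2\bigr)$ is $O_p(n^{-5/2})$. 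This is precisely the detail the paper elides, and you handle it correctly; the bookkeeping of the $\frac{B-1}{B}$ bias and the $(\overline\phi-\mu)^2=O_p(B^{-1})$ remainder is also right.
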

\begin{proof}

     By Lemma \ref{lemma:variance of sample covariance}, the variance of $X$ is
    \[\Variance_* X = \frac{(B-1)^2}{B^3}\left[\Variance_*(\hat\phi^b-E_*\hat\phi^b)^2\right]+\frac{2(B-1)}{B^3}\Variance^2_*\hat\phi^b.\]
    
    Now we need to specify $\Variance_*(\hat\phi^b-E_*\hat\phi^b)^2$.
    By Assumption \ref{assumption:smoothness},
    \[\hat{\phi}^b=\hat{\phi}+\mathcal{\hat I}^b+\epsilon^b\]
    where $\mathcal{\hat I}^b=\sum_{i=1}^{m}\frac{1}{n}\sum_{j=1}^{n}\mathcal{I}^\phi_i(x^b_{i,j})$ and $\mathbb E_*(\epsilon^b)^4=O_p(\frac{1}{n^4})$. 
    We have already obtained
    \[\Variance_*(\mathcal{\hat I}^b) = \Theta_p(\frac{1}{n}),\]
    and \[\Variance_*(\mathcal{\hat I}^b)^2 = \Theta_p(\frac{1}{n^2}),\]
    in Corollary \ref{cor:nonortho order} and Lemma \ref{lem: 1st fourth} respectively.
    
    Combining the above estimate, we obtain $\Variance_*(\hat\phi^b-E_*\hat\phi^b)^2=\Theta_p(\frac{1}{n^2})$. Moreover, $\Variance^2_*\hat\phi^b$ is also $\Theta_p(\frac{1}{n^2})$.
    Therefore we can apply the central limit theorem when we are considering the large $B$ limit, conditioning on the input data. 
\end{proof}
Similar to the situation of debiasing, these two theorems show that conditioning on the input data, the variance of the Standard Bootstrap estimator due to simulation is at least $n$ times larger than the variance of the Orthogonal Bootstrap estimator due to simulation when estimating variance.

\subsection{Improved Variance Estimation}\label{appendix:iob}
Occasionally, our Orthogonal Bootstrap estimator for variance can yield negative values, particularly when the sample size $n$ is small. In such instances, the resulting confidence intervals constructed using the Orthogonal Bootstrap method become devoid of meaningful interpretation.

However, this issue can be effectively addressed through a slight modification to our estimator, without compromising the fundamental theoretical properties of our method. We refer to this enhanced version as the \textbf{Improved Orthogonal Bootstrap}.

The core idea behind the Improved Orthogonal Bootstrap is that when simulation results produce negative values, we seamlessly transition to utilizing the Infinitesimal Jackknife method. The algorithm for the Improved Orthogonal Bootstrap is summarized in Algorithm \ref{alg:ciconstructioniob}. Importantly, we emphasize that this improvement preserves the desirable properties of the original Orthogonal Bootstrap method.

\begin{algorithm*}
\caption{Variance Estimation via Improved Orthogonal Bootstrap}\label{alg:ciconstructioniob}
 %\hspace*{\algorithmicindent} 
\textbf{Input}: A generic performance measure $\phi(F_1,\cdots,F_m)$, i.i.d samples  $\{X_{i,1},\cdots,X_{i,n_i}\}\in\mathbb{R}^{d_1}$ of $F_i$, and influence function $\mathcal{I}_i^\phi$ of $\phi$ respect to $\hat{F}_i$.  \\
 %\hspace*{\algorithmicindent} 
 \textbf{Output}: Estimation of {$\Variance_{F_1,\cdots,F_m}\phi(\hat F_1,\cdots,\hat F_m)$}.
\begin{algorithmic}
\STATE $\hat \phi\leftarrow \phi(\hat F_1,\cdots,\hat F_m)$, where $\hat F_i=\frac{1}{n_i}\sum_{j=1}^{n_i}\delta_{X_{i,j}}$

\FOR{b=1:B}
\FOR{i=1:m}
\STATE Sample $\{x_{i,1}^b,\cdots,x_{i,n_i}^b\}$ i.i.d from $\hat{F_i}$
\ENDFOR
\STATE $\hat\phi^b\leftarrow\phi(\hat F_1^b,\cdots,\hat F_m^b)$, where $\hat{F}_i^b=\frac{1}{n_i}\sum_{j=1}^{n_i} \delta_{x_{i,j}^b}$
\STATE $\mathcal{\hat I}^b=\sum_{i=1}^m\frac{1}{n_i}\sum_{j=1}^{n_i}\mathcal{I}_i^\phi(\tilde{x}_{i,j})$
\ENDFOR
\STATE $\overline{\phi-\mathcal{I}}\leftarrow\frac{1}{B}\sum_{b=1}^{B}(\hat\phi^b-\mathcal{\hat I}^b)$, $\overline{\mathcal{I}}\leftarrow\frac{1}{B}\sum_{b=1}^{B}\mathcal{\hat I}^b$.
\STATE Construct the $1-\alpha$-confidence interval as $$[\hat\phi-z_{1-\alpha/2}S,\hat\phi+z_{1-\alpha/2}S],$$ where 
\begin{equation}
    S^2=\left\{\begin{matrix}
  S_1^2& \quad\text{if}\quad S_1^2\geq0\\
  S_2^2& \quad\text{if}\quad S_1^2<0
\end{matrix}\right.,
\end{equation}

\begin{equation}
    \begin{aligned}
        S_1^2&=\sum_{i=1}^m\frac{1}{n_i^2}\sum_{j=1}^{n_i}(\mathcal{I}_i^\phi (X_{i,j}))^2+\frac{1}{B}\sum_{b=1}^{B}\left(\hat\phi^b-\mathcal{\hat I}^b-\overline{\phi-\mathcal{I}}\right)^2+\frac{2}{B}\sum_{b=1}^{B}\left(\hat\phi^b-\mathcal{\hat I}^b-\overline{\phi-\mathcal{I}}\right)\left(\mathcal{\hat I}^b-\overline{\mathcal{I}}\right),
    \end{aligned}
\end{equation}
\begin{equation}
    \begin{aligned}
        S_2^2&=\sum_{i=1}^m\frac{1}{n_i^2}\sum_{j=1}^{n_i}(\mathcal{I}_i^\phi (X_{i,j}))^2,
    \end{aligned}
\end{equation}
and $z_{1-\alpha/2}$ is the $(1-\alpha/2)$-quantile of the standard normal. 
%\State $N \gets n$
%\While{$N \neq 0$}
%\If{$N$ is even}
%    \State $X \gets X \times X$
%    \State $N \gets \frac{N}{2}$  \Comment{This is a comment}
%\ElsIf{$N$ is odd}
%    \State $y \gets y \times X$
%    \State $N \gets N - 1$
%\EndIf
%\EndWhile
\end{algorithmic}
\end{algorithm*}

Now we assert that our improved Orthogonal Bootstrap still has the same favorable properties as original Orthogonal Bootstrap. Crucially, it's worth noting that the occurrence of negative simulation results is exceedingly rare, and as such, has a negligible impact on the overall algorithm's performance.
\begin{lemma} [Chebyshev's Inequality]\label{lem: cheby 2}
    For a random variable $X$ with finite variance $\Variance X$ and for $t>\mathbb{E}X$, we have
       \[ \mathbb{P}(X>t)\le \frac{\Variance X}{(t-\mathbb{E}X)^2}.\]
\end{lemma}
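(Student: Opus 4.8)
The plan is to derive this one-sided tail bound from Markov's inequality applied to the squared centered variable $Y := (X - \mathbb{E}X)^2$. Since $\Variance X = \mathbb{E}[Y] < \infty$ by hypothesis, Markov's inequality is available and non-vacuous. The only structural observation needed is that the assumption $t > \mathbb{E}X$ makes $t - \mathbb{E}X$ strictly positive, so that on the event $\{X > t\}$ one has $X - \mathbb{E}X > t - \mathbb{E}X > 0$, and squaring an inequality between two positive numbers preserves its direction; hence $\{X > t\} \subseteq \{Y \ge (t - \mathbb{E}X)^2\}$ (a one-sided containment, not an identity of events).

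Concretely, the steps I would carry out are: (i) record the event inclusion $\{X > t\} \subseteq \{Y \ge (t - \mathbb{E}X)^2\}$, so that $\mathbb{P}(X > t) \le \mathbb{P}(Y \ge (t - \mathbb{E}X)^2)$; (ii) apply Markov's inequality to the nonnegative random variable $Y$ with threshold $(t - \mathbb{E}X)^2 > 0$, obtaining $\mathbb{P}(Y \ge (t - \mathbb{E}X)^2) \le \mathbb{E}[Y]/(t - \mathbb{E}X)^2$; (iii) substitute $\mathbb{E}[Y] = \mathbb{E}[(X - \mathbb{E}X)^2] = \Variance X$ and chain the two inequalities to conclude $\mathbb{P}(X > t) \le \Variance X/(t - \mathbb{E}X)^2$.

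There is essentially no obstacle here --- this is the textbook Chebyshev argument, and its only role in the paper is to control the rare event that the Orthogonal Bootstrap variance estimator $S_1^2$ falls below zero (so that the fallback to the Infinitesimal Jackknife estimator happens with vanishing probability). If one prefers to avoid invoking Markov's inequality as a named black box, the same conclusion follows directly from the chain $\Variance X = \mathbb{E}[(X - \mathbb{E}X)^2] \ge \mathbb{E}[(X - \mathbb{E}X)^2 \mathbf{1}\{X > t\}] \ge (t - \mathbb{E}X)^2\, \mathbb{P}(X > t)$, after dividing by the positive quantity $(t - \mathbb{E}X)^2$.
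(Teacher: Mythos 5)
Your argument is correct and is the standard one: since the paper cites this result as a named inequality without proof, there is nothing to compare against, and your Markov-on-$(X-\mathbb{E}X)^2$ derivation (equivalently, the direct chain $\Variance X \ge \mathbb{E}[(X-\mathbb{E}X)^2\mathbf{1}\{X>t\}] \ge (t-\mathbb{E}X)^2\,\mathbb{P}(X>t)$) is exactly the textbook route. One remark, purely for context and not a flaw: the bound stated and proved here is the two-sided Chebyshev bound restricted to one tail; the sharp one-sided version (Cantelli's inequality) would give $\mathbb{P}(X>t)\le \Variance X/\bigl(\Variance X+(t-\mathbb{E}X)^2\bigr)$, but the weaker form is all that the paper uses in Theorem~\ref{thm: improved ci}, so your proof establishes precisely what is needed.
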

\begin{lemma}\label{lem: tail mean}
    For a random variable $X\geq0$,
    \[\mathbb{E} X = \int_0^\infty \mathbb{P}(X>t)\mathrm{d}t.\]
\end{lemma}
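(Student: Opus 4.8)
The plan is to derive the identity from the pointwise ``layer-cake'' representation of a non-negative number together with Tonelli's theorem. First I would record the elementary fact that for every outcome $\omega$, since $X(\omega)\ge 0$,
\[
X(\omega)=\int_0^\infty \mathbf{1}\{t<X(\omega)\}\,\mathrm{d}t,
\]
because the integrand, viewed as a function of $t$, is the indicator of the interval $[0,X(\omega))$, whose Lebesgue measure is exactly $X(\omega)$ (and equals $+\infty$ precisely when $X(\omega)=+\infty$).

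Second, I would take expectations on both sides and interchange the expectation with the Lebesgue integral over $(0,\infty)$:
\[
\mathbb{E} X=\mathbb{E}\!\left[\int_0^\infty \mathbf{1}\{t<X\}\,\mathrm{d}t\right]=\int_0^\infty \mathbb{E}\big[\mathbf{1}\{t<X\}\big]\,\mathrm{d}t=\int_0^\infty \mathbb{P}(X>t)\,\mathrm{d}t,
\]
where the last equality is merely the definition of the probability of the event $\{X>t\}$.

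The only point requiring care — and the closest thing to an obstacle in an otherwise mechanical argument — is justifying the middle interchange. This is an application of Tonelli's theorem to the product of the underlying probability space with $\big((0,\infty),\mathcal{B},\mathrm{Leb}\big)$: the map $(\omega,t)\mapsto \mathbf{1}\{t<X(\omega)\}$ is jointly measurable, since $\{(\omega,t):t<X(\omega)\}$ lies in the product $\sigma$-algebra (it is the preimage of an open half-line under the measurable map $(\omega,t)\mapsto X(\omega)-t$), and the integrand is non-negative, so Tonelli applies with no integrability hypothesis and the identity holds in $[0,\infty]$, with both sides simultaneously finite or infinite. If one prefers to sidestep product-measure machinery, an equivalent route is to verify the identity first for simple non-negative random variables, where it reduces to a finite sum of the form $\sum_k c_k\,\mathbb{P}(X>a_k)$, and then pass to a general $X\ge 0$ via a monotone approximation $X_n\uparrow X$ by simple functions, invoking the monotone convergence theorem on both $\mathbb{E} X_n\uparrow \mathbb{E} X$ and $\int_0^\infty \mathbb{P}(X_n>t)\,\mathrm{d}t\uparrow \int_0^\infty \mathbb{P}(X>t)\,\mathrm{d}t$.
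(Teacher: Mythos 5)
Your proof is correct. The paper itself states this lemma without proof, treating it as a standard fact (the ``layer-cake'' or tail-sum formula for nonnegative random variables), so there is no proof in the paper to compare against; your argument via the pointwise identity $X=\int_0^\infty \mathbf{1}\{t<X\}\,\mathrm{d}t$ followed by Tonelli is exactly the textbook justification, and your careful note on joint measurability and the alternative simple-function/monotone-convergence route are both sound.
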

\begin{theorem}\label{thm: improved ci}
    Under the same assumptions as Theorem \ref{assumption:smoothness}, consider the Orthogonal Bootstrap debiasing estimator defined in Algorithm \ref{alg:ciconstructioniob}.
    If the number of  Monte Carlo replications $B\geq Cn^\alpha$ for some absolute constant $C>0$ and $\alpha\geq 0$, then $\Variance_*(X)=O_p(\frac{1}{n^{3+\alpha}})$.
\end{theorem}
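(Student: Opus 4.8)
The plan is to exploit that the Improved Orthogonal Bootstrap estimator $S^2$ coincides with the original estimator $S_1^2$ of Theorem~\ref{thm: original ci} except on the event $\{S_1^2<0\}$, where it falls back to the Infinitesimal Jackknife value $S_2^2=\sum_i\frac{1}{n^2}\sum_j(\mathcal I_i^\phi(X_{i,j}))^2$, which is \emph{deterministic} conditional on the data and hence carries no simulation variance by itself. Write $S^2=S_1^2\mathbf{1}\{S_1^2\ge 0\}+S_2^2\mathbf{1}\{S_1^2<0\}$, put $m_1:=\mathbb E_* S_1^2$ (a function of the data only), and bound $\Variance_*(S^2)\le \mathbb E_*[(S^2-m_1)^2]$. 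Splitting the expectation over $\{S_1^2\ge0\}$ and using that $S_2^2$ is constant in the simulation randomness gives
\[
\Variance_*(S^2)\le \mathbb E_*\big[(S_1^2-m_1)^2\big]+(S_2^2-m_1)^2\,\mathbb P_*(S_1^2<0)=\Variance_*(S_1^2)+(S_2^2-m_1)^2\,\mathbb P_*(S_1^2<0).
\]
By Theorem~\ref{thm: original ci} the first term is $O_p(n^{-3-\alpha})$, so the whole argument reduces to controlling the second term.

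For that, the first step is to locate $m_1$. Computing the simulation expectations of the sample variance and cross-covariance appearing in \eqref{eq:varianceestimator} (note the $1/B$ normalization) yields the identity $m_1=\frac1B\Variance_*(\hat{\mathcal I}^b)+\frac{B-1}{B}\Variance_*(\hat\phi^b)$, a convex combination of two nonnegative quantities: the first, $\Variance_*(\hat{\mathcal I}^b)=S_2^2$, is $\Theta_p(1/n)$ by Lemma~\ref{lem: 1st var} and Corollary~\ref{cor:nonortho order}, and the second, $\Variance_*(\hat\phi^b)$, is $\Theta_p(1/n)$ by the argument in the proof of Theorem~\ref{thm: original ci sb} (equivalently Theorem~\ref{thm: debias sb}). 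Hence $m_1=\Theta_p(1/n)$, so $(S_2^2-m_1)^2=O_p(n^{-2})$. The second step bounds the fallback probability: since $S_1^2<0\le m_1$ forces $|S_1^2-m_1|\ge m_1$, Chebyshev's inequality (Lemma~\ref{lem: cheby 2}) gives
\[
\mathbb P_*(S_1^2<0)\le \mathbb P_*\big(|S_1^2-\mathbb E_*S_1^2|\ge m_1\big)\le \frac{\Variance_*(S_1^2)}{m_1^2}=\frac{O_p(n^{-3-\alpha})}{\Theta_p(n^{-2})}=O_p(n^{-1-\alpha}).
\]
Multiplying, $(S_2^2-m_1)^2\,\mathbb P_*(S_1^2<0)=O_p(n^{-2})\cdot O_p(n^{-1-\alpha})=O_p(n^{-3-\alpha})$, whence $\Variance_*(S^2)=O_p(n^{-3-\alpha})$ as claimed.

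The main obstacle is precisely the non-smooth, data-adaptive switching rule. A naive bound — for instance adding the variances of the two candidate estimators, or centering at $S_2^2$ rather than at $m_1$ — fails to capture the relevant cancellation and only gives $O_p(n^{-3})$, losing the $n^{-\alpha}$ factor once $\alpha>0$. The key is that activating the fallback branch requires an $\Omega_p(1/n)$ deviation of $S_1^2$ from its own simulation mean, an event of probability only $O(\Variance_*(S_1^2)/m_1^2)=O_p(n^{-1-\alpha})$, small enough to absorb the $O_p(n^{-2})$ gap $(S_2^2-m_1)^2$. A minor care point is that $m_1$ is itself random in the data, so the $\Theta_p$/$O_p$ statements are meant to hold jointly with probability tending to one; in particular $m_1>0$ eventually, so the Chebyshev step is well posed.
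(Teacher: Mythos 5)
Your proof is correct and reaches the stated rate, but it takes a genuinely different route from the paper's. The paper works directly with the truncated random variable $X\,\mathbf{1}_{X>-\mathcal{I}}$ where $X=S_1^2-S_2^2$ is the simulated remainder and $\mathcal{I}=S_2^2$ the (simulation-deterministic) IJ part: it bounds $\Variance_*(\tilde S)\le\Variance_*X+2\,\mathbb{E}_*X\cdot\mathbb{E}_*(X\mathbf{1}_{X<-\mathcal{I}})$ and controls the truncation bias $\mathbb{E}_*(X\mathbf{1}_{X<-\mathcal{I}})$ via the layered-tail formula (Lemma~\ref{lem: tail mean}) combined with Chebyshev applied across a continuum of thresholds $t\ge\mathcal{I}$, obtaining $\mathbb{E}_*(X\mathbf{1}_{X<-\mathcal{I}})=O_p(1/(Bn^2))$. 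You instead center at $m_1=\mathbb{E}_*S_1^2$, split the second moment by branch, and reduce the fallback contribution to $(S_2^2-m_1)^2\,\mathbb{P}_*(S_1^2<0)$, which you control with a single Chebyshev estimate. The price of your route is that you must pin down the order of $m_1$, which you correctly derive from the identity $m_1=\frac1B\Variance_*(\hat{\mathcal I}^b)+\frac{B-1}{B}\Variance_*(\hat\phi^b)$ together with Corollary~\ref{cor:nonortho order} and Theorem~\ref{thm: debias sb}; the paper never needs $\mathbb{E}_*S_1^2$, only that $\mathcal{I}=\Theta_p(1/n)$. Conceptually, both arguments trade a $O_p(n^{-2})$ gap between the two branches against the $O_p(n^{-1-\alpha})$ probability of the fallback event, but your decomposition surfaces that trade-off more transparently and avoids the tail-integral machinery, at the cost of an extra auxiliary estimate for $m_1$. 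The dependence on $B$ is handled equivalently in both cases, since $\Variance_*(S_1^2)=O_p(1/(Bn^3))$ feeds the Chebyshev step in each.
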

\begin{proof}
    Denote $\mathcal{I}:=\frac{1}{n^2}\sum_{i=1}^m\sum_{j=1}^n(\mathcal{I}_i^\phi (X_{i,j}))^2>0$, and 
    \[X:=\frac{1}{B}\sum_{b=1}^{B}\left(\hat\phi^b-\mathcal{\hat I}^b-\overline{\phi-\mathcal{I}}\right)^2+\frac{2}{B}\sum_{b=1}^{B}\left(\hat\phi^b-\mathcal{\hat I}^b-\overline{\phi-\mathcal{I}}\right)\left(\mathcal{\hat I}^b-\overline{\mathcal{I}}\right).\]
 As $S^2>0$, we have $\mathbb{E}_*(S^2)=\mathcal{I}+\mathbb{E}_*(X)>0$ and $\mathcal{I}$ is of order $O_p(\frac{1}{n})$.
    Now our improved estimator is \[\tilde{S}:=\mathcal{I}+X 1_{X>-\mathcal{I}}.\]

For any $t\geq\mathcal{I}$, by Lemma \ref{lem: cheby 2} we have
    \[\mathbb{P}_*(X<-t)\le \frac{\Variance_* X}{(t+\mathbb{E}_*X)^2}.\]
As $-X1_{X<-\mathcal{I}}\geq 0$, by Lemma \ref{lem: tail mean},
\begin{align*}
    \mathbb{E}_*(X1_{X<-\mathcal{I}})&=-\int_{0}^{\infty} \mathbb{P}_*(X1_{X<-\mathcal{I}}<-t)\mathrm{d}t\\
    &= -\int_{0}^{\mathcal{I}} \mathbb{P}_*(X<-\mathcal{I})\mathrm{d}t -\int_{\mathcal{I}}^\infty \mathbb{P}_*(X<-t)\mathrm{d}t\\
    &\geq -\frac{\mathcal{I}\Variance_* X}{(\mathcal{I}+\mathbb{E}_*X)^2}-\frac{\Variance_* X}{\mathcal{I}+\mathbb{E}_*X}.
\end{align*}
With $\mathbb{E}_* X = O_p(\frac{1}{n^{\frac{3}{2}}})$ and $\Variance_* X=O_p(\frac{1}{Bn^3})$, we get $\mathbb{E}_* X1_{X<-\mathcal{I}}=O_p(\frac{1}{Bn^2})$. Further,
    \[\Variance_* \tilde{S} \leq \Variance_* X +(\mathbb{E}_* X)^2 - (\mathbb{E}_* X1_{X>-\mathcal{I}})^2\leq \Variance_* X +2\mathbb{E}_* X\mathbb{E}_* X1_{X<-\mathcal{I}}.\]

    As $\mathbb{E}_* X = O_p(\frac{1}{n^{\frac{3}{2}}})$ and $\mathbb{E}_* X1_{X<-\mathcal{I}}=O_p(\frac{1}{Bn^2})$, $\Variance_* \tilde{S}$ is still of order 
    $O_p(\frac{1}{Bn^3})$, which is the same as the order of $\Variance_* S$.
\end{proof}

\subsection{Verifying the Assumptions using Kernel Mean Embedding}
\label{section:verifyingKernelMMD}

Now we investigate cases where we can theoretically prove Assumption \ref{assumption:true smoothness} and Assumption \ref{assumption:smoothness}.

Let us consider the performance measure defined in Equation \eqref{equation: rkhs embed}
\begin{equation*}
     \phi(F_1,\cdots,F_m)=h(\mu_1(F_1),\cdots,\mu_m(F_m))
\end{equation*}
where $\mu_i$ are kernel mean embeddings using kernels $k_i$, $\mathcal{H}:=\mathcal{H}_1\times\cdots\times\mathcal{H}_m$ where $\mathcal{H}_i$ is the reproducing kernel Hilbert space respect to the kernel $k_i$ and $h:\mathcal{H}\to\mathbb{R}$ is a functional on $\mathcal{H}$. Denote $\mu:=\mu_1\times\cdots\mu_m$ and $F=F_1\times\cdots F_m$ for simplicity.
Suppose Assumption \ref{assumption: rkhs embed} and Assumption \ref{assumption: lip in rkhs} holds for the performance measure. For the readers convenience, we restate the two assumptions here.

\begin{assumption}
   There exists kernel mean embeddings $\mu_i:\mathcal{F}_i\to\mathcal{H}_i$ which maps $F_i$ into $\mathcal{H}_i$ for $i=1,\cdots,m$. 
    Moreover, for all $i=1,\cdots,m$, $\mathbb{E}k_i(X_i,X_i)^4<\infty$ and $\mathbb{E}k_i(X_i,Y_i)^4<\infty$ where $X_i,Y_i$ are independent samples from $F_i$.
\end{assumption}
It is easy to see if $\mathbb{E}k_i(X_i,X_i)^4<\infty$ where $X_i$ are samples from $F_i$, then $F_i$ can be embedded into RKHS with kernel $k_i$ by Lemma \ref{can be embed}.

\begin{assumption}
    The non-constant functional $h:\mathcal{H}_1\times\cdots\times\mathcal{H}_m\to\mathbb{R}$ is of class $C^1$ and its derivative is Lipschitz in the sense that 
    $|Dh(x_1)(v)-Dh(x_2)(v)|\le L\|x_1-x_2\|_{\mathcal{H}}\|v\|_{\mathcal{H}}$. Moreover, $\|\partial_ih(\mu(F))^4\|_{\mathcal{H}_i}<\infty$.
\end{assumption}

The Lipschitz condition is the key ingredient for our result. If $h$ is of class $C^1$ and every partial derivative of $h$ is Lipschitz, then by Theorem \ref{thm: partial and total} the Lipschitz condition is satisfied. The last assumption is important for controlling the moment of influence function.

\begin{theorem}\label{thm: influence exist in rkhs}
    Under Assumption \ref{assumption: rkhs embed} and Assumption \ref{assumption: lip in rkhs}, Assumption \ref{assumption:true smoothness} and Assumption \ref{assumption:smoothness} holds for the performance measure given by (\ref{equation: rkhs embed}).
\end{theorem}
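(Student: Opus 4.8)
The plan is to obtain both smoothness assumptions from the chain rule together with the Lipschitz control on $Dh$, by writing the influence functions in closed form via Riesz representation and recognising $\delta$ and $\epsilon$ as first--order Taylor remainders of $h$. First I would handle the \emph{derivatives and influence functions}. Each kernel mean embedding $\mu_i$ is linear on the difference space $\mathcal{D}_i$ and isometric into $\mathcal{H}_i$ (by definition the MMD norm is the $\mathcal{H}_i$--norm of the embedded point), and Assumption~\ref{assumption: rkhs embed} with Lemma~\ref{can be embed} places the relevant distributions in $\mathcal{H}_i$. Hence on $\mathcal{H}=\mathcal{H}_1\times\cdots\times\mathcal{H}_m$ the functional $\phi=h\circ\mu$ is Fr\'echet differentiable with $D\phi(F)[\Delta]=Dh(\mu(F))[\mu(\Delta)]$ by the chain rule for maps between Banach spaces, so by Lemma~\ref{lemma: Gateaux and Frechet} its first von Mises differential exists and is $1$--linear. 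Perturbing only the $i$th input, the influence function at $F$ is $\phi_i(x)=\partial_i h(\mu(F))\bigl[k_i(x,\cdot)-\mu_i(F_i)\bigr]$; since $h\in C^1$, $\partial_i h(\mu(F))$ is a bounded functional on $\mathcal{H}_i$, so Riesz supplies $g_i\in\mathcal{H}_i$ with $\phi_i(x)=\langle g_i,k_i(x,\cdot)\rangle_{\mathcal{H}_i}-\langle g_i,\mu_i(F_i)\rangle_{\mathcal{H}_i}=g_i(x)-\mathbb{E}_{X_i\sim F_i}g_i(X_i)$ by the reproducing property---a mean--zero, $1$--linear form, exactly the structure demanded by Assumption~\ref{assumption:true smoothness}. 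The identical argument at the empirical model gives $\mathcal{I}_i^\phi(x)=\hat g_i(x)-\mathbb{E}_{X_i\sim\hat F_i}\hat g_i(X_i)$ with $\hat g_i\in\mathcal{H}_i$ representing $\partial_i h(\mu(\hat F))$, the $1$--linear form required by Assumption~\ref{assumption:smoothness} (consistent with Lemma~\ref{lem: 1st var}).

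Next I would control the \emph{remainders and their moments}. By the integral form of the first--order Taylor expansion of $h$ at $\mu(F)$, resp.\ at $\mu(\hat F)$ (Theorem~\ref{thm: multi Taylor}), and the hypothesis $|Dh(x_1)(v)-Dh(x_2)(v)|\le L\|x_1-x_2\|_{\mathcal{H}}\|v\|_{\mathcal{H}}$, one gets $|\delta|\le \frac{L}{2}\|\mu(\hat F)-\mu(F)\|_{\mathcal{H}}^2$ and $|\epsilon|\le \frac{L}{2}\|\mu(\hat F^b)-\mu(\hat F)\|_{\mathcal{H}}^2$. Now $\mu_i(\hat F_i)-\mu_i(F_i)$, and $\mu_i(\hat F_i^b)-\mu_i(\hat F_i)$ under $\mathbb{E}_*$, are averages of $n$ i.i.d.\ mean--zero elements of $\mathcal{H}_i$ whose moments up to order $8$ are finite, since $\|k_i(x,\cdot)-\mu_i(F_i)\|_{\mathcal{H}_i}^8\le C\bigl(k_i(x,x)^4+\|\mu_i(F_i)\|_{\mathcal{H}_i}^8\bigr)$ and $\mathbb{E}k_i(X_i,X_i)^4<\infty$, with $\mathbb{E}k_i(X_i,Y_i)^4<\infty$ ensuring $\|\mu_i(\hat F_i)\|_{\mathcal{H}_i}=O_p(1)$ through $\|\mu_i(\hat F_i)\|^2=n^{-2}\sum_{j,k}k_i(X_{i,j},X_{i,k})$. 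Counting index coincidences exactly as in the proofs of Lemma~\ref{lemma:variance} and Lemma~\ref{lemma:four} yields the Hilbert--space Marcinkiewicz--Zygmund estimates $\mathbb{E}\|\mu_i(\hat F_i)-\mu_i(F_i)\|^4=O(n^{-2})$, $\mathbb{E}\|\mu_i(\hat F_i)-\mu_i(F_i)\|^8=O(n^{-4})$, and $\mathbb{E}_*\|\mu_i(\hat F_i^b)-\mu_i(\hat F_i)\|^8=O_p(n^{-4})$; hence $\mathbb{E}[\delta^2]=O(n^{-2})=o(n^{-1})$ and $\mathbb{E}_*[\epsilon^4]=O_p(n^{-4})$. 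The condition $\Variance_{X_i\sim F_i}[\phi_i(X_i)]>0$ follows since $h$ is non--constant (so $\partial_i h(\mu(F))\ne0$, hence $g_i$ is not $F_i$--a.s.\ constant), and $\mathbb{E}_{X_i\sim F_i}[\phi_i^4(X_i)]<\infty$ from $\|g_i\|_{\mathcal{H}_i}<\infty$, $|\phi_i(x)|\le\|g_i\|_{\mathcal{H}_i}\bigl(\sqrt{k_i(x,x)}+\|\mu_i(F_i)\|_{\mathcal{H}_i}\bigr)$, and $\mathbb{E}k_i(X_i,X_i)^2\le(\mathbb{E}k_i(X_i,X_i)^4)^{1/2}<\infty$.

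Finally, the remaining condition $\mathbb{E}[(\mathcal{I}_i^\phi-\phi_i)^4(X_{i,1})]=o(1)$ is where I expect the main obstacle---not conceptually, but because the moment budget is tight. Writing $\mathcal{I}_i^\phi(x)-\phi_i(x)=\langle\hat g_i-g_i,k_i(x,\cdot)\rangle_{\mathcal{H}_i}-c_i$ with $c_i$ independent of $x$, the Lipschitz bound gives $\|\hat g_i-g_i\|_{\mathcal{H}_i}\le L\|\mu(\hat F)-\mu(F)\|_{\mathcal{H}}=O_p(n^{-1/2})$, so everything is governed by $\|\mu(\hat F)-\mu(F)\|$. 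Two points need care: (i) $\mathcal{I}_i^\phi$ depends on the whole sample and is therefore correlated with the evaluation point $X_{i,1}$, which I would treat with the leave--one--out splitting $\mu_i(\hat F_i)=\frac{n-1}{n}\mu_i(\hat F_i^{(-1)})+\frac{1}{n}k_i(X_{i,1},\cdot)$, whose leading piece---and every $\hat F_l$, $l\ne i$---is independent of $X_{i,1}$; and (ii) to stay within the stated fourth--moment hypotheses one must expand $c_i=\langle\hat g_i-g_i,\mu_i(F_i)\rangle+\langle\hat g_i-g_i,\mu_i(\hat F_i)-\mu_i(F_i)\rangle+\langle g_i,\mu_i(\hat F_i)-\mu_i(F_i)\rangle$ so that the small factor $\|\hat g_i-g_i\|=O_p(n^{-1/2})$ is always paired either with a deterministic factor like $\|\mu_i(F_i)\|$ or with another $O_p(n^{-1/2})$ factor, never with the $O_p(1)$ factor $\|\mu_i(\hat F_i)\|$ (which would force eighth kernel moments). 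Carrying this out gives $\mathbb{E}[(\mathcal{I}_i^\phi-\phi_i)^4(X_{i,1})]=O(n^{-2})=o(1)$, completing the verification of Assumption~\ref{assumption:true smoothness} and Assumption~\ref{assumption:smoothness}; the only genuinely subtle part of the whole argument is this last moment bookkeeping, where the budget $\mathbb{E}k_i(X_i,X_i)^4<\infty$, $\mathbb{E}k_i(X_i,Y_i)^4<\infty$ is exactly, but only just, enough.
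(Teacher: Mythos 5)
Your proof follows essentially the same skeleton as the paper's: chain rule plus Riesz representation of $Dh(\mu(\cdot))$ to identify the influence functions, the integral form of the first--order Taylor remainder together with the Lipschitz bound on $Dh$ to get $|\delta|\le\frac{L}{2}\|\mu(\hat F)-\mu(F)\|^2_{\mathcal{H}}$ and $|\epsilon|\le\frac{L}{2}\|\mu(\hat F^b)-\mu(\hat F)\|^2_{\mathcal{H}}$, and the index--counting moment lemmas (Lemma~\ref{lemma:variance}, Lemma~\ref{lemma:four}) to bound $\mathbb{E}\|\mu(\hat F)-\mu(F)\|_{\mathcal{H}}^{4},\|\cdot\|^{8}$ and hence $\mathbb{E}[\delta^2]$ and $\mathbb{E}_*[\epsilon^4]$. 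For the final convergence $\mathbb{E}[(\mathcal{I}_i^\phi-\phi_i)^4(X_{i,1})]=o(1)$ the paper splits into three pieces
\begin{align*}
\partial_i h(y)(X_{i,1})-\partial_i h(x)(X_{i,1}),\quad
\bigl\langle \hat g_i-g_i,\mu_i(\hat F_i)\bigr\rangle,\quad
\bigl\langle g_i,\mu_i(\hat F_i)-\mu_i(F_i)\bigr\rangle
\end{align*}
(in your notation, with $y=\mu(\hat F)$, $x=\mu(F)$) and bounds each by $L^4$ times a Cauchy--Schwarz product, which is also what you propose.

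The place where you diverge is not a gap but an overcaution. You argue that one must (i) introduce a leave--one--out split because $\hat g_i$ is correlated with the evaluation point $X_{i,1}$, and (ii) further expand the middle term $\langle\hat g_i-g_i,\mu_i(\hat F_i)\rangle$ so as never to pair the $O_p(n^{-1/2})$ factor $\|\hat g_i-g_i\|$ with the $O_p(1)$ factor $\|\mu_i(\hat F_i)\|$, on the stated ground that this ``would force eighth kernel moments.'' Neither precaution is needed. For (i), the paper simply uses the Lipschitz inequality pointwise and then Cauchy--Schwarz, e.g.
\begin{align*}
\mathbb{E}\bigl(\partial_i h(y)(X_{i,1})-\partial_i h(x)(X_{i,1})\bigr)^4
\le L^4\sqrt{\mathbb{E}\|y-x\|_{\mathcal{H}}^{8}\,\mathbb{E} k_i^4(X_{i,1},X_{i,1})};
\end{align*}
the dependence between $y$ and $X_{i,1}$ is absorbed by the Cauchy--Schwarz step, so no sample splitting is required. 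For (ii), the worry about the moment budget is a miscalculation: $\|\mu_i(\hat F_i)\|_{\mathcal{H}_i}^2=n^{-2}\sum_{j,k}k_i(X_{i,j},X_{i,k})$, so $\|\mu_i(\hat F_i)\|_{\mathcal{H}_i}^{8}$ is a degree--$4$ polynomial in kernel evaluations, and Young's inequality bounds each summand by a convex combination of $k_i(X_{i,j},X_{i,k})^4$ terms. Hence $\mathbb{E}\|\mu_i(\hat F_i)\|_{\mathcal{H}_i}^{8}<\infty$ already under the stated assumptions $\mathbb{E}k_i(X_i,X_i)^4<\infty$, $\mathbb{E}k_i(X_i,Y_i)^4<\infty$ --- you never come close to needing eighth kernel moments, and the direct pairing
\begin{align*}
\mathbb{E}\bigl\langle\hat g_i-g_i,\mu_i(\hat F_i)\bigr\rangle^4
\le L^4\sqrt{\mathbb{E}\|y-x\|_{\mathcal{H}}^{8}\,\mathbb{E}\|\mu_i(\hat F_i)\|_{\mathcal{H}_i}^{8}}=O(n^{-2})
\end{align*}
goes through. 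Your finer decomposition of $c_i$ and the leave--one--out device are harmless but dead weight; recognizing that the fourth--moment hypothesis already controls $\|\mu_i(\hat F_i)\|^8$ removes the ``only just enough'' tension you felt at the end.
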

\begin{proof}
First we prove Assumption \ref{assumption:true smoothness}.
Denote $x:=\mu(F)$ and $y:=\mu(\hat{F})$. Then as $h$ is of class $C^1$, we can write
    \[ h(y)=h(x)+Dh(x)[y-x]+\delta,\]
    where the remainder term can be controlled by the Lipschitz property of $Dh$. Specifically,
    \begin{align*}
        \delta &= h(y)-h(x)-Dh(x)[y-x] \\
        &= \int_0^1 Dh(x+t(y-x))\mathrm{d}t [y-x] -Dh(x)[y-x]\\
        &= \int_0^1 (Dh(x+t(y-x))-Dh(x))\mathrm{d}t [y-x] 
    \end{align*}
    so  $|\delta| \le \int_0^1 |Dh(x+t(y-x))-Dh(x)|\mathrm{d}t [y-x] \le \frac{L}{2}\|y-x\|^2_\mathcal{H} $. Now we control the term $\|y-x\|^2_\mathcal{H}$:
    \begin{align*}
        \|y-x\|_\mathcal{H}^2 &=\max_i \int\int k_i(y_1,y_2)\mathrm{d}[\hat F_i(y_1)-F_i(y_1)]\mathrm{d}[\hat F_i(y_2)-F_i(y_2)]
    \end{align*}   
    As $\mathbb{E}k_i(X_i,X_i)^4<\infty$ and $\mathbb{E}k_i(X_i,Y_i)^4<\infty$ where $X_i,Y_i$ are independent samples from $ F_i$,
    by Lemma \ref{lemma:four}, we have $\mathbb{E} \|y-x\|_\mathcal{H}^8=O(\frac{1}{n^4})$. 
    Thus $\mathbb{E} \delta^2=o(\frac{1}{n})$.

    Next, by Theorem \ref{thm: partial and total} we can write 
    $$Dh(x)(v)=\sum_{i=1}^m\partial_ih(x)(v_i).$$
    Noting that we can view a linear functional on a Hilbert space as an element in this Hilbert space via Riesz's representation theorem, we have $\partial_ih(x)(y_i)=\left\langle \partial_i h(x),y_i \right\rangle_{\mathcal{H}_i}=\mathbb E_{X_i\sim \hat F_i} \partial_i h(x)(X_i)$. So the influence function is the centered version of $\mathbb E_{X_i\sim \hat F_i} \partial_i h(x)(X_i)$, i.e. $\mathbb E_{X_i\sim \hat F_i} \partial_i h(x)(X_i)-\mathbb E_{X_i\sim  F_i} \partial_i h(x)(X_i)$. Therefore if we set $\phi_i = \partial_i h(x)-\mathbb E_{X_i\sim  F_i} \partial_i h(x)(X_i)$, we can obtain
    \[\phi(\hat F_1,\cdots,\hat F_m)=\phi( F_1,\cdots, F_m)+\sum_{i=1}^m\int \phi_i(x)\mathrm{d} \hat F_i(x)+\delta.\]

    Now the condition $\|\partial_ih(x)^4\|_{\mathcal{H}}<\infty$ implies $\mathbb E_{X_i\sim F_i}  (\phi_i(X_i))^4<\infty$, and the condition that $h$ is not constant is equivalent to $\Variance_{X_i\sim F_i} \phi_i(X_i) >0$.

Next we prove Assumption \ref{assumption:smoothness}. Denote $y:=\mu(\hat{F})$ and $z:=\mu(\hat{F}^b)$. Again we can write
    \[ h(z)=h(y)+Dh(y)[z-y]+\epsilon,\]
    where
    \begin{align*}
        \epsilon &= h(z)-h(y)-Dh(y)[z-y] \\
        &= \int_0^1 Dh(y+t(z-y))\mathrm{d}t [z-y] -Dh(y)[z-y]\\
        &= \int_0^1 (Dh(y+t(z-y))-Dh(y))\mathrm{d}t [z-y] 
    \end{align*}
    so 
    \begin{align*}
        |\epsilon| &\le \int_0^1 |Dh(y+t(z-y))-Dh(y)|\mathrm{d}t [z-y] \\
        &\le \frac{L}{2}\|z-y\|^2_\mathcal{H} 
    \end{align*}
Again we control the term $\|z-y\|^2_\mathcal{H}$:
    \begin{align*}
        \|z-y\|_\mathcal{H}^2 &=\max_i \frac{1}{n^2}\sum_{j,k}^n \left( k_i(x_{i,j}^b,x_{i,k}^b)-2k_i(x_{i,j}^b,x_{i,k})+k_i(x_{i,j},x_{i,k})\right)\\
        &=\max_i \int\int k_i(y_1,y_2)\mathrm{d}[\hat F^b_i(y_1)-\hat{F}_i(y_1)]\mathrm{d}[\hat F^b_i(y_2)-\hat{F}_i(y_2)]
    \end{align*}   
    As $\mathbb{E}k_i(X_i,X_i)^4<\infty$ and $\mathbb{E}k_i(X_i,Y_i)^4<\infty$ where $X_i,Y_i$ are independent samples from $F_i$,
    by Markov's inequality, we have $\mathbb{E}_*k_i(X_i^b,X_i^b)^4<\infty$ and $\mathbb{E}_*k_i(X_i^b,Y_i^b)^4<\infty$ with high probability where $X_i^b,Y_i^b$ are independent samples from $\hat F_i$.
    By Lemma \ref{lemma:variance} and Lemma \ref{lemma:four}, we have 
    $\mathbb{E}_* \|y-x\|_\mathcal{H}^4=O_p(\frac{1}{n^2})$ and $\mathbb{E}_* \|y-x\|_\mathcal{H}^8=O_p(\frac{1}{n^4})$. 

    Again, by Theorem \ref{thm: partial and total} we can write 
    $$Dh(y)(v)=\sum_{i=1}^m\partial_ih(y)(v_i),$$
    and we have $\partial_ih(y)(z_i)=\left\langle \partial_i h(y),z_i \right\rangle_{\mathcal{H}_i}=\mathbb E_{X_i\sim \hat F_i^b} \partial_i h(y)(X_i)$. So the influence function is the centered version of $\mathbb E_{X_i\sim \hat F_i^b} \partial_i h(y)(X_i)$, i.e. $\mathbb E_{X_i\sim \hat F_i^b} \partial_i h(y)(X_i)-\mathbb E_{X_i\sim \hat F_i} \partial_i h(y)(X_i)$. Therefore if we set $\mathcal{I}_i^\phi = \partial_i h(y)-\mathbb E_{X_i\sim \hat F_i} \partial_i h(y)(X_i)$, we can obtain
    \[\phi(\hat F_1^b,\cdots,\hat F_m^b)=\phi(\hat F_1,\cdots,\hat F_m)+\sum_{i=1}^m\int \mathcal{I}_i^\phi(x)\mathrm{d} \hat F_i^b(x)+\epsilon.\]

    Now we prove $\mathbb E [(\mathcal{I}_i^\phi-\phi_i)^4(X_{i,1})]=o(1)$. First note that
    \begin{align*}
        \quad & \mathbb E [(\mathcal{I}_i^\phi-\phi_i)^4(X_{i,1})]\\& =  \mathbb E [( \partial_i h(y)-\mathbb E_{X_i\sim \hat F_i} \partial_i h(y)(X_i)- \partial_i h(x)+\mathbb E_{X_i\sim  F_i} \partial_i h(x)(X_i))^4(X_{i,1})]\\
        &\leq 8\mathbb E (\partial_i h(y)(X_{i,1})-\partial_i h(x)(X_{i,1}))^4+8\mathbb E (\mathbb E_{X_i\sim \hat F_i} \partial_i h(y)(X_i)-\mathbb E_{X_i\sim  \hat F_i} \partial_i h(x)(X_i))^4\\\quad \quad &+8\mathbb E (\mathbb E_{X_i\sim \hat F_i} \partial_i h(x)(X_i)-\mathbb E_{X_i\sim  F_i} \partial_i h(x)(X_i))^4
    \end{align*}
    We deal with each term separately. For the first term, noting that $\mathbb{E}k_i(X_i,X_i)^4<\infty$ where $X_i \sim F_i$ and $\mathbb{E} \|y-x\|_\mathcal{H}^8=O(\frac{1}{n^4})$, we have
    \begin{align*}
        \mathbb E (\partial_i h(y)(X_{i,1})-\partial_i h(x)(X_{i,1}))^4 &\leq L^4 \mathbb E \|y-x\|^4_{\mathcal{H}} \|k_i(\cdot,X_{i,1})\|^4_{\mathcal{H}}\\
        &\leq L^4 \sqrt{\mathbb E \|y-x\|^8_{\mathcal{H}} \mathbb E\|k_i(\cdot,X_{i,1})\|^8_{\mathcal{H}}}\\
        &= L^4 \sqrt{\mathbb E \|y-x\|^8_{\mathcal{H}} \mathbb E k_i^4(X_{i,1},X_{i,1})}\\
        &= O(\frac{1}{n^2})
    \end{align*}
    For the second term, we calculate $\|y\|^2_{\mathcal{H}}=\max_i \frac{1}{n^2}\sum_{j,k}k_i(x_{i,j},x_{i,k})$. As $\mathbb{E}k_i(X_i,X_i)^4<\infty$ and $\mathbb{E}k_i(X_i,Y_i)^4<\infty$ where $X_i,Y_i$ are independent samples from $F_i$, $\mathbb E\|y\|^8_{\mathcal{H}}<\infty$. So,
    \begin{align*}
        \mathbb E (\mathbb E_{X_i\sim \hat F_i} \partial_i h(y)(X_i)-\mathbb E_{X_i\sim  \hat F_i} \partial_i h(x)(X_i))^4 &\leq L^4 \mathbb E \|y-x\|^4_{\mathcal{H}} \|y\|^4_{\mathcal{H}}\\
        &\leq L^4 \sqrt{\mathbb E \|y-x\|^8_{\mathcal{H}} \mathbb E\|y\|^8_{\mathcal{H}}}\\
        &= O(\frac{1}{n^2})
    \end{align*}

    For the last term,  noting that $\|\partial_i h(x)\|_{\mathcal{H}}^4<\infty$ and $\mathbb{E} \|y-x\|_\mathcal{H}^4=O(\frac{1}{n^2})$
    \begin{align*}
        \mathbb E (\mathbb E_{X_i\sim \hat F_i} \partial_i h(x)(X_i)-\mathbb E_{X_i\sim  F_i} \partial_i h(x)(X_i))^4 
        &=\mathbb E (\partial_i h(x)(y-x))^4 \\
        &\leq \|\partial_i h(x)\|_{\mathcal{H}}^4\mathbb E \|y-x\|_{\mathcal{H}}^4\\
        &= O(\frac{1}{n^2})
    \end{align*}
    
    Therefore the proof is completed.
\end{proof}
\begin{remark}
From the proof we can see that the global Lipschitz condition Assumption \ref{assumption: lip in rkhs} can be relaxed to a Lipschitz condition within a subset of $U\subset\mathcal{H}$
\[U :=U_1\times\dots\times U_m,\]
where each $U_i$ is the convex hull of $\mu_i(F_i),\mu_i(\hat{F_i})$ and $\mu_i(\hat F_i^b)$.
\end{remark}

A specific function that satisfies our assumption is $$\phi(\mu_1(F_1),\cdots,\mu_m(F_m))=\sum_{i=1}^m\left\langle f_i,\mu_i(F_i) \right\rangle^2_{\mathcal{H}_i},$$
where $f_i:\mathbb R\to\mathbb R$ is a function that can be embeded in $\mathcal{H}_i$ and $ \|f_i^4\|_{\mathcal{H}_i}<\infty$ for all $i\in [m]$. We have
$$\partial_i\phi(\mu(F))=2\left\langle f_i,\mu_i(F_i) \right\rangle f_i,$$
which is nontrivial and Lipschitz continuous with Lipschitz constant $2\|f_i\|^2_{\mathcal{H}_i}$, so $D\phi$ is also Lipschitz continuous by Theorem \ref{thm: partial and total}. Moreover, $ \|f_i^4\|_{\mathcal{H}}<\infty$ implies $\|\partial_i\phi(\mu(F))^4\|_{\mathcal{H}_i}<\infty$. A direct consequence of this example is that $(\mathbb E X)^2$ can be proved to be simulated by our method efficiently.

Another example is the finite-horizon performance measure proposed in \cite{lam2022subsampling}. For convenience we restate the performance measure below. The performance measure is of the form
\begin{equation*}
    \phi(F_1,\dots,F_m)=\mathbb E_{F_1,\dots,F_m} h(\bX_1,\dots,\bX_m)
\end{equation*}
where $\bX_i=(X_i(1),\cdots,X_i(T_i))$ represents the $i$-th input process consisting of $T_i$ i.i.d. random variables distributed under $F_i$, each $T_i$ being a deterministic time, and $h$ is a performance function which satisfies Assumption 8 and Assumption 9 in \cite{lam2022subsampling}.
It can actually be regarded as a special form of our performance measure as $\phi$ can also be written as 
\[\phi = \langle h, \mu_1(F_1)\otimes\cdots\otimes \mu_m(F_m)\rangle_{\mathcal{H}_1\otimes\cdots\otimes\mathcal{H}_m}.\]
Noting that $D^{s}\phi = 0$ where $s=\sum_{i=1}^m T_i+1$, by Theorem \ref{thm: multi Taylor} we can see that the local Lipschitz condition is satisfied.

\section{Additional Experiments}\label{appendix:experiment}
\subsection{Debiasing}
In Figure \ref{fig:RMSE} we provide the root mean square error (RMSE) for the four debiasing tasks in Section \ref{subsection:biascorrection}. The precise definition of RMSE and BIAS is
\[\text{RMSE}=\sqrt{\sum_{i=1}^{1000}(\hat{\phi}-\phi)^2},\quad \text{BIAS}=\sum_{i=1}^{1000}|\hat{\phi}-\phi|.\]
We also provide the median (50\% percentile) of ten bootstrap resampling procedure in Table \ref{table: debias}.

\begin{table*}[]
\small
\centering
\caption{Debiasing performances with different bootstrap methods: Standard Bootstrap and Orthogonal Bootstrap.}
\label{table: debias}
\vspace{0.1in}
\begin{tabular}{cc||cc||cc||cc||cc}
\hline
\multirow{2}{*}{}    & \multirow{2}{*}{$B$} & \multicolumn{2}{c||}{{\textbf{Ellipsoidal}}}   & \multicolumn{2}{c||}{{\textbf{Polynomial}}}            & \multicolumn{2}{c||}{{\textbf{Entropy}}}       & \multicolumn{2}{c}{\textbf{Optimization}}               \\ \cline{3-10} 
                     &                     & \multicolumn{1}{c|}{\textbf{RMSE}} & {\textbf{BIAS}} & \multicolumn{1}{c|}{\textbf{RMSE}} & {\textbf{BIAS}}  & \multicolumn{1}{c|}{\textbf{RMSE}} & {\textbf{BIAS}}  & \multicolumn{1}{c|}{\textbf{RMSE}} & {\textbf{BIAS}} \\\hline\hline
Standard Bootstrap   &       2               & \multicolumn{1}{c|}{7.23}                  &    {{197.9}}                  & \multicolumn{1}{c|}{17.57}                  &     {{501.4}}     & \multicolumn{1}{c|}{0.893}                  &   {{24.23}}   & \multicolumn{1}{c|}{1.734}                  &   {{49.52}}             \\ \hline
Orthogonal Bootstrap &   2                   & \multicolumn{1}{c|}{6.63}                  &  {{168.9}}                     & \multicolumn{1}{c|}{14.92}                  &    {{395.8}}      & \multicolumn{1}{c|}{0.836}                  &   {{21.15}}   & \multicolumn{1}{c|}{1.598}                  &   {{45.44}}            \\ \hline\hline
Standard Bootstrap   &       5               & \multicolumn{1}{c|}{6.80}                  &          {{178.8}}            & \multicolumn{1}{c|}{15.69}                  &   {{428.7}}      & \multicolumn{1}{c|}{0.862}                  &   {{22.26}}   & \multicolumn{1}{c|}{1.404}                  &   {{38.57}}                 \\ \hline
Orthogonal Bootstrap &   5                   & \multicolumn{1}{c|}{6.61}                  &          {{167.0}}             & \multicolumn{1}{c|}{14.79}                  &           {{388.2}}   & \multicolumn{1}{c|}{0.833}                  &   {{20.92}}   & \multicolumn{1}{c|}{1.355}                  &   {{36.70}}   \\ \hline\hline
Standard Bootstrap   &       10               & \multicolumn{1}{c|}{6.68}                  &   {{172.3}}                  & \multicolumn{1}{c|}{14.99}                  &    {{404.8}}     & \multicolumn{1}{c|}{0.840}                  &   {{21.53}}   & \multicolumn{1}{c|}{1.281}                  &   {{33.96}}          \\ \hline
Orthogonal Bootstrap &   10                   & \multicolumn{1}{c|}{6.61}                  &    {{166.5}}                 & \multicolumn{1}{c|}{14.66}                  &       {{383.3}}     & \multicolumn{1}{c|}{0.833}                  &   {{20.92}}   & \multicolumn{1}{c|}{1.246}                  &   {{32.82}}       \\ \hline\hline
Naive Estimator &                      & \multicolumn{1}{c|}{10.30}                  &   {{271.1}}                  & \multicolumn{1}{c|}{25.84}                  &       {{634.7}}    & \multicolumn{1}{c|}{1.802}                  &   {{51.18}}   & \multicolumn{1}{c|}{4.973}                  &   {{153.8}}            \\ \hline

\end{tabular}
\end{table*}

\begin{figure}
    \centering
    \includegraphics[width=3in]{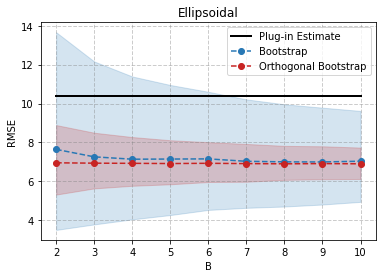}
    \includegraphics[width=3in]{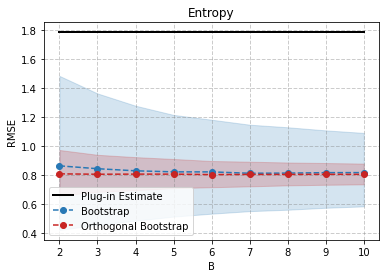}
    \includegraphics[width=3in]{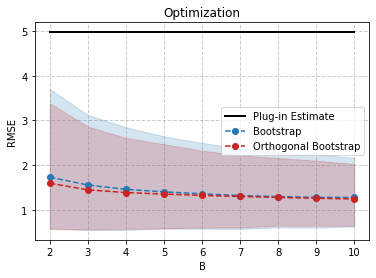}
    \includegraphics[width=3in]{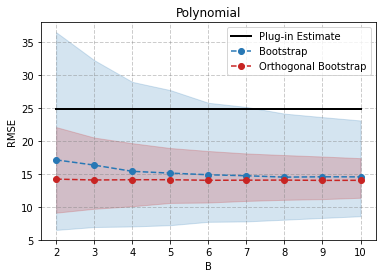}
    \caption{Orthogonal Bootstrap can significantly reduce the simulation output for the examples shown in \cite{ma2022correcting} when the number of Bootstrap resampling is limited. The $x$-axis represents the time of Bootstrap resampling and $y$-axis denotes the root mean square error produced by the estimation. The shaded area represents the 80\% quantile interval for repeated simulations. Orthogonal Bootstrap can significantly reduce the simulation variance.}
    \label{fig:RMSE}
\end{figure}

\subsubsection{Calculation of Influence Function for Constrained Optimization Problems }\label{appendix: constrained optimization}
In this section we provide a method of calculating the influence function for constrained optimization problem with randomness. 
\begin{figure}
    \centering
    \includegraphics[width=3in]{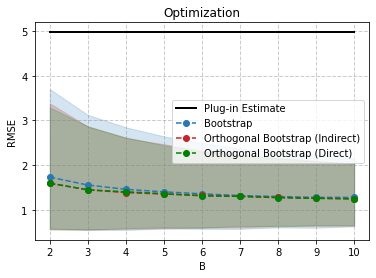}
    \includegraphics[width=3in]{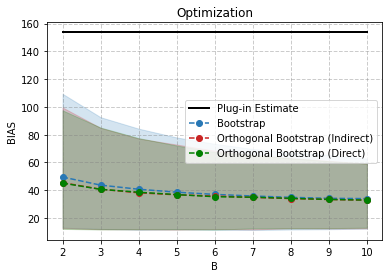}
    \caption{Comparison of indirect and direct influence function calculation in the constrained optimization problem.}
    \label{fig:vs}
\end{figure}

For a general constrained optimization problem with randomness
\begin{align*}
\text{Minimize} \quad & f(x,\xi) \\
\text{subject to} \quad & h_i(x,\xi) \leq 0, \quad i = 1, 2, \ldots, m \\
& g_j(x,\xi) = 0, \quad j = 1, 2, \ldots, p \\
\text{where} \quad & x \in \mathbb{R}^n\text{ and } \xi\in\mathbb R^k\text{ is a random vector},
\end{align*}

consider its Lagrangian
\begin{equation*}
    L(x,\mu,\nu)=f(x,\xi)+\sum_{i=1}^m\mu_ig_i(x,\xi)+\sum_{j=1}^p\nu_jh_j(x,\xi)
\end{equation*}
where $\mu_i$ are the dual variables on the equality constraints and $\nu_j\ge 0$ are the dual variables on the inequality constraints. Regard $\xi$ as a fixed parameter for now, the KKT conditions for stationarity, primal feasibility, and complementary slackness are
\begin{align*}
& \nabla_x f(x^*,\xi) + \sum_{i=1}^{m} \mu_i^* \nabla_x g_i(x^*,\xi) + \sum_{j=1}^{p} \nu_j^* \nabla_x h_j(x^*,\xi) = 0 \\
& h_j(x^*,\xi) = 0, \quad j = 1, 2, \ldots, p \\
& \mu^*_i g_i(x^*,\xi) = 0, \quad i = 1, 2, \ldots, m \\
\end{align*}
where $x^*$, $\mu_i^*$, and $\mu_j^*$ are the optimal primal and dual variables. Now, unfix $\xi$. Taking the differentials of these conditions yields the equation
\begin{align*}
&\left( \nabla_{xx} f(x^*,\xi)+\sum_{i=1}^{m} \mu_i^* \nabla_{xx} g_i(x^*,\xi)+\sum_{j=1}^{p} \nu_j^* \nabla_{xx} h_j(x^*)\right) \mathrm{d}x^* + \sum_{i=1}^{m} \nabla_{x} g_i(x^*)\mathrm{d}\mu_i^* + \sum_{j=1}^{p} \nabla_x h_j(x^*)\mathrm{d}\nu_j^* \\+& \left( \nabla_{x\xi} f(x^*,\xi)+\sum_{i=1}^{m} \mu_i^* \nabla_{x\xi} g_i(x^*,\xi)+\sum_{j=1}^{p} \nu_j^* \nabla_{x\xi} h_j(x^*)\right) \mathrm{d}\xi= 0 \\
& \nabla_x h_j(x^*,\xi)\mathrm{d} x^* +\nabla_\xi h_j(x^*,\xi)\mathrm{d} \xi = 0, \quad j = 1, 2, \ldots, p \\
& g_i(x^*,\xi)\mathrm{d}\mu_i^* + \mu^*_i \nabla_x g_i(x^*,\xi)\mathrm{d} x^* + \mu^*_i \nabla_\xi g_i(x^*,\xi)\mathrm{d} \xi = 0, \quad i = 1, 2, \ldots, m \\
\end{align*}
The optimal primal and dual variables can be calculated via numerical method. Therefore, we can solve the above equation to determine $\frac{\partial x^*}{\partial \xi}$, $\frac{\partial \mu^*}{\partial \xi}$, and $\frac{\partial \nu^*}{\partial \xi}$. If these quantities behave well, then we can combine them with the chain rule to determine the influence function with respect to the distribution of $\xi$.

The optimization problem we have selected for this study has a particularly tractable solution, making it amenable to differentiation and, consequently, to the computation of the influence function. In Figure \ref{fig:vs}, we present a comparative analysis of these two approaches for calculating the influence function. As anticipated, calculating the influence function via the explicit solution of the constrained optimization problem yields the same result in comparison to the implicit way of determining the influence function. Therefore, for optimization problem without a tractable solution, our implicit way of determining the influence function can be powerful. 

\subsection{Confidence Interval Construction}
The confidence interval constructed by the bootstrap method is

\begin{equation*}
    \begin{aligned}
        [\hat\phi-z_{1-\alpha/2}\sqrt{\Variance\left(\phi(\hat F_{1}^{b},\cdots,\hat F_{m}^{b})\right)}, \hat\phi+z_{1-\alpha/2}\sqrt{\Variance\left(\phi(\hat F_{1}^{b},\cdots,\hat F_{m}^{b})\right)}],
    \end{aligned}
\end{equation*}

where $z_{1-\alpha/2}$ being the $(1-\alpha/2)$-quantile of the standard normal, $\hat\phi=\phi(\hat F_1,\cdots,\hat F_m)$ is the plug-in estimator of $\phi(F_1,\cdots,F_m)$ and $\hat F_i=\frac{1}{n}\sum_{j=1}^n \delta_{X_{i,j}}$. Therefore using Orthogonal Bootstrap to simulate the variance, we arrive at Algorithm \ref{alg:ciconstructionob} for constructing confidence interval.
\begin{algorithm*}
\caption{Confidence Interval Construction via Orthogonal Bootstrap}\label{alg:ciconstructionob}
 %\hspace*{\algorithmicindent} 
 \textbf{Input}: A generic performance measure $\phi(F_1,\cdots,F_m)$, i.i.d samples  $\{X_{i,1},\cdots,X_{i,n_i}\}\in\mathbb{R}^{d_1}$ of $F_i$, influence function $\mathcal{I}_i^\phi$ of $\phi$ respect to $\hat{F}_i$, and confidence level $\alpha$.  \\
 %\hspace*{\algorithmicindent} 
 \textbf{Output}: $(1-\alpha)$ prediction interval of {$\phi(F_1,\cdots,F_m)$}.
\begin{algorithmic}
\STATE $\hat \phi\leftarrow \phi(\hat F_1,\cdots,\hat F_m)$, where $\hat F_i=\frac{1}{n_i}\sum_{j=1}^{n_i}\delta_{X_{i,j}}$

\FOR{b=1:B}
\FOR{i=1:m}
\STATE Sample $\{x_{i,1}^b,\cdots,x_{i,n_i}^b\}$ i.i.d from $\hat{F_i}$
\ENDFOR
\STATE $\hat\phi^b\leftarrow\phi(\hat F_1^b,\cdots,\hat F_m^b)$, where $\hat{F}_i^b=\frac{1}{n_i}\sum_{j=1}^{n_i} \delta_{x_{i,j}^b}$
\STATE Calculate $\mathcal{\hat I}^b=\sum_{i=1}^m\frac{1}{n_i}\sum_{j=1}^{n_i}\mathcal{I}_i^\phi(\tilde{x}_{i,j})$
\ENDFOR
\STATE Calculate $\overline{\phi-\mathcal{I}}\leftarrow\frac{1}{B}\sum_{b=1}^{B}(\hat\phi^b-\mathcal{\hat I}^b)$, $\overline{\mathcal{I}}\leftarrow\frac{1}{B}\sum_{b=1}^{B}\mathcal{\hat I}^b$.
\STATE Construct the $1-\alpha$-prediction interval as $[\hat\phi-z_{1-\alpha/2}S,\hat\phi+z_{1-\alpha/2}S],$ where 
\begin{equation}
    \begin{aligned}
        S^2&=\sum_{i=1}^m\frac{1}{n_i^2}\sum_{j=1}^{n_i}(\mathcal{I}_i^\phi (X_{i,j}))^2+\frac{1}{B}\sum_{b=1}^{B}\left(\hat\phi^b-\mathcal{\hat I}^b-\overline{\phi-\mathcal{I}}\right)^2+\frac{2}{B}\sum_{b=1}^{B}\left(\hat\phi^b-\mathcal{\hat I}^b-\overline{\phi-\mathcal{I}}\right)\left(\mathcal{\hat I}^b-\overline{\mathcal{I}}\right),
    \end{aligned}
\end{equation}
and $z_{1-\alpha/2}$ is the $(1-\alpha/2)$-quantile of the standard normal. 
%\State $N \gets n$
%\While{$N \neq 0$}
%\If{$N$ is even}
%    \State $X \gets X \times X$
%    \State $N \gets \frac{N}{2}$  \Comment{This is a comment}
%\ElsIf{$N$ is odd}
%    \State $y \gets y \times X$
%    \State $N \gets N - 1$
%\EndIf
%\EndWhile
\end{algorithmic}
\end{algorithm*}

\subsection{Real Data}
\subsubsection{Prediction Interval Construction via Orthogonal Bootstrap}
Suppose that the input data and target data, $\bX$ and $\by$, are observed, and a neural network is trained on these data to produce an estimated regression function $\hat{f}$. Then $B$  Monte Carlo replications are collected, and a neural network is trained on each to produce regression functions $f^b$, $b = 1,\dots, B$. Now, suppose a new feature vector, $\bx_{test}$, is observed and it is desired to provide a point estimate and a prediction interval (PI) for its unknown target value, $y_{test}$.

Prediction interval is slightly different from confidence interval because the model possess inherent irreducible error \cite{contarino2022constructing,khosravi2011review}. To construct prediction interval, besides the term $S^2$ in confidence interval construction, we need to add an additional term $\sigma^2$ for the irreducible error of the model. Specifically, we follow the pivot bootstrap method described in \cite{contarino2022constructing}.
The irreducible error $\sigma^2$ can be estimated from the residuals of the out-of-sample predictions. For a bootstrap resample $\{\bX_i^b\}_{i=1}^n$, the corresponding set of out-of-bag observations is $\{\bX_i|\bX_i\not\in\{\bX_i^b\}_{i=1}^n\}$ and is denoted here as $\bX^b_{oob}$. Then, for each bootstrap resample, $\sigma^2_b$ is:
\[\sigma^2_b=\frac{\sum_{\bx\in \bX^b_{oob}} (y-f^b(\bx))^2}{n^b}\]
where $n^b$ is the number of out-of-bag resamples. We summarize our algorithm in Algorithm \ref{alg:piconstructionob}.

\begin{algorithm*}
\caption{Prediction Interval Construction via Orthogonal Bootstrap}\label{alg:piconstructionob}
 %\hspace*{\algorithmicindent} 
 \textbf{Input}: Number of training data $n$, training data $\bX$ and $\by$, test observation $\bx_{test}$, learning algorithm $L$, desired number of  Monte Carlo replications $B$, and desired coverage probability $1 - \alpha$.  \\
 %\hspace*{\algorithmicindent} 
 \textbf{Output}: $(1-\alpha)$ prediction interval of {$y_{test}$}.
\begin{algorithmic}
\STATE Train learning algorithm $L$ on $\bX$ and $\by$; denote the trained regressor as $\hat{f}$
\STATE $\hat{\phi}\leftarrow \hat{f}(\bx_{test})$
\STATE For every input data $\bX_i\in\bX$, calculate the influence function $\mathcal{I}(\bX_i)$ of $\hat{\phi}$, 
\FOR{b=1:B}
\STATE Generate bootstrap sample $(\bX^b,\by^b)$ from $(\bX,\by)$
\STATE Determine the out of bag samples $(\bX^b_{oob},\by^b_{oob})$
\STATE Train learning algorithm $L$ on $\bX^b$ and $\by^b$; denote the trained regressor as $f^b$
\STATE $\hat\phi^b\leftarrow f^b(\bx_{test})$
\STATE $\mathcal{\hat I}^b=\frac{1}{n}\sum_{i=1}^{n}\mathcal{I}(\bX^b_i)$
\ENDFOR
\STATE $\overline{\phi-\mathcal{I}}\leftarrow\frac{1}{B}\sum_{b=1}^{B}(\hat\phi^b-\mathcal{\hat I}^b)$, $\overline{\mathcal{I}}\leftarrow\frac{1}{B}\sum_{b=1}^{B}\mathcal{\hat I}^b$.
\STATE Construct the $1-\alpha$-confidence interval as $$[\hat\phi-z_{1-\alpha/2}\sqrt{S^2+\sigma^2},\hat\phi+z_{1-\alpha/2}\sqrt{S^2+\sigma^2}],$$ where 
\begin{equation}
    S^2=\left\{\begin{matrix}
  S_1^2& \quad\text{if}\quad S_1^2\geq0\\
  S_2^2& \quad\text{if}\quad S_1^2<0
\end{matrix}\right.,
\end{equation}

\begin{equation}
    \begin{aligned}
        S_1^2&=\frac{1}{n^2}\sum_{i=1}^n(\mathcal{I} (\bX_{i}))^2+\frac{1}{B}\sum_{b=1}^{B}\left(\hat\phi^b-\mathcal{\hat I}^b-\overline{\phi-\mathcal{I}}\right)^2+\frac{2}{B}\sum_{b=1}^{B}\left(\hat\phi^b-\mathcal{\hat I}^b-\overline{\phi-\mathcal{I}}\right)\left(\mathcal{\hat I}^b-\overline{\mathcal{I}}\right),
    \end{aligned}
\end{equation}
\begin{equation}
    \begin{aligned}
        S_2^2&=\frac{1}{n^2}\sum_{i=1}^n(\mathcal{I} (\bX_{i}))^2,
    \end{aligned}
\end{equation}
and $z_{1-\alpha/2}$ is the $(1-\alpha/2)$-quantile of the standard normal. 
%\State $N \gets n$
%\While{$N \neq 0$}
%\If{$N$ is even}
%    \State $X \gets X \times X$
%    \State $N \gets \frac{N}{2}$  \Comment{This is a comment}
%\ElsIf{$N$ is odd}
%    \State $y \gets y \times X$
%    \State $N \gets N - 1$
%\EndIf
%\EndWhile
\end{algorithmic}
\end{algorithm*}

\subsubsection{Details}\label{appendix:pi experiment detail}
In this section we provide the details of the real data experiments in section.
For all real data examples, we employ the Adam optimizer with default hyperparameters, with the exception of setting the weight decay to 0.01. The training loss is set to be the squared loss, i.e. $L(\bx,\theta)=(f_\theta(\bx)-y)^2$, where $f_\theta(\bx)$ is parameterized by a two-layer neural network with hidden dimension 100 .

For the Yacht dataset, we utilize a batch size of 64 and train for 500 epochs. We use 80\% data for training and 20\% for testing. 
For the Energy dataset, we utilize a batch size of 128 and train for 250 epochs. We use 70\% data for training and 30\% for testing. 
For the kin8nm dataset, we utilize a batch size of 256 and train for 150 epochs. We use 95\% data for training and 5\% for testing.
The specific choice of hyperparameters serves the dual purpose of ensuring that the neural networks fit the data effectively while also ensuring that the inherent variance of the model remains within a reasonable range compared to the bootstrapped variance.

The influence function of the model parameters $\theta$ is
\begin{equation}
    \begin{aligned}
        \mathcal{I}^{\theta}(\bx)
        &=-H_{\theta}^{-1}\nabla_\theta L(\bx, \theta),
    \end{aligned}
\end{equation}
therefore by the chain rule of derivatives, the influence function at a particular test point $\bx_\text{test}$ is 
\begin{equation}
    \begin{aligned}
        \mathcal{I}^{\bx_\text{test}}(\bx)
        &=-\nabla_\theta f_\theta (\bx_\text{test})^TH_{\theta}^{-1}\nabla_\theta L(\bx, \theta).
    \end{aligned}
\end{equation}

We use the conjugate gradient method to calculate the influence function \cite{koh2017understanding} \cite{martens2010hessianfree}. 
%%%%%%%%%%%%%%%%%%%%%%%%%%%%%%%%%%%%%%%%%%%%%%%%%%%%%%%%%%%%%%%%%%%%%%%%%%%%%%%
%%%%%%%%%%%%%%%%%%%%%%%%%%%%%%%%%%%%%%%%%%%%%%%%%%%%%%%%%%%%%%%%%%%%%%%%%%%%%%%

\end{document}